\definecolor{myYellow}{rgb}{1 1 0.878}
\newcommand*{\N}{\ensuremath{\mathds{N}}}
\newcommand*{\R}{\ensuremath{\mathds{R}}}
\newcommand*{\T}{\ensuremath{\mathcal{T}}}
\DeclareMathOperator{\SW}{SW}
\DeclareMathOperator{\OPT}{OPT}
\DeclareMathOperator{\POA}{PoA}
\DeclareMathOperator{\POS}{PoS}
\newcommand*{\x}[1]{x^{(#1)}}
\title{Social Distancing Network Creation} 
\author{Tobias Friedrich}{Hasso Plattner Institute, University of Potsdam, Germany}{friedrich@hpi.de}{}{}
\author{Hans Gawendowicz}{Hasso Plattner Institute, University of Potsdam, Germany}{hans.gawendowicz@hpi.de}{}{}
\author{Pascal Lenzner}{Hasso Plattner Institute, University of Potsdam, Germany}{pascal.lenzner@hpi.de}{}{}
\author{Anna Melnichenko}{Hasso Plattner Institute, University of Potsdam, Germany}{anna.melnichenko@hpi.de}{}{}
\authorrunning{T. Friedrich and H. Gawendowicz and P. Lenzner and A. Melnichenko} 
\keywords{Algorithmic Game Theory, Equilibrium Existence, Price of Anarchy, Network Creation Game, Social Distancing, Maximization vs. Minimization Problems} 
\begin{document}

\maketitle

\begin{abstract}
During a pandemic people have to find a trade-off between meeting others and staying safely at home. While meeting others is pleasant, it also increases the risk of infection. We consider this dilemma by introducing a game-theoretic network creation model in which selfish agents can form bilateral connections. They benefit from network neighbors, but at the same time, they want to maximize their distance to all other agents. This models the inherent conflict that social distancing rules impose on the behavior of selfish agents in a social network. Besides addressing this familiar issue, our model can be seen as the inverse to the well-studied Network Creation Game by Fabrikant et al.~[PODC 2003] where agents aim at being as central as possible in the created network. Thus, our work is in-line with studies that compare minimization problems with their maximization versions.

We look at two variants of network creation governed by social distancing. In the first variant, there are no restrictions on the connections being formed. We characterize optimal and equilibrium networks, and we derive
asymptotically tight bounds on the Price of Anarchy and Price of Stability. The second variant is the model's generalization that allows restrictions on the connections that can be formed. 
As our main result, we prove that Swap-Maximal Routing-Cost Spanning Trees, an efficiently computable weaker variant of Maximum Routing-Cost Spanning Trees, actually resemble equilibria for a significant range of the parameter space. Moreover, we give almost tight bounds on the Price of Anarchy and Price of Stability. These results imply that, compared the well-studied inverse models, under social distancing the agents' selfish behavior has a significantly stronger impact on the quality of the equilibria, i.e., allowing socially much worse stable states.
\end{abstract}

\section{Introduction}
\label{sec:introduction}
Network Design is a core topic in Theoretical Computer Science and Operations Research. Many classical combinatorial optimization problems, inspired by real world applications, have been formulated and analyzed, such as the \textsc{Minimum Spanning Tree} problem \cite{graham1985}, the \textsc{Network Design} problem \cite{johnson1978,MW84} and finding geometric spanners \cite{bose2013,narasimhan2007geometric}. Typically, 
a network having certain properties must be found by a centralized algorithm.
However, in many settings, the desired network is not created by a central authority but by individually acting agents, e.g., people or institutions, controlling 
a local part of the network.
Prominent examples are the Internet, road networks, and, most relevant for our work, social networks.

Especially in settings with little coordination, these individual agents tend to selfishly optimize their own utility without taking the impact of their actions on the efficiency of the whole network into account.
To better understand the dynamics arising in these decentralized settings and the network structures resulting from them, many influential game-theoretic network formation models have been introduced in the last decades~\cite{jackson1996,BG00,fabrikant2003,anshelevich2008,anshelevich2008_2}. 
The main research questions are: Do equilibrium networks, i.e., stable networks where no agent can improve by performing a local change, exist? What properties do these networks have? And how efficient are they compared to centrally computed optimal solutions?

All of the above mentioned influential game-theoretic network formation models assume that the creation of an edge is costly but the agents benefit from having small distances to other agents in the network. 
However, departing from this standard assumption in the field, there are real-world settings that should better be modeled via an inverted utility function: neighbors yield benefit but being close to many agents is costly as it yields an increased risk. One example for this choice are financial networks.  There, financial institutions benefit from working together but suffer from risks arising from one of them failing\footnote{The financial crisis in the late 2000s was mainly driven by contagious network effects of failing banks.}. Another example, that is the main motivation of our work, came up with the current COVID-19 pandemic and is described by the now commonly used term \emph{social distancing}.
It refers to reducing social contacts in order to contain the spread of a contagious virus in the population.
While often mandated by the government, social distancing was performed by many people voluntarily.
One of the main reasons is quite simple:
While reducing social contacts is a restriction of the quality of life, it also reduces the probability of getting infected. Hence, the network of social interactions between people was sparsified by individual strategic decisions.

In this work we introduce a novel game-theoretic network formation model in which selfish agents strategically form a social network under the influence of social distancing. Agents benefit from direct connections to other agents, modeling the positive effects of social contacts on their social life. However, at the same time they want to maximize their distances to all other agents in the network in order to reduce their risk of getting infected via an increased reaction time in case a contagious disease starts spreading in the network. Here we assume that a random network node becomes infected and that it is beneficial to be far away from the source of infection in order to gain valuable time for setting up counter-measures.

The agents in our model act according to an inverted utility function, compared to the famous models by Jackson and Wolinsky~\cite{jackson1996} and Fabrikant et al.~\cite{fabrikant2003}. Thus, to the best of our knowledge, this is one of the rare cases of a game-theoretic model where both minimizing and maximizing the utility function has a natural interpretation. Another similar well-known example is the contrast between the Network Design Game with fair cost sharing by Anshelevich et al.~\cite{anshelevich2008} and the Selfish Routing model by Roughgarden and Tardos~\cite{roughgarden2002bad}. In both models the agents select paths in a given network but in the former sharing an edge is beneficial for the involved agents whereas in the latter edge sharing is detrimental. This difference yields vastly different behavior in terms of the quality of the equilibria. However, this is not obvious, as can also be seen by comparing  
classical minimization and maximization variants of optimization problems, e.g., \textsc{Minimum Spanning Tree} versus \textsc{Maximum Spanning Tree} or \textsc{Shortest Path} versus \textsc{Longest Path}. Sometimes, as with spanning trees, the inverse problems are almost identical, whereas sometimes, as with the path problems, the inverse problems may have completely opposite behavior.
We set out to explore this comparison for the natural inverse counter-part to the well-known Network Creation Game by Fabrikant et al.~\cite{fabrikant2003}.
Along the way, we will uncover a connection to the \textsc{Maximum Routing-Cost Spanning Tree} problem that is inverse to the well-studied \textsc{Minimum Routing-Cost Spanning Tree} problem~\cite{hu1974optimum}\footnote{This problem is also known as the \textsc{Optimum Communication Spanning Tree} problem.}. 

\subsection{Model and Notation}
\label{sec:modelNotation}
Before we start with the model definition, we introduce some notation regarding networks. A \emph{network} is a tuple $G\coloneqq (V,E)$ where $V$ is the set of \emph{nodes} and $E$ is the set of \emph{edges}. An \emph{edge} is represented by a set containing both incident nodes. If we do not give the tuple defining $G$ explicitly, we denote the set of nodes of $G$ as $V_G$ and the set edges of $G$ as $E_G$. We only consider unweighted undirected networks. For addition and removal of a single edge $e$, we write $G+e\coloneqq (V,E\cup\{e\})$ and $G-e\coloneqq (V,E\setminus\{e\})$. A network $G'$ with $V_{G'}\subseteq V$ and $E_{G'}\subseteq E$ is called a \emph{subnetwork} of $G$ and denoted as $G'\le G$. 
If $G'$ is connected and $V_{G'}=V$, $G'$ is a \emph{spanning subnetwork} of $G$. Let $n\in\N$ denote the number of nodes. The set of all connected networks containing exactly $n$ nodes will be referred to as $\mathcal{G}_n$.

For two nodes $v,x\in V$, we define $d_G(v,x)$ as the \emph{distance} between $v$ and $x$ in network~$G$, that is, the number of edges on a shortest path from $v$ to $x$ in $G$. For convenience, we extend the definition of $d_G$ to sets of nodes: Let $v\in V$ be a node and $M,N\subseteq V$ be sets of nodes. Then $d_G(v,M)\coloneqq \sum_{x\in M}d_G(v,x)$ and $d_G(M,N)\coloneqq \sum_{x\in M, y\in N}d_G(x,y)$. We call the special case $d_G(v,V)$ the \emph{distances from/for} $v$ and $d_G(V,V)$ the \emph{total/summed distances} or \emph{routing costs} of $G$. The \emph{degree} of $v$ in the network $G$ is the number of edges that are incident to $v$ and is denoted as $\deg_G(v)$. We call a tree which is a spanning subnetwork of $G$ a \emph{spanning tree} of $G$. A spanning tree of $G$ with routing costs at least as high as the routing costs of any other spanning tree of $G$ will be called a \emph{Maximum Routing-Cost Spanning Tree} (MRCST). A spanning tree of $G$ with routing costs that cannot be increased by swapping one edge is a \emph{Swap-Maximal Routing-Cost Spanning Tree} (SMRCST).

Now, we can define the game-theoretic model. Let $H=(V,E)$ be a connected network. We call $H$ the \emph{host network} and its nodes \emph{agents}. A \emph{state} of the game $G\le H$ is a spanning subnetwork of $H$. 
We only consider connected networks as host networks and states.

Each agent $v\in V$ selfishly tries to maximize its utility in state $G$ given by
\begin{equation*}
	u_v(G)\coloneqq \alpha\deg_G(v)+d_G(v,V)
\end{equation*}
where $\alpha\in\R_{>0}$ is a global parameter. We will call $\alpha\deg_G(v)$ the \emph{edge utility} and $d_G(v,V)$ the \emph{distance utility} of $v$. Note that $\alpha$ is a parameter of the game, i.e., equal for all agents, that allows to adjust the agents' trade-offs between edge utility and distance utility.  Here $\alpha$ is the benefit of a single edge, i.e., the benefit for each direct neighbor in the network.

For measuring the efficiency of the network $G$, we use the \emph{social welfare} defined as $\SW(G)\coloneqq \sum_{v\in V_G}u_v(G)=2\alpha|E_G|+d_G(V,V)$. This quantifies the well-being of the society of all agents. We call a network maximizing the social welfare for the host network $H$ a \emph{social optimum} and denote it as $\OPT_H$.

Agents are allowed to form connections bilaterally. More specifically, each agent can unilaterally remove any incident edge if it does not disconnect the network, and two agents together can form an edge between them if it is contained in the host network. If removing an edge strictly increases the utility of one of its incident nodes or adding an edge strictly increases the utility of both incident nodes, we call this an \emph{improving move}. A network without improving moves is referred to as \emph{pairwise stable} \cite{jackson1996} or \emph{stable} for short\footnote{As shown by Corbo and Parkes~\cite{CP05} for bilateral Network Creation Games, pairwise stability is equivalent to pairwise Nash stability, which is a refinement of the Nash equilibrium: it must be stable against unilateral deviations and it must be stable against joint strategy changes by coalitions of agents of size two. The strategy space of any agent $i\in V$ is the power set of $V\setminus{i}$. An edge $\{u,v\}$ is formed if and only if $v$ is in agent $u$'s strategy and $u$ is in agent $v$'s strategy.}. If there are no improving edge additions or removals we call the network \emph{stable against edge addition} and \emph{stable against edge removal}, respectively.

For a host network $H$, we define $\mathcal{S}(H)$ as the set of all pairwise stable states. For measuring the efficiency lost by letting agents form the network selfishly, we use the \emph{Price of Anarchy}~(PoA)~\cite{KP99} and \emph{Price of Stability}~(PoS)~\cite{anshelevich2008} defined as
$$
	\POA_n\coloneqq \max_{H\in\mathcal{G}_n}\max_{G\in\mathcal{S}(H)}\frac{\SW(\OPT_H)}{\SW(G)}
	\textnormal{~~~~~and~~~~~}
	\POS_n\coloneqq \max_{H\in\mathcal{G}_n}\min_{G\in\mathcal{S}(H)}\frac{\SW(\OPT_H)}{\SW(G)}.
$$

We will call this model \emph{Social Distancing Network Creation Game} (SDNCG). In~\Cref{sec:complete} we will restrict the host networks to complete networks $K_n$. We will call this restricted variant \emph{complete Social Distancing Network Creation Game} ($K$-SDNCG).

\subsection{Related Work}
\label{sec:relatedWork}
Variants of game-theoretic network formation models have been studied extensively for decades and we refer to Jackson~\cite{jackson2010social} for an overview. 

Closest to our work is the literature on the Network Creation Game~(NCG) by Fabrikant et al.~\cite{fabrikant2003}. This influential model can be seen as the unilateral inverted variant of the $K$-SDNCG. There, an agent can buy any incident edge without the consent of the other endpoint for the price of $\alpha > 0$. Each agent aims at minimizing its cost, which is defined as the sum of $\alpha$ times the number of bought edges and the sum of hop-distances to all agents. The authors of~\cite{fabrikant2003} show that Nash equilibria always exist, i.e., complete networks are stable for $\alpha \leq 2$ and stars are stable for $\alpha \geq 2$. However, besides these generic examples finding Nash equilibria is challenging since the NCG and many of its variants do not belong to the class of potential games~\cite{Lenzner11,KL13}. Besides finding equilibria, also computing a best possible strategy is challenging, since this problem was shown to be NP-hard in~\cite{fabrikant2003}. However, such strategies can be efficiently approximated with greedy strategy changes~\cite{Lenzner12}.  
Regarding the quality of equilibrium states the authors of~\cite{fabrikant2003} show that the PoA is in $\mathcal{O}(\sqrt{\alpha})$, that the PoA for tree Nash equilibria is constant, and that the PoS is at most $\frac43$. Later, a series of papers~\cite{Al14,De07,MS10,MMM13,AM17,BiloL20,AM19} improved the PoA bounds, with the best general upper bound of $2^{\mathcal{O}\sqrt{\log n}}$ by Demaine et al.~\cite{De07}. The latter also proved that the PoA is constant for $\alpha \in \mathcal{O}(n^{1-\varepsilon})$ for any fixed $\varepsilon > \frac{1}{\log n}$. For large $\alpha$, it was shown by Bilò and Lenzner~\cite{BiloL20} that for $\alpha > 4n-13$ all Nash equilibria must be trees and this bound was recently improved by Dippel and Vetta~\cite{Dippel21} to $\alpha > 3n-3$. This implies a constant PoA for $\alpha > 3n-3$. Finally, Álvarez and Messegué~\cite{AM19} established a constant PoA for $\alpha > n(1+\varepsilon)$, for any $\varepsilon > 0$.

The NCG was generalized by Demaine et al.~\cite{demaine2009} by introducing a host network that specifies which edges can be bought. They show that the PoA deteriorates by providing a lower bound of $\Omega(\min\{\alpha/n, n^2/\alpha\})$ and an upper bound of $\mathcal{O}(\sqrt{\alpha})$, for $\alpha < n$, and $\mathcal{O(}\min\{\sqrt{n},n^2/\alpha\})$, for $\alpha \geq n$. Interestingly, no results on the existence of equilibria are known. Recently, a further generalization that allows weighted host networks was proposed by Bilò et al.~\cite{bilo2019}. This variant has a tight PoA of $(\alpha+2)/2$ for metric weights. Later a tight bound of $\Theta(\alpha)$ was shown for arbitrary weights~\cite{friedemann2021}.
Also a bilateral variant of the NCG was studied by Corbo and Parkes~\cite{CP05}. There, similar to our model, edges can only be established by bilateral consent of the involved nodes and both nodes have to pay $\alpha$. The authors of~\cite{CP05} prove existence of pairwise stable networks, i.e., complete networks are stable for $\alpha \leq 1$ and stars are stable for $\alpha  \geq 1$, 
they give a tight PoA bound of $\Theta(\min\{\sqrt{\alpha},n/\sqrt{\alpha}\})$, and they show that the PoS is $1$. To the best of our knowledge, the bilateral variant with a given host network has not yet been studied. Recently, also a bilateral variant modeling the formation of social networks was introduced~\cite{BFLLM21}.   

The idea of a game-theoretic model of network formation in a context of spreading risk is not new. Goyal et al.~\cite{Goyal16} study a setting where a node is attacked and this attack spreads to all vulnerable neighbors. Agents strategically create edges and immunize themselves to maximize their connected component post attack. For this model, also the efficient computation of best strategies~\cite{Friedrich17} and a variant with probabilistic spread~\cite{Chen19} was studied.
Moreover, there has been much research in the context of financial contagion, where agents benefit from collaborating, but also suffer from the risk of cascading failure arising with the collaboration~\cite{allen2000,haldane2011,caballero2009,acemoglu2015}. In particular, Blume et al.~\cite{blume2011} developed an elegant model where nodes form a network and then some randomly chosen nodes fail and this failure then spreads with some probability via the edges. The utility is a linear combination of the node degree and the risk of failing in the second phase. The virtue of this model is that  utilities are based on a random process that realistically models the spread of a contagious infection. However, the major downside of this model is that the computation of the random process is \#P-complete. Thus, this model does not yield a realistic prediction of real-world behavior.

While analyzing our model for general host networks, we consider Maximum Routing-Cost Spanning Trees. Routing costs have been studied much in mathematics, mostly under the name of the \emph{Wiener index} \cite{wiener1947}. Trees were of special interest and there has been much research on the Wiener index of trees with different properties.
But although spanning trees minimizing the Wiener index were studied extensively, the concept of spanning trees maximizing the Wiener index received little attention \cite{dobrynin2001,xu2014}. However, it was shown that finding or even approximating a tree maximizing the Wiener index is NP-hard \cite{CAMERINI83, GALBIATI97}.

\subsection{Our Contribution}
We introduce the Social Distancing Network Creation Game (SDNCG), a game-theoretic model in which selfish agents try to maximize their utility by strategically connecting to other agents and thereby creating a network. Each agent values direct connections to other agents but at the same time wants to maximize the distances to all other agents in order to lower their exposure and increase their reaction time to risks appearing in the network. In contrast to the similar model by Blume et al.~\cite{blume2011}, our model, while not modeling a perfectly realistic spread of the infection, has the advantage of an efficiently computable utility function. By using the distance to the other agents as part of the utility, it also accounts for reaction time: If an infection breaks out far away, an agent has more time to prepare or react to it.
Another virtue of our model is that it is the inverse to the well-known Network Creation Game~\cite{fabrikant2003} and its bilateral variant~\cite{CP05}. Hence, we can study and compare the game-theoretic properties of the inverted models. To the best of our knowledge, this is one of the rare cases where both the minimization and the maximization of a utility function have a natural interpretation.  

Our results and the comparison with the inverted models are summarized in Table~\ref{table:results}.
\begin{table}[h]
	\centering\setlength{\tabcolsep}{3pt}%
	\renewcommand{\arraystretch}{1.3}%
	\resizebox{\textwidth}{!}{\begin{tabular}{c|c|c|c|c}
		& Optimum & Equilibria & PoA & PoS
		\\ \hline
		NCG~\cite{fabrikant2003} & 
			\begin{tabular}{@{}l@{}}$\alpha\leq 2\colon K_n$~\cite{fabrikant2003}\\$\alpha\ge 2\colon S_n$~\cite{fabrikant2003}\end{tabular} &
			\begin{tabular}{@{}l@{}}$\alpha\leq 1\colon K_n$~\cite{fabrikant2003}\\$\alpha\geq 1\colon S_n$~\cite{fabrikant2003}\end{tabular} &
			\begin{tabular}{@{}l@{}}$2^{\mathcal{O}\left(\sqrt{\log n}\right)}$~\cite{De07}\\$\alpha\in\mathcal{O}\left(n^{1-\varepsilon}\right)\colon\Theta(1)$~\cite{De07}\\$\alpha>n(1+\varepsilon)\colon \Theta(1)$~\cite{AM19}\end{tabular} &
			\begin{tabular}{@{}l@{}}$\alpha\le 1\colon 1$~\cite{fabrikant2003}\\$1{<}\alpha{<}2\colon\leq \frac43$~\cite{fabrikant2003}\\$\alpha\ge 2\colon 1$~\cite{fabrikant2003}\end{tabular}
		\\ \hline
		BNCG~\cite{CP05} &
			\begin{tabular}{@{}l@{}}$\alpha<1\colon K_n$~\cite{CP05}\\$\alpha>1\colon S_n$~\cite{CP05}\end{tabular} &
			\begin{tabular}{@{}l@{}}$\alpha<1\colon K_n$~\cite{CP05}\\$\alpha>1\colon S_n,\dots$~\cite{CP05}\end{tabular} &
			\begin{tabular}{@{}l@{}}$\Theta\left(\min\left\{\sqrt{\alpha},\frac{n}{\sqrt{\alpha}}\right\}\right)$~\cite{CP05}\\$\alpha<1\colon 1$~\cite{CP05}\end{tabular} &
			1~\cite{CP05}
		\\ \hline\rowcolor{myYellow}
		$K$-SDNCG &
			\begin{tabular}{@{}l@{}}$\alpha<\frac{n}{3}\colon P_n$ [T. \ref{thm:com:opt}]\\$\alpha>\frac{n}{3}\colon K_n$ [T. \ref{thm:com:opt}]\end{tabular} &
			\begin{tabular}{@{}l@{}}$\alpha\le 1\colon$trees [T. \ref{thm:com:stable}]\\$1\le\alpha\le\frac{n}{2}\colon$\\
			$\qquad P_n, K_n, \dots$ [T. \ref{thm:com:stable}]\\$\alpha\ge\frac{n}{2}\colon K_n$ [T. \ref{thm:com:stable}]\end{tabular} &
			\begin{tabular}{@{}l@{}}$\mathcal{O}(n)$ [T. \ref{thm:com:poa}]\\$\alpha\le\sqrt{n}\colon\Theta(n)$ [T. \ref{thm:com:poa}]\\$\alpha{\le}\frac{n}{6}\!-3\colon\Omega\left(\frac{n}{\log n}\right)$ [T. \ref{thm:com:poa}]\\$\alpha{\le}\left\lfloor\frac{n}{2}\right\rfloor\!-2\colon\Omega\left(\sqrt{n}\right)$ [T. \ref{thm:com:poa}]\\$\alpha\ge\frac{n}{2}\colon 1$ [T. \ref{thm:com:poa}]\end{tabular} &
			1 [T. \ref{complete:PoS}]
		\\ \hline
		$H$-NCG~\cite{demaine2009} &
			open &
			open &
			\begin{tabular}{@{}l@{}}$\alpha< n\colon\mathcal{O}\left(\sqrt{\alpha}\right)$~\cite{demaine2009}\\$\alpha{\ge} n\colon\!\!\min\left\{\!\mathcal{O}\!\left(\!\sqrt{n},\frac{n^2}{\alpha}\!\right)\!\right\}\!$~\cite{demaine2009}\\$\Omega\left(\min\left\{\frac{\alpha}{n},\frac{n^2}{\alpha}\right\}\right)$~\cite{demaine2009}\end{tabular} &
			open 
		\\ \hline\rowcolor{myYellow}
		SDNCG & 
			\begin{tabular}{@{}l@{}}$\alpha\le 1\colon\!$MRCST [T. \ref{thm:gen:opt}]\\$\alpha>N_3\colon H$ [T. \ref{thm:gen:opt}]\end{tabular} &
			\begin{tabular}{@{}l@{}}$\alpha\le 1\colon$trees [T. \ref{thm:gen:stable}]\\$1\le\alpha\le\frac{n}{3}\colon$\\$\qquad$SMRCST [T. \ref{general:stable:huge_proof}]\\$\alpha\ge N_2\colon H$ [T. \ref{thm:gen:stable}]\end{tabular} &
			\begin{tabular}{@{}l@{}}$\mathcal{O}(n)$ [C. \ref{thm:gen:poa}]\\$\alpha{\le} n\colon\Theta(n)$ [T.\ref{thm:gen:poa}]\\$\alpha\le N_2\colon\Omega\left(\frac{n^2}{\alpha}\right)$ [T. \ref{thm:gen:poa}]\\$N_2{<}\alpha{\le} N_3\colon\Theta(1)$ [T. \ref{thm:gen:poa}]\\$\alpha\ge N_3\colon 1$ [T. \ref{thm:gen:poa}]\end{tabular} &
			\begin{tabular}{@{}l@{}}$\alpha\le 1\colon 1$ [T. \ref{thm:gen:pos}]\\$\alpha<\frac{n}{3}\colon\mathcal{O}\left(\sqrt{n}\right)$ [T. \ref{thm:gen:pos}]\\$N_2{<}\alpha{\le} N_3\colon\!\Theta(1)$ [T. \ref{thm:gen:pos}]\\$\alpha\ge N_3\colon 1$ [T. \ref{thm:gen:pos}]\end{tabular}
	\end{tabular}}
	\caption{An overview of our results (yellow) and a comparison with the results for the inverted models (white). BNCG abbreviates the bilateral NCG by Corbo and Parkes~\cite{CP05} whereas $H$-NCG denotes the NCG on a host network by Demaine et al.~\cite{demaine2009}.  $N_2\coloneqq\frac{(n-1)^2}{4}$, $N_3\coloneqq\frac{(n-2)n(n+2)}{24}$, $H$ denotes the host network, $P_n,K_n,S_n$ are the path, clique, and star networks on $n$ nodes, respectively.}
	\label{table:results}
\end{table}
We analyze two variants of the SDNCG. For the $K$-SDNCG, where we assume a complete host network, we 
characterize optimal and several stable networks and show that the PoS is $1$.
We provide an improving response cycle, which implies that equilibrium existence for the ($K$-)SDNCG cannot be derived from potential function arguments. Finally, derive several bounds for the PoA which are tight for $\alpha\ge\frac{n}{2}$, asymptotically tight for $\alpha\le \sqrt{n}$, and asymptotically tight up to a $\log$-factor for $\alpha\le\frac{n}{6}-3$.

For the SDNCG on arbitrary host networks 
we utilize Maximum Routing-Cost Spanning Trees for characterizing optimal networks for $\alpha \leq 1$. As our main result, we show that their locally optimal variant, the Swap-Maximal Routing-Cost Spanning Trees, and hence also Maximum Routing-Cost Spanning Trees, are pairwise stable for $\alpha\le\frac{n}{3}$. 
We prove that computing the MRCST is NP-hard, while the SMRCST can be constructed efficiently. Thus, for the significant range of $1\leq \alpha \leq \frac {n}{3}$, we not only have guaranteed equilibrium existence on any host graph, but we can compute stable states efficiently. This is in stark contrast to what is known for the inverse model studied by Demaine et al.~\cite{demaine2009}.
Additionally, 
we approximate optimal networks and we derive several (tight) bounds on the PoA and the PoS.

Compared with the NCG~\cite{fabrikant2003} and the bilateral NCG~\cite{CP05}, we find that the results for the $K$-SDNCG regarding optimal and stable networks are analogous but reversed, with the spanning path taking over the role of the spanning star. Moreover, our PoA results for both the $K$-SDNCG and the SDNCG show that our maximization variant has a significantly worse PoA that is linear or almost linear in $n$, compared to the PoA upper bounds of $o(n^\varepsilon)$ and $\mathcal{O}(\sqrt{\alpha},n/\sqrt{\alpha})$ for the NCG and the bilateral NCG, respectively. As main take away from our paper, this implies that under social distancing the agents' selfish behavior has significantly more impact on the quality of the equilibria. This calls for strong coordination mechanisms governing the network formation to avoid detrimental stable states.


\section{Complete Host Networks}
\label{sec:complete}
We analyze the properties of the $K$-SDNCG, i.e., the SDNCG on complete host networks. 
First, we characterize optimal networks and give some examples for stable networks, dependent on the relation between $n$ and $\alpha$. After that, we show several bounds on the PoA and PoS.

\subsection{Stable and Optimal Networks}
\label{sec:complete:stableOptimal}
Intuitively, for small $\alpha$ the distance utility dominates the social welfare. Hence, the path should be the optimum since it maximizes the total distances. 
For large $\alpha$, the edge utility dominates, which leads to the clique being optimal since it maximizes the number of edges. Now we show that this intuition is indeed true. Moreover, the optimal construction is unique.

\begin{restatable}{thm}{socialOptimum}
	\label{thm:com:opt}
	For $\alpha<\frac{n}{3}$, the unique social optimum is the path. For $\alpha>\frac{n}{3}$, the unique optimum is the clique. For $\alpha=\frac{n}{3}$, the clique and the path are the only social optima.
\end{restatable}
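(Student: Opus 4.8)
The plan is to reduce the theorem to a single distance inequality. Since the host network is $K_n$, a state is an arbitrary connected graph $G$ on the $n$ agents and $\SW(G)=2\alpha|E_G|+d_G(V,V)$. Using $d_{P_n}(V,V)=2\binom{n+1}{3}$ and $d_{K_n}(V,V)=n(n-1)$ we get $\SW(P_n)=2\alpha(n-1)+\tfrac13 n(n-1)(n+1)$ and $\SW(K_n)=n(n-1)(\alpha+1)$, hence
\[\SW(P_n)-\SW(K_n)=(n-1)(n-2)\left(\tfrac n3-\alpha\right).\]
So among the two candidates the path is strictly better for $\alpha<\tfrac n3$, the clique is strictly better for $\alpha>\tfrac n3$, and they tie for $\alpha=\tfrac n3$. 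Thus the theorem is equivalent to the \emph{Main Claim}: every connected $n$-vertex graph $G$ satisfies $\SW(G)\le\max\{\SW(P_n),\SW(K_n)\}$, with equality only for $G\in\{P_n,K_n\}$.

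I would prove the Main Claim from the following \emph{Key Lemma}: for every connected $n$-vertex graph $G$ with $m$ edges,
\[d_G(V,V)\ \le\ n(n-1)+\tfrac{2n}{3}\!\left(\tbinom n2-m\right),\]
equivalently the average distance over non-adjacent ordered pairs is at most $\tfrac{n+3}{3}$, and equality holds exactly for $G=P_n$ (the case $m=n-1$, which is the classical statement that the path uniquely maximises the sum of distances among $n$-vertex trees) and for $G=K_n$ (the case $m=\binom n2$). Granting this, $\SW(G)=2\alpha m+d_G(V,V)\le n(n-1)+\tfrac{2n}{3}\binom n2+2(\alpha-\tfrac n3)m$, whose right-hand side is monotone in $m$; over the feasible range $m\in\{n-1,\dots,\binom n2\}$ it is maximised at $m=n-1$ when $\alpha\le\tfrac n3$ --- where it equals exactly $\SW(P_n)$ --- and at $m=\binom n2$ when $\alpha\ge\tfrac n3$ --- where it equals exactly $\SW(K_n)$. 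The equality analysis is then immediate: for $\alpha<\tfrac n3$ the $m$-coefficient is strictly negative, so $\SW(G)=\SW(P_n)$ forces equality in the Key Lemma \emph{and} $m=n-1$, i.e.\ $G=P_n$; symmetrically $G=K_n$ for $\alpha>\tfrac n3$; and for $\alpha=\tfrac n3$ the $m$-term drops out and equality in the Key Lemma alone pins $G$ to $P_n$ or $K_n$, both of which are optimal. This is exactly the Main Claim.

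The single real obstacle is the Key Lemma with its equality statement. Per-vertex bounds are too weak --- a single endpoint of a path already carries more than its proportional share of the right-hand side, so the inequality is tight only globally --- and one must use the graph's structure. I see two routes. \emph{(i) Structural:} show that among connected $n$-vertex $m$-edge graphs, $d_G(V,V)$ is maximised by the \emph{path-complete graph}: a clique on $t$ vertices with a pendant path on the remaining $n-t$ vertices (with $t$ chosen by $m$, and one partially-joined vertex when $m$ is not of the exact form $\binom t2+(n-t)$), via a short list of edge-shifting (Kelmans-type) moves that keep $m$ fixed while strictly increasing $d_G(V,V)$ for any other $G$; then substitute the closed form for the sum of distances of this graph into $\SW$ and check that the resulting function of $m$ has non-negative discrete second difference, so its maximum over $\{n-1,\dots,\binom n2\}$ is attained at an endpoint, recovering $P_n$ and $K_n$. \emph{(ii) Direct:} bound $\sum_{d_G(u,v)\ge2}d_G(u,v)$ against $\binom n2-m$ through a BFS-layer or diametral-path decomposition, charging vertices of large eccentricity against the many non-edges they necessarily create; the awkward point here is the correlation between large eccentricity and small degree. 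I would pursue route (i): the edge-shifting lemma is the clean part and the ensuing convexity check is a routine manipulation of $\binom t2$- and $\binom t3$-type terms, after which the threshold $\alpha=\tfrac n3$ and all uniqueness statements fall out as above.
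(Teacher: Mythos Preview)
Your approach is essentially the same as the paper's. Both reduce to the extremal family you call path-complete graphs (the paper calls them \emph{PathCliques}, citing \v{S}olt\'es for the fact that these uniquely maximise $d_G(V,V)$ at each fixed $m$ --- precisely the structural reduction you propose to obtain via Kelmans-type edge-shifting moves). Where you then propose checking discrete convexity of $\SW$ over this one-parameter family to force the maximum to an endpoint, the paper instead performs a direct local check: for any PathClique that is neither $P_n$ nor $K_n$, either the best edge-addition or the best edge-deletion strictly improves $\SW$, so no interior member can be optimal. These two endgames are equivalent in spirit; the paper's version avoids writing out the closed form for the routing cost of a general PathClique, while your convexity framing is slightly cleaner conceptually. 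Your Key Lemma is a nice linear repackaging of the same content, though note that your route~(i) as written actually proves the Main Claim directly rather than the Key Lemma per se.
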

\begin{proof}
	Šoltés and Ľubomír~\cite{soltes1991} showed that for a fixed number of nodes and edges, the network maximizing the summed distances is unique and contains a clique and a path with at least two edges between one endpoint of the path and the clique. We call this a \emph{PathClique}. (Note that the clique can be empty, resulting in just a path) For a visualization, we refer to \Cref{app:fig:path_clique}.
	\begin{figure}[h]
		\centering
		\includegraphics[width=0.6\textwidth]{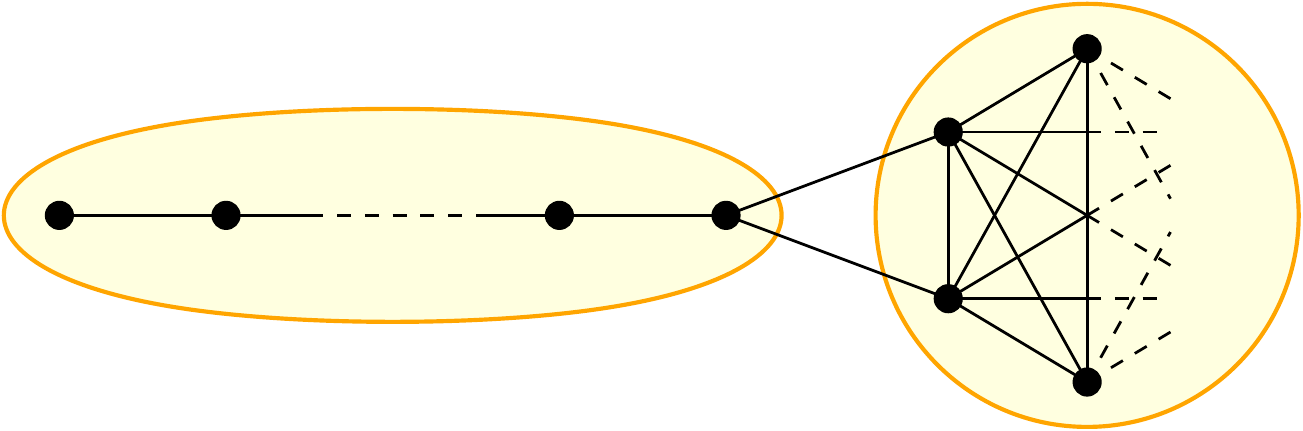}
		\caption{This figure shows a PathClique. It consists of a path (left) and a clique (right), which are connected by at least two edges between one endpoint of the path and some nodes of the clique.}
		\label{app:fig:path_clique}
	\end{figure}
	Note, that the social optimum has to be such a network, since for every other network, there is a PathClique with the same number of edges but larger summed distances and therefore a larger social welfare.
	
	Let $G$ be a PathClique with $n$ nodes having a clique containing $k$ nodes. Then the corresponding path contains $n-k$ nodes. First, we show that, unless $G$ is a path or a clique, we get a socially better network by adding or removing edges.
	
	Let $G$ be neither a clique nor a path and let $v$ be the endpoint of the path that is connected to the clique. Observe that removing an edge between $v$ and the clique results in a PathClique. (This is still true if there are only two edges connecting $v$ to the clique: Removing one of these edges makes the remaining neighbor of $v$ in the clique the new endpoint of the path and reduces the size of the clique by 1.) Therefore, this is the socially best way of removing an edge from $G$. Similarly, the best way of adding an edge to $G$ is adding it between $v$ and the clique unless $v$ is already fully connected to the clique in which case it is best to add an edge between the neighbor of $v$ on the path and the clique.
	We now make a case distinction.
	
	If $v$ is fully connected to the clique, adding an edge decreases distances by $2(n-k-1)$ and deleting an edge increases distances by $2(n-k)$. This means, that $G$ can only be optimal if $\alpha\le 2(n-k-1)$ and $\alpha \ge 2(n-k)$, which is a contradiction.
	
	If $v$ is not fully connected to the clique, adding an edge decreases distances by $2(n-k)$ and deleting an edge increases distances by $2(n-k)$. Thus, $\alpha=2(n-k)$ is necessary for $G$ to be socially optimal. But when $\alpha=2(n-k)$, adding and deleting edges between $v$ and the clique does not change the social welfare. Let $G'$ be the network obtained by fully connecting $v$ to the clique. Then $\SW(G)=\SW(G')$. Additionally, $G'$ cannot be an optimum since it fulfills the conditions of the first case. Therefore, $G$ cannot be socially optimal, too.
	
	So, the social optimum network must be the path $P_n$ or the clique $K_n$. We have
	\begin{align*}
		\SW(P_n)&=2\alpha(n-1)+2\sum_{i=1}^{n-1}i(n-i)=2\alpha(n-1)+\frac{1}{3}(n-1)n(n+1),\\
		\SW(K_n)&=2\alpha\binom{n}{2}+2\binom{n}{2}=n(n-1)(\alpha+1).
	\end{align*}
	For $\alpha=\frac{n}{3}$, we see that
	\begin{align*}
		\SW(P_n)&=\frac{2}{3}n(n-1)+\frac{1}{3}(n-1)n(n+1)=n(n-1)\left(\frac{n}{3}+1\right)=\SW(K_n),
	\end{align*}
	for $\alpha<\frac{n}{3}$ we have $\SW(P_n)>\SW(K_n)$, and $\alpha>\frac{n}{3}$ yields $\SW(K_n)>\SW(P_n)$.
\end{proof}
Next, we have a look at the existence of pairwise stable networks. Similar to the social optimum, for small $\alpha$, agents prefer large distances over many incident edges and therefore should remove as many edges as possible, leading to only trees being stable. Interestingly, the restrictions of pairwise stability lead to all trees being stable for small $\alpha$, even if the distances are very small (like in a star). This is shown by the next theorem.
\begin{restatable}[Stable Networks]{thm}{completeStable}\label{thm:com:stable}\,
	\begin{bracketenumerate}
		\item For $\alpha\le 1$, every tree is pairwise stable. For $\alpha<1$, any pairwise stable network is a tree.\label{thm:com:stable:tree}
		\item For $\alpha\ge 1$, the clique is pairwise stable.\label{thm:com:stable:clique}
		\item For $\alpha\le\frac{n-1}{2}$, the path is pairwise stable.\label{thm:com:stable:path}
		\item For $\alpha>\frac{n}{2}$, the clique is the only pairwise stable network.\label{thm:com:stable:only_clique}
	\end{bracketenumerate}

\end{restatable}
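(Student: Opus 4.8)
The plan is to handle the four parts largely independently, starting with the easy structural observations and ending with the delicate part~(4). For part~(1), I would first note that in a tree no edge can be removed without disconnecting the network, so every tree is trivially stable against edge removal; it remains to check stability against edge addition. Adding an edge $\{u,v\}$ to a tree $T$ changes $u$'s utility by $\alpha - (\text{decrease in } d_T(u,V))$. Since $u$ and $v$ are already connected in $T$, adding $\{u,v\}$ creates a cycle and the distance from $u$ to every other node can drop by at most... actually the key point is that the distance $d_T(u,v)$ itself drops from $d_T(u,v)\ge 2$ to $1$, saving at least $1$, but no node's distance to $u$ can drop by more than $d_T(u,v)-1$, and in fact summing the savings carefully one sees the total distance decrease for $u$ is at least $1$ (from the pair $u,v$). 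Hence for $\alpha\le 1$ the move is not strictly improving for $u$ (edge addition requires a strict gain for \emph{both} endpoints), giving stability. For the converse, if $\alpha<1$ and $G$ contains a cycle, pick any edge $e$ on a cycle; removing $e$ keeps $G$ connected, costs an endpoint $\alpha<1$ in edge utility, but increases its distance utility by at least $1$ (the distance to the other endpoint grows from $1$ to at least $2$), so the removal is strictly improving — contradiction. For part~(2), in $K_n$ no edge can be added, and removing any edge $\{u,v\}$ costs $u$ exactly $\alpha$ in edge utility while increasing $d(u,V)$ by exactly $1$ (only the $u$–$v$ distance changes, from $1$ to $2$); so for $\alpha\ge 1$ removal is not strictly improving, and $K_n$ is stable.

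For part~(3), I would analyze the path $P_n=v_1v_2\cdots v_n$ directly. Removing an interior edge disconnects it, so only stability against edge addition needs checking (the two pendant edges, if removed, also disconnect). Adding a chord $\{v_i,v_j\}$ with $i<j$: I would compute the change in $d(v_i,V)$. The worst case (largest distance decrease, hence the binding constraint) is the chord $\{v_1,v_n\}$, which turns the path into a cycle $C_n$; there $d(v_1,V)$ drops from $\binom n2 \cdot$(something) — concretely from $\sum_{k=1}^{n-1}k$ to $2\sum_{k=1}^{\lfloor n/2\rfloor}k$-type expression, a decrease of roughly $\frac{n^2}{4}$. Actually the relevant bound from the theorem is $\alpha\le\frac{n-1}{2}$, so I expect that the \emph{tightest} chord is a short one, namely $\{v_i,v_{i+2}\}$, whose addition decreases $d(v_i,V)$ by exactly... the node $v_{i+1}$ gets closer by $0$, but all nodes on the far side shift — one must check that the minimum distance-decrease over all addable chords and both endpoints is at least $\frac{n-1}{2}$. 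I would set up the sum $\sum_{x} (d_{P_n}(v_i,x) - d_{P_n+e}(v_i,x))$, observe it is minimized at a chord incident to an endpoint of the path spanning the whole path or a near-central short chord, and verify the minimum equals $\lceil\frac{n-1}{2}\rceil$ or similar, so that for $\alpha\le\frac{n-1}{2}$ at least one endpoint does not strictly gain.

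For part~(4), the hard direction, suppose $\alpha>\frac n2$ and let $G$ be pairwise stable but not $K_n$. Then some edge $\{u,v\}\in E(K_n)$ is missing. Stability against edge addition means adding $\{u,v\}$ fails to strictly help at least one of $u,v$; since the edge-utility gain is $\alpha$, this forces the distance-utility \emph{drop} for that endpoint, say $u$, to be at least $\alpha>\frac n2$. But the distance drop for $u$ upon adding $\{u,v\}$ is $\sum_{x\in V}(d_G(u,x)-d_{G+\{u,v\}}(u,x)) \le \sum_{x}(d_G(u,x)-1) $ and more usefully is at most $d_G(u,v)-1 + \sum_{x\ne u,v}(\text{gain})$; I would bound this by $d_G(V,V)$-type quantities and show it cannot exceed $n/2$ when $G$ is connected — the cleanest route is: the total distance decrease at $u$ from adding one edge is at most $d_G(u,v)-1\le n-2$ plus contributions that telescope, but to get the sharper $\le n/2$ bound I expect one needs to also invoke stability against edge \emph{removal}. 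That is the crux: combine "adding any missing edge doesn't help some endpoint by much in distance" with "removing any present edge doesn't help much" to derive a contradiction. The main obstacle is exactly this simultaneous exploitation of both stability conditions to pin down the distance structure tightly enough to force $G=K_n$; I would look for a missing edge $\{u,v\}$ realizing the maximum of $d_G(u,v)$ among non-adjacent pairs, argue $d_G(u,v)=2$ (else an even better addition exists), and then show the distance decrease at the endpoint is at most roughly the number of common... no: at most the number of neighbors of $v$ not adjacent to $u$ plus lower-order terms, which is $<\alpha$ for $\alpha>n/2$ after accounting for removal-stability forcing high degrees. This accounting is where the $\frac n2$ threshold should emerge precisely.
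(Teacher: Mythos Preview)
Your treatments of parts~(1) and~(2) are correct and essentially identical to the paper's.

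For part~(3) you eventually land on the right idea (a short, central chord is the critical case), but your optimization direction is muddled. Stability requires that for \emph{every} chord at least \emph{one} endpoint has distance decrease $\ge\alpha$; the binding quantity is therefore $\min_{\text{chords}}\max_{\text{endpoints}}(\text{decrease})$, not the ``minimum distance-decrease over all addable chords and both endpoints'' you write. The paper computes this min--max directly: for the chord $\{v_i,v_{i+2}\}$ the decreases are $n-i-1$ and $i$, and centering $i=\lfloor n/2\rfloor$ makes the larger one equal $\lfloor n/2\rfloor\ge\frac{n-1}{2}$, with a one-line argument that longer or more eccentric chords only increase the relevant endpoint's decrease. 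Your sketch would get there once the quantifiers are fixed.

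Part~(4) has a genuine gap. You propose to take an arbitrary missing edge $\{u,v\}$, note that stability forces one endpoint's distance drop to be $\ge\alpha>n/2$, and then derive a contradiction by \emph{upper-bounding} that drop by $n/2$. But there is no such upper bound in general: in a connected graph on $n$ nodes, adding a single edge can decrease one endpoint's summed distances by $\Theta(n^2)$ (think of adding the long chord to a path). Neither your ``maximum-distance missing edge'' heuristic nor invoking removal-stability rescues this, because removal-stability for $\alpha>n/2$ gives information in the wrong direction (it says removing an edge costs at most $\alpha$ in distance, not at least). The paper's proof avoids this entirely by a different device: pick a node $v$ minimizing $d_G(v,V)$ and look at two \emph{neighbors} $x,y$ of $v$ that are non-adjacent. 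Since $x$ is adjacent to $v$ and $v$ has minimal summed distances, at most $n/2$ nodes are strictly closer to $x$ than to $v$; but only those nodes can contribute to $y$'s distance drop when $\{x,y\}$ is added, so the drop for $y$ is $\le n/2<\alpha$ and $y$ strictly gains (symmetrically for $x$). Hence $N_G[v]$ is a clique, and then the argument propagates by observing each neighbor of $v$ also minimizes summed distances. The key insight you are missing is to localize the missing edge to the closed neighborhood of a distance-minimizing vertex, which is exactly what makes the $n/2$ bound true.
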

\begin{proof}[Proof of (\ref{thm:com:stable:tree})]
	Let $G$ be a tree. Since removing an edge from $G$ would lead to $G$ being disconnected, we only have to consider adding an edge. This would shorten the distances for both endpoints by at least 1. Since $\alpha\le 1$, this is not an improvement for the agents.
	
	Let $G$ be a network. If $G$ contains a cycle, we can remove an edge without disconnecting the network. By doing this, the distances for both endpoints increase by at least 1. Since $\alpha<1$, this is an improvement for both agents.
\end{proof}
\begin{proof}[Proof of (\ref{thm:com:stable:clique})]
	Let $G$ be a clique. Since adding an edge is not possible, we only look at removing an edge. This would result in a distance increase of 1 for both endpoints. Since $\alpha\ge 1$, this is not an improvement for either of the two incident agents.
\end{proof}
\begin{proof}[Proof of (\ref{thm:com:stable:path})]
	Let $G$ be the path consisting of $n$ nodes $v_1,\dots,v_n$, in that order. Since $G$ is a tree, no edge can be removed without disconnecting the network. Therefore, the only possible move is adding an edge. Let $1\le i<j\le n$ be such that $v_i$ and $v_j$ are not adjacent (i.e., $j-i\ge 2$). If $j\le\frac{n}{2}$, adding the edge $e\coloneqq\{v_i,v_j\}$ shortens distances from $v_i$ to at least $\frac{n}{2}$ nodes. The same holds for $i\ge\frac{n}{2}$ regarding node $v_j$. It is easy to see that the distance decrease is minimal when $j-i=2$. Intuitively, edge $e$ has to be as central as possible. According to this, choosing $i=\left\lfloor\frac{n}{2}\right\rfloor$ and $j=i+2$ yields
	\begin{align*}
		d_G(v_i,V)-d_{G+e}(v_i,V)&=n-j+1=n-i-1\ge\frac{n-2}{2}\text{ and }\\
		d_G(v_j,V)-d_{G+e}(v_j,V)&=i\ge\frac{n-1}{2}.
	\end{align*}
	Thus, if $\alpha\le\frac{n-1}{2}$, adding edge $e$ is not an improving move and $G$ is pairwise stable.
\end{proof}
\begin{proof}[Proof of (\ref{thm:com:stable:only_clique})]
	Let $V$ be a set of $n$ agents and $G=(V,E)$ be a stable network. Let $v\in V$ be a node having minimum total distances, i.e., for all $v'\in V$, we have $d_G(v,V)\le d_G(v',V)$. Let $N_G[v]$ denote the closed neighborhood of $v$.
	
	Now suppose, the network induced by $N_G[v]$ is not a clique. Then there are two neighbors $x,y$ of $v$ with $\{x,y\}\notin E$. We observe that for each node $z\in V$, the distances $d_G(v,z)$ and $d_G(x,z)$ can only differ by 1, since $v$ and $x$ are neighbors. By choice of $v$, there are at least as many nodes that are closer to $v$ than nodes that are closer to $x$. Therefore, there are at most $\frac{n}{2}$ many nodes that are closer to $x$. Adding an edge between $x$ and $y$ can, for node $y$, only shorten distances to nodes which are closer to $x$ than to $v$. Thus, this edge shortens the distances from $y$ by at most $\frac{n}{2}$. The same holds for node $x$. Therefore, for $\alpha>\frac{n}{2}$, this edge would improve the utility of agents $x$ and $y$ and, thus, $G$ would not be stable. This contradicts our assumption. Thus, $N_G[v]$ must induce a clique.
	
	Now let $x$ be a neighbor of $v$. Since $x$ is connected to all neighbors of $v$, we have $d_G(x,V)\le d_G(v,V)$, i.e., also $x$ minimizes pairwise distances. Hence, $N_G[x]$ also induces a clique, leading to $N_G[v]=N_G[x]$. By induction, since $G$ is connected, it must be a clique.
\end{proof}
\Cref{thm:com:stable} implies that socially optimal networks are also stable. In fact, they are stable for a wide range of $\alpha$-values. The clique is stable for $\alpha\ge 1$, meeting the bound below which only trees are stable. Similarly, the path is stable for $\alpha\le\frac{n-1}{2}$, almost meeting the lower bound for only the clique being stable. Additionally, we observe that we only need two networks (path and clique) to provide pairwise stable networks for all possible values of $\alpha$.

For further constructions, we need the following definition. Let $G$ be a network. We call $G'$ a \emph{clique network} of $G$, if it can be obtained by replacing each node of $G$ by a clique of size at least 2 and for each edge of $G$ connect the two corresponding cliques fully bipartite. By using only constant-size cliques, some properties of $G$ (density, length of shortest paths) are preserved while the network is more stable against edge removal.
\begin{restatable}{thm}{cliqueNetwork}
	\label{clique_network}
	Let $G$ be a clique network. For $\alpha\ge 1$, $G$ is stable against edge removal.
\end{restatable}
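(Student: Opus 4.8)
The plan is to reduce the claim to a purely distance-theoretic statement: \emph{removing any non-bridge edge $e=\{x,y\}$ from a clique network increases the summed distances of $x$ (and, symmetrically, of $y$) by at most~$1$.} This suffices, because a removal drops $\deg_G(x)$ by exactly one while only ever increasing distances, so
\[
	u_x(G-e)=\alpha\bigl(\deg_G(x)-1\bigr)+d_{G-e}(x,V)\le u_x(G)-(\alpha-1)\le u_x(G)
\]
whenever $\alpha\ge 1$; hence no edge removal is improving and $G$ is stable against edge removal. Only non-bridge edges need to be considered, since a removal that disconnects the network is not an admissible move. By symmetry it is enough to argue for the endpoint $x$. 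Throughout, for a vertex $v$ of $G$ let $C(v)$ denote the clique of the clique-network decomposition that contains~$v$.

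The key structural fact I would establish first is an \emph{equidistance lemma}: in a clique network, all vertices of one clique are equidistant from every vertex outside that clique, i.e.\ if $a,b\in C$ with $C=C(a)=C(b)$ and $w\notin C$, then $d_G(a,w)=d_G(b,w)$. This follows by taking a shortest path from $a$ to $w$, looking at the first vertex $p$ at which it leaves $C$, and observing that $b$ is adjacent to $p$ as well (a clique edge if $p\in C$ --- impossible here --- or, since adjacent cliques are fully joined, a bipartite edge), so the path reroutes through $b$ without becoming longer. Using this, I would show that deleting $e=\{x,y\}$ leaves $d_G(x,\cdot)$ unchanged at every vertex other than $y$ and raises $d_G(x,y)$ from $1$ to $2$; summing over all vertices then yields $d_{G-e}(x,V)=d_G(x,V)+1$. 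For a fixed $w\notin\{x,y\}$ one needs a shortest path from $x$ to $w$ in $G$ avoiding~$e$. If $w$ is still a neighbour of $x$ in $G-e$ --- which covers all $w\in C(x)\cup C(y)$ --- this is immediate. Otherwise $w$ lies outside both cliques, and I split on the type of~$e$: if $e$ is intra-clique, the equidistance lemma already forbids $y$ from being strictly closer to $w$ than $x$, so not every shortest path from $x$ to $w$ can begin with $e$; if $e$ is a bipartite edge, I pick a second vertex $y'\in C(y)\setminus\{y\}$ (it exists because every clique has size at least two), note that $y'$ is adjacent to both $x$ and $y$ and equidistant from $w$ as $y$ is, and observe that a shortest path from $y'$ to $w$ cannot revisit $x$, so prepending $x,y'$ to it gives a shortest path from $x$ to $w$ that avoids~$e$. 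The same vertex $y'$ (or, in the degenerate case $C(x)=\{x,y\}$, any vertex of a clique adjacent to $C(x)$, which exists since $G$ is connected and $e$ is not a bridge) is a common neighbour of $x$ and $y$ in $G-e$, so $d_{G-e}(x,y)\le 2$.

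I expect the rerouting step to be the only delicate point: one must ensure the substitute neighbour actually exists --- which is precisely where the hypothesis that every clique has size at least two is used --- and that the shortest path it provides does not itself run back through the deleted edge. The equidistance lemma is what makes both of these go through, and it also rules out the degenerate configuration (a size-two clique $\{x,y\}$ together with a vertex $w$ all of whose shortest paths from $x$ pass through $y$) that would otherwise break the bound. Everything else is routine bookkeeping over the three cases ``$w\in C(x)$'', ``$w\in C(y)$'', and ``$w$ outside both cliques''.
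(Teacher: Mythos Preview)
Your proposal is correct and rests on exactly the same fact the paper uses: removing any edge of a clique network changes only the distance between its two endpoints, raising it from $1$ to $2$. The paper's proof is a single sentence asserting this without justification; you supply the full argument via your equidistance lemma and the case split on intra-clique versus bipartite edges. So the approach is the same, just carried out at a much finer level of detail than the paper deemed necessary.
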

\begin{proof}
	Removing any edge from $G$ only effects the distances between its endpoints. These distances increase by 1.
\end{proof}

\noindent Finally, we show that stable states may not be found by simply letting agents iteratively play improving moves, i.e., via a sequential process of improving strategy changes. \Cref{fig:best_response_cycle} provides an example of a cyclic sequence of improving moves. 
This also implies that both the $K$-SDNCG and the SDNCG do not belong to the class of potential games~\cite{monderer1996}, i.e., the existence of equilibria cannot be proven via potential function arguments.
\begin{restatable}{thm}{potentialGame}
	The Social Distancing Network Creation Game is not a potential game.
\end{restatable}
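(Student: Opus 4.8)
The plan is to establish the theorem by exhibiting an \emph{improving response cycle} and invoking the standard fact that no potential game admits one~\cite{monderer1996}. Recall that a game is a potential game only if there is a real-valued function $\Phi$ on the set of states whose value changes strictly monotonically along every improving move. Hence, if there is a finite sequence of states $G_0, G_1, \dots, G_k$ on a common node set with $G_k = G_0$ such that each $G_{i+1}$ is obtained from $G_i$ by a single improving move, then $\Phi(G_0), \Phi(G_1), \dots, \Phi(G_k)$ would be strictly monotone yet return to its starting value --- a contradiction. So it suffices to produce one such cycle; I would use the one depicted in \Cref{fig:best_response_cycle}. Since all edges involved belong to the complete host network, the very same cycle shows that the $K$-SDNCG is not a potential game either.

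First I would fix the node set and a value of $\alpha$ for which the cycle works, and list the states $G_0, \dots, G_{k-1}$ explicitly, each together with the single move to its successor and the agent (for a removal) or pair of agents (for an addition) whose utility the move is meant to improve.

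Second, for each transition I would check that the move is legal and strictly improving. For an edge removal $G_i \to G_i - e$ I would verify that $G_i - e$ is still connected (so the move is admissible) and that for the designated endpoint $v$ of $e$ the distance gain exceeds the lost edge utility, i.e.\ $d_{G_i - e}(v, V) - d_{G_i}(v, V) > \alpha$, so that $u_v$ strictly increases. For an edge addition $G_i \to G_i + e$ I would verify that for \emph{both} endpoints $v$ of $e$ the edge utility gained exceeds the distance utility lost, i.e.\ $\alpha > d_{G_i}(v, V) - d_{G_i + e}(v, V)$. These are finitely many elementary shortest-path computations in small graphs. Concluding from the contradiction above then finishes the proof.

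The main obstacle is designing the cycle, not verifying it. The edge utility $\alpha \deg_G(v)$ and the distance utility $d_G(v, V)$ push in opposite directions, so the topologies of $G_0, \dots, G_{k-1}$ and the value of $\alpha$ must be chosen so that \emph{every} step is simultaneously improving: each addition must be a net gain for both endpoints at once, and each removal must be a net gain for its endpoint while keeping the graph connected. Finding a small family of states and a single $\alpha$ that threads all of these inequalities is the delicate part; once a candidate is drawn, the rest is routine.
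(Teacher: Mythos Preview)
Your proposal is correct and matches the paper's approach exactly: the paper proves the theorem by exhibiting the same improving-response cycle (the one in \Cref{fig:best_response_cycle}, with $n=5$ and $\alpha=2.5$) and invoking the fact that potential games cannot admit such cycles. Your write-up is in fact more explicit than the paper's one-line proof about what must be verified for each transition, but the underlying argument is identical.
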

\begin{proof}
	This is shown by the existence of improving cycles. See \Cref{fig:best_response_cycle} for an example.
	\begin{figure}[h]
	\centering
	\includegraphics[width=0.6\textwidth]{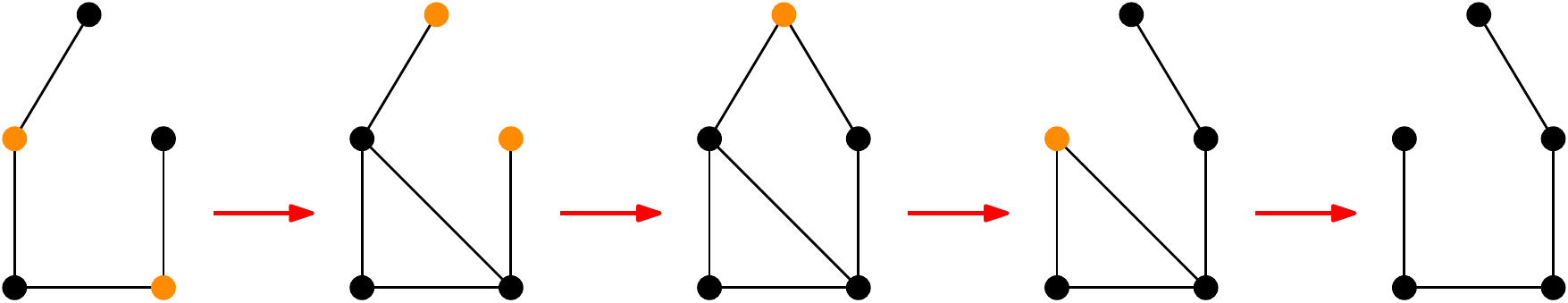}
	\caption{This figure shows a cyclic sequence of improving moves performed by $n=5$ agents for $\alpha=2.5$. In each step, the nodes responsible for the next change are highlighted in orange. Note that the last step is isomorphic to the first step.}
	\label{fig:best_response_cycle}
\end{figure}
\end{proof}

\subsection{Price of Anarchy and Price of Stability}
\label{sec:complete:poaPos}
In this section, we give a series of bounds for the Price of Anarchy and the Price of Stability.
\begin{restatable}[Price of Anarchy]{thm}{completePoa}\label{thm:com:poa}\,
	\begin{bracketenumerate}
		\item The Price of Anarchy is in $\mathcal{O}(n)$.\label{thm:com:poa:upper_bound}
		\item For $\alpha\le 1$, the Price of Anarchy is in $\Theta(n)$.\label{thm:com:poa:alpha_smaller_one}
		\item For $1<\alpha\le\sqrt{n}$, the Price of Anarchy is in $\Theta(n)$.\label{thm:com:poa:alpha_smaller_sqrt}
		\item For $\sqrt{n}\le\alpha\le \frac{n}{6}-3$, the Price of Anarchy is in $\Omega\left(\frac{n}{\log n}\right)$.\label{thm:com:poa:hypercube}
		\item For $\frac{n}{6}-3<\alpha\le\left\lfloor\frac{n}{2}\right\rfloor-2$, the Price of Anarchy is in $\Omega\left(\sqrt{n}\right)$.\label{thm:com:poa:alpha_smaller_n_2}
		\item For $\alpha\ge\frac{n}{2}$, the Price of Anarchy is 1.\label{thm:com:poa:largeAlpha}
	\end{bracketenumerate}

\end{restatable}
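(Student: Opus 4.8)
The plan is to prove the six parts largely independently: one direct estimate for the universal upper bound, and one explicit family of stable networks for each lower-bound regime. For part~\ref{thm:com:poa:upper_bound} I would bound the two sides of $\SW(\OPT_{K_n})/\SW(G)$ separately. By \Cref{thm:com:opt} the optimum is $P_n$ or $K_n$, whose welfares were computed in that proof, so the numerator is $\mathcal{O}(n^3+\alpha n^2)$; and any connected spanning subnetwork satisfies $|E_G|\ge n-1$ and $d_G(V,V)\ge n(n-1)$, so $\SW(G)\ge (n-1)(2\alpha+n)$. The mediant inequality $\frac{a+b}{c+d}\le\max\{a/c,b/d\}$ then gives $\POA_n\le\max\{\frac{n+1}{3},\frac n2\}=\mathcal{O}(n)$. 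For part~\ref{thm:com:poa:largeAlpha}: for $\alpha>\frac n2$, \Cref{thm:com:stable}\ref{thm:com:stable:only_clique} and \Cref{thm:com:opt} say $K_n$ is simultaneously the unique stable and the unique optimal network, so $\POA_n=1$; the boundary $\alpha=\frac n2$ I would obtain by re-examining that proof and observing that the distance reduction available inside the closed neighbourhood of a distance-minimal vertex is strictly below $\frac n2$ unless $n$ is even and the neighbourhood split is perfectly balanced, a degenerate case excluded directly.

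For the lower bounds, note that $\SW(\OPT_{K_n})=\Theta(n^3)$ throughout the ranges of parts~\ref{thm:com:poa:alpha_smaller_one}--\ref{thm:com:poa:alpha_smaller_n_2} (by \Cref{thm:com:opt} the optimum is $P_n$ for $\alpha\le n/3$ and $K_n$ otherwise, and both have welfare $\Theta(n^3)$ there), so it suffices to exhibit a pairwise stable $G$ with small welfare. For $\alpha\le 1$ take $G=S_n$: it is stable by \Cref{thm:com:stable}\ref{thm:com:stable:tree}, $\SW(S_n)=2\alpha(n-1)+2(n-1)^2=\Theta(n^2)$, so $\POA_n=\Theta(n)$. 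For $1<\alpha\le\sqrt{n}$ take $G$ to be a single hub vertex joined to one vertex of each of $\Theta(\sqrt{n})$ disjoint cliques of size $\lceil\sqrt{n}\rceil$. No edge removal is improving (for $\alpha>1$ a clique edge trades $\alpha$ against a distance gain of $1$, and the hub edges are bridges), and the cheapest edge addition joins two hub-neighbours of distinct cliques, reducing one endpoint's distance by exactly $\lceil\sqrt{n}\rceil\ge\alpha$, while any other addition reduces it by $\Omega(\sqrt{n})$; so $G$ is stable. Since $|E_G|=\Theta(n^{3/2})$ and $d_G(V,V)=\Theta(n^2)$, we get $\SW(G)=\Theta(\alpha n^{3/2}+n^2)=\Theta(n^2)$ and $\POA_n=\Theta(n)$.

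Parts~\ref{thm:com:poa:hypercube} and~\ref{thm:com:poa:alpha_smaller_n_2} I would prove with \emph{blow-ups}: replace every vertex of a sparse low-diameter base graph $B$ by a clique and every edge by a complete bipartite graph (a clique network, cf.\ \Cref{clique_network}), so that inter-blob distances equal $B$-distances. Take $B=Q_d$, a hypercube on $\Theta(n)$ vertices with blobs of constant size $s$ (hence $d=\Theta(\log n)$), for part~\ref{thm:com:poa:hypercube}; and $B=P_m$, a path with $m=\Theta(\sqrt{n})$ blobs of size $\Theta(\sqrt{n})$, for part~\ref{thm:com:poa:alpha_smaller_n_2}. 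In both cases the cheapest edge addition joins two blobs at $B$-distance $2$: for $Q_d$ this reduces one endpoint's distance by $1+s(2^{d-2}-1)=\frac n4-\Theta(1)\ge\alpha$ when $\alpha\le\frac n6-3$; for $P_m$, from a blob at position $p$, by $1+\max(\sum_{q<p}s_q,\sum_{q>p+2}s_q)$, which by balancing the two partial sums over $p$ is $\ge\alpha$ on the claimed range. All farther blob pairs, and (for $\alpha>1$) all edge removals, are easier, so $G$ is stable. Finally $|E_G|=\Theta(n\log n)$, $d_G(V,V)=\Theta(n^2\log n)$, hence $\SW(G)=\Theta(n^2\log n)$ and $\POA_n=\Omega(n/\log n)$ for the hypercube; and $|E_G|=\Theta(n^{3/2})$, $d_G(V,V)=\Theta(n^{5/2})$, hence $\SW(G)=\Theta(n^{5/2})$ and $\POA_n=\Omega(\sqrt{n})$ for the path.

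The hard part will be the stability verifications for the blow-up constructions: because the host is complete, one must rule out \emph{every} single edge addition among the $\Theta(n^2)$ non-edges. The clique-network structure makes this tractable, since it fixes all pairwise distances, so the task reduces to identifying the cheapest edge — always between two blobs that are as close (base-distance~$2$) and as centrally placed as possible — and lower-bounding the distance reduction it forces on its better endpoint by $\alpha$. Pinning down that threshold exactly is what fixes the blob sizes, hence $\SW(G)$, hence the final bound; getting the precise constant $\lfloor n/2\rfloor-2$ in part~\ref{thm:com:poa:alpha_smaller_n_2} (which likely requires tuning the blob sizes, or a variant of the path construction, for $\alpha$ close to $n/2$) is the most delicate point and is where I expect the bulk of the technical effort to lie.
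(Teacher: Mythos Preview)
Your overall plan mirrors the paper's: parts~(\ref{thm:com:poa:upper_bound}), (\ref{thm:com:poa:alpha_smaller_one}), (\ref{thm:com:poa:hypercube}), and~(\ref{thm:com:poa:largeAlpha}) use essentially the same arguments, and for part~(\ref{thm:com:poa:alpha_smaller_n_2}) you are right that the plain path-of-cliques falls short near $n/2$---the paper indeed inserts a thin six-node middle segment (three $2$-cliques) between the two halves of the path so that the cheapest edge addition lands there and reduces distances by exactly $\lfloor n/2\rfloor-2$.

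There is, however, a genuine gap in your construction for part~(\ref{thm:com:poa:alpha_smaller_sqrt}). Your network---a single hub $h$ joined to one designated vertex $c_i$ of each clique $K_i$---is \emph{not} stable against edge removal. Take any $b\in K_i\setminus\{c_i\}$ and remove the clique edge $\{c_i,b\}$: since $c_i$ is $b$'s unique gateway to $h$, $b$'s distance to $c_i$ rises from $1$ to $2$ and consequently $b$'s distance to \emph{every} node outside $K_i$ also rises by $1$, so $b$'s total distance gain is $n-|K_i|+1\approx n-\sqrt{n}\gg\alpha$. Since for edge removal a single endpoint's strict improvement suffices, this is an improving move and $G$ is not pairwise stable; your claim that ``a clique edge trades $\alpha$ against a distance gain of $1$'' holds only for clique edges not incident to the gateway $c_i$. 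The paper avoids exactly this failure by making the entire graph a clique network in the sense of \Cref{clique_network}: the centre is itself a small clique, each ray is attached through a pair $v_i,v_i'$ with complete-bipartite connections on both sides, so every edge has an alternate length-$2$ path between its endpoints and removal changes only their mutual distance by $1$. The ray cliques are then sized $\lceil\alpha\rceil$ (rather than $\sqrt{n}$) so that every addition reduces some endpoint's distance by at least $\alpha$; this gives $|E_G|=\Theta(\alpha n)$ and hence $\SW(G)=\Theta(n^2)$ on the range $\alpha\le\sqrt{n}$. Your idea can be salvaged along the same lines, but as written the removal-stability argument fails.
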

\begin{proof}[Proof of (\ref{thm:com:poa:upper_bound})]
	Let $\alpha\in\R_{>0}$. Every connected network has at least $n-1\in\Omega(n)$ and at most $\binom{n}{2}\in\mathcal O(n^2)$ edges. Furthermore, the summed distances are at least $n(n-1)\in\Omega(n^2)$, since every node has at least distance 1 to every other node, and at most $n^3$, since every node has at most distance $n$ to every other node. With this, we get the trivial bound of
	\begin{equation*}
		\POA_n\in\mathcal{O}\left(\frac{\alpha n^2+n^3}{\alpha n+n^2}\right)=\mathcal{O}(n).\qedhere
	\end{equation*}
\end{proof}
\begin{proof}[Proof of (\ref{thm:com:poa:alpha_smaller_one})]
	For $\alpha\le 1$, the social optimum is the path as shown in \Cref{thm:com:opt}. It has social welfare of
	\begin{align*}
		\alpha(n-1)+\frac{1}{3}(n-1)n(n+1).
	\end{align*}
	Because of \Cref{thm:com:stable} and the star being a tree, it is pairwise stable for $\alpha\le 1$. It has social welfare of
	\begin{align*}
		\alpha(n-1)+2(n-1)+2(n-1)(n-2)=\alpha(n-1)+2(n-1)^2.
	\end{align*}
	Together with (\ref{thm:com:poa:upper_bound}), this yields
	\begin{equation*}
		\POA\ge\frac{\alpha(n-1)+\frac{1}{3}(n-1)n(n+1)}{\alpha(n-1)+2(n-1)^2}=\frac{\alpha+\frac{1}{3}n(n+1)}{\alpha+2(n-1)}\in\Theta(n).\qedhere
	\end{equation*}
\end{proof}
\begin{proof}[Proof of (\ref{thm:com:poa:alpha_smaller_sqrt})]
	We construct a star-like network with cliques as leaves in the following way. Let $c\coloneqq\lceil\alpha\rceil+2$. Additionally, let $K_1,\dots,K_d$ be $d\coloneqq\left\lfloor\frac{n-2}{c}\right\rfloor$ cliques containing $c-2$ nodes and $v_1,v_1',v_2,v_2',\dots,v_d,v_d'$ be $2d$ nodes. Let furthermore $M$ be a clique of size $n-cd$. We now define our network $G$ as
	\begin{align*}
		V_G&\coloneqq\bigcup_{i=1}^d V_{K_i} \cup \bigcup_{i=1}^d \{v_i,v_i'\}\cup V_M\\
		E_G&\coloneqq\bigcup_{i=1}^d E_{K_i}\cup E_M \cup\bigcup_{i=1}^d\{\{v_i,v_i'\}\}\\*
		&\hspace{0.6cm}\cup\bigcup_{i=1}^d\bigcup_{v\in K_i}\{\{v,v_i\},\{v,v_i'\}\}\cup\bigcup_{i=1}^d\bigcup_{v\in M}\{\{v,v_i\},\{v,v_i'\}\}.
	\end{align*}
	We essentially connect the outer cliques $K_1,\dots,K_d$ to the center clique $M$ via $d$ 2-cliques and each connection is fully bipartite (see \Cref{fig:star_of_cliques}). Since $n=|V_G|$, $G$ is a network of the desired size.
	\begin{figure}[h]
		\centering
		\includegraphics[width=0.35\textwidth]{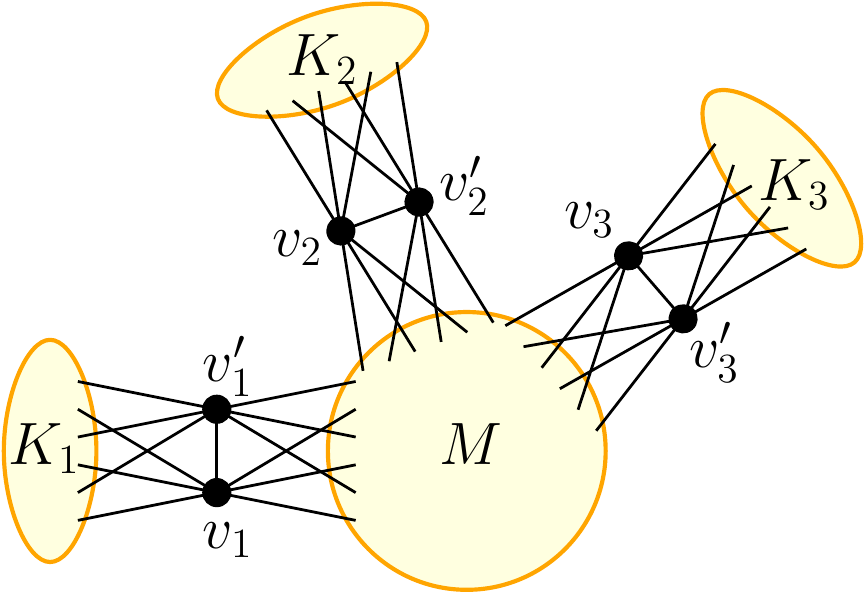}
		\caption{The figure shows a star-like clique network, where the center is formed by a clique $M$ and each ray consists of two nodes $v_i,v_i'$ and a clique $K_i$.}
		\label{fig:star_of_cliques}
	\end{figure}
	
	We now show that $G$ is pairwise stable. We see that $G$ is a clique network. Because of \Cref{clique_network} and $\alpha>1$, $G$ is stable against edge removal. On the other hand, adding an edge shortens distances to at least $|K_i|=c-2\ge\alpha$ nodes which means a distance decrease of at least $\alpha$ for the two incident nodes. This also does not increase their utility. Therefore, $G$ is pairwise stable.
	
	For the center clique $M$, we see that $|V_M|=n-cd=n-c\left\lfloor\frac{n-2}{c}\right\rfloor$ and therefore $2\le |V_M|< n-(n-2-c)=c+2$. With this and $1<\alpha\le\sqrt{n}$, we obtain
	\begin{align*}
		|E_G|&=d\binom{c-2}{2}+\binom{|V_M|}{2}+2d+d(c-2)2+d|V_M|2\\*
		&=\left\lfloor\frac{n-2}{\lceil\alpha\rceil+2}\right\rfloor\left(\frac{\lceil\alpha\rceil(\lceil\alpha\rceil-1)}{2}+2+2\lceil\alpha\rceil+2|V_M|\right)+\frac{|V_M|(|V_M|-1)}{2}\\*
		&\in\Theta(\alpha n)
	\end{align*}
	and $d_G(V_G,V_G)\in\Theta(n^2)$.
	
	For $\alpha<\sqrt{n}$, the socially optimal network is the path. With the previous calculations, we can now bound the Price of Anarchy as
	\begin{align*}
		\POA\ge\frac{2\alpha(n-1)+\Theta(n^3)}{2\alpha\Theta(\alpha n)+\Theta(n^2)}=\frac{\Theta(n^3)}{\Theta(n^2)}\in\Omega(n).
	\end{align*}
	
	From (\ref{thm:com:poa:upper_bound}), we have $\POA\in\mathcal{O}(n)$ and therefore $\POA\in\Theta(n)$.
\end{proof}
\begin{proof}[Proof of (\ref{thm:com:poa:hypercube})]
	Let $d=\lfloor\log n\rfloor-1$. Then, the $d$-dimensional hypercube is represented by $G_H$ with $V_{G_H}=\{0,1\}^d$ and $E_{G_H}=\{\{v,x\}\mid v,x\in V\wedge d_H(v,x)=1\}$ where $d_H(v,x)$ denotes the Hamming Distance between $v$ and $x$.
	Let $G$ be a clique network for $G_H$ with $|V_G|=n$ such that the sizes of the cliques replacing the nodes of $G_H$ differ by at most 1. Observe, that each clique is of size 2 or 3 if $2\cdot 2^d\le n<3\cdot 2^d$ and of size 3 or 4 if $3\cdot 2^d\le n<4\cdot 2^d$.
	By \Cref{clique_network} and since $\alpha\ge 1$, we know that $G$ is stable against edge removal. We now show that adding an edge shortens the total distances for the incident nodes by at least $\frac{n}{6}-3$.

	Let $v,x\in V_G$ such that $e\coloneqq\{v,x\}\notin E_G$ and let $v',x'\in V_{G_H}$ be the nodes corresponding to the cliques that contain $v$ and $x$, respectively. Therefore, $e'\coloneqq\{v',x'\}\notin E_{G_H}$, which implies $d_H(v',x')\ge 2$. By symmetry of the hypercube, we can assume w.l.o.g. that
	$$
		v'=\underbrace{00\dots0}_{d_H(v,x)}\underbrace{0\dots00}_{d-d_H(v,x)} \text{~~~~~~~and~~~~~~~}
		x'=\underbrace{11\dots1}_{d_H(v,x)}\underbrace{0\dots00}_{d-d_H(v,x)}.
	$$
	
	Adding $e'$ to $G_H$ decreases the distances from $v'$ to another node $y'\in V_{G_H}$ if and only if $d_H(v',y')\ge d_H(x,y)+2$. The difference in distance can only come from the first $d_H(v',x')$ bits of the label since the remaining bits are equal for $v'$ and $x'$. Let $\ell$ be the number of the first $d_H(v',x')$ bits of $y'$ equal to $1$. Then, $d_H(v',x')-\ell$ is the number of the first $d_H(v',x')$ bits of $y'$ equal to $0$. We obtain $d_H(v',y')-d_H(x',y')=\ell-(d_H(v',x')-\ell)=2\ell-d_H(v',x')$. Thus, adding $e'$ to $G_H$ shortens the distance from $v'$ to $y'$ by $2\ell-d_H(v',x')-1$.
	
	The number of nodes where exactly $\ell$ of the first $d_H(v',x')$ bits are equal to 1 is $\binom{d_H(v',x')}{\ell}\cdot 2^{d-d_H(v',x')}$. Therefore, we get a distance decrease for $v'$ of
	\begin{align*}
		&d_{G_H}(v',V_{G_H})-d_{G_H+e'}(v',V_{G_H})\\
		&=\sum_{\ell=\left\lceil\frac{d_H(v',x')}{2}\right\rceil+1}^{d_H(v',x')} \binom{d_H(v',x')}{l}\cdot 2^{d-d_H(v',x')}\cdot(2\ell-d_H(v',x')-1)\\
		&=2^{d-d_H(v',x')}\sum_{\ell=0}^{\left\lfloor\frac{d_H(v',x')}{2}\right\rfloor-1} \binom{d_H(v',x')}{\ell}\cdot(d_H(v',x')-2\ell-1)\\
		&\ge 2^{d-d_H(v',x')}\sum_{\ell=0}^{\left\lfloor\frac{d_H(v',x')}{2}\right\rfloor-1} \binom{d_H(v',x')}{\ell}\\
		&\ge 2^{d-d_H(v',x')}\frac{1}{2}\left(\sum_{\ell=0}^{d_H(v',x')}\binom{d_H(v',x')}{\ell}-\binom{d_H(v',x')}{\left\lfloor\frac{d_H(v',x')}{2}\right\rfloor}\right)\\
		&=2^{d-d_H(v',x')}\frac{1}{2}\left(2^{d_H(v',x')}-\binom{d_H(v',x')}{\left\lfloor\frac{d_H(v',x')}{2}\right\rfloor}\right)\\
		&\ge 2^{d-d_H(v',x')}\frac{1}{2}2^{d_H(v',x')-1}\\
		&=\frac{2^d}{4}.
	\end{align*}
	
	Observe that the distances from $v$ to all nodes in other cliques in $G$ are exactly the same as the distances from $v'$ to all other nodes in $G_H$. The same holds for $G+e$ and $G_H+e'$, with the exception of the distances from $v$ to the (at most 3) nodes in the same clique as $x$. 
	We distinguish two cases:

	If $2\cdot 2^d\le n<3\cdot 2^d$, each clique consists of 2 or 3 nodes. Therefore, we have a distance decrease of at least
	\begin{align*}
		d_G(v,V_{G})-d_{G+e}(v,V_{G})\ge 2 (d_{G_H}(v',V_{G_H})-d_{G_H+e'}(v',V_{G_H})) -3\ge\frac{2^d}{2}-3\ge \frac{n}{6}-3.
	\end{align*}
	If $3\cdot 2^d\le n<4\cdot 2^d$, each clique consists of 3 or 4 nodes. This means, we have a distance decrease of at least
	\begin{align*}
		d_G(v,V_{G})-d_{G+e}(v,V_{G})\ge 3 (d_{G_H}(v',V_{G_H})-d_{G_H+e'}(v',V_{G_H})) -3\ge 3\frac{2^d}{4}-3\ge\frac{n}{6}-3.
	\end{align*}
	Thus, edge additions are not beneficial for the incident agents if $\alpha\le\frac{n}{6}-3$ and we conclude that the constructed network is stable for $1\le\alpha\le\frac{n}{6}-3$. We also see that the number of edges $m$ is in $\Theta(n\log n)$ and the distance $d(V_G,V_G)$ is in $\Theta(n^2\log n)$. Since the social optimum for $1\le\alpha\le\frac{n}{6}-3$ is the path $P_n$, we get for the Price of Anarchy:
	\begin{equation*}
		\POA_n\ge\frac{SW(P_n)}{SW(G)}=\frac{\alpha (n-1)+\Theta(n^3)}{\alpha\Theta(n\log n)+\Theta(n^2\log n)}\in\Omega\left(\frac{n}{\log n}\right).\qedhere
	\end{equation*}
\end{proof}
\begin{proof}[Proof of (\ref{thm:com:poa:alpha_smaller_n_2})]
	We construct a path of cliques in the following way. Let $2\le d\le \frac{n-6}{2}$ be some even number and $c=\left\lfloor\frac{n-6}{d}\right\rfloor$. Furthermore, let $K_1,\dots,K_d$ be $d$ cliques consisting of $c$ or $c+1$ nodes, such that $\sum_{i=1}^d |V_{K_i}|=n-6$ and $\sum_{i=1}^{\frac{d}{2}}|V_{K_i}|=\left\lceil\frac{n-6}{2}\right\rceil$ and $\sum_{i=\frac{d}{2}+1}^{d}|V_{K_i}|=\left\lfloor\frac{n-6}{2}\right\rfloor$, and $v_1,v_1',v_2,v_2',v_3,v_3'$ be 6 more nodes. We now define the network $G$ as
	\begin{align*}
		V_G&\coloneqq\bigcup_{i=1}^d V_{K_i} \cup \{v_1,v_1',v_2,v_2',v_3,v_3'\},\\
		E_G&\coloneqq\bigcup_{i=1}^d E_{K_i} \cup\{\{v_1,v_1'\},\{v_2,v_2'\},\{v_3,v_3'\}\}\cup\{\{v,x\}\mid v\in\{v_2,v_2'\}\wedge x\in\{v_1,v_1',v_3,v_3'\}\}\\*
		&\hspace{0.6cm}\cup \bigcup_{i=1}^{\frac{d}{2}-1}\{\{v,x\}\mid v\in K_i\wedge x\in K_{i+1}\}\cup\bigcup_{v\in K_\frac{d}{2}}\{\{v,v_1\},\{v,v_1'\}\}\\*
		&\hspace{0.6cm}\cup\bigcup_{i=\frac{d}{2}+1}^{d-1}\{\{v,x\}\mid v\in K_i\wedge x\in K_{i+1}\}\cup\bigcup_{v\in K_{\frac{d}{2}+1}}\{\{v,v_3\},\{v,v_3'\}\}.
	\end{align*}
	\Cref{fig:path_of_cliques} shows a sketch of $G$.
	\begin{figure}[h]
		\centering
		\includegraphics[width=0.9\textwidth]{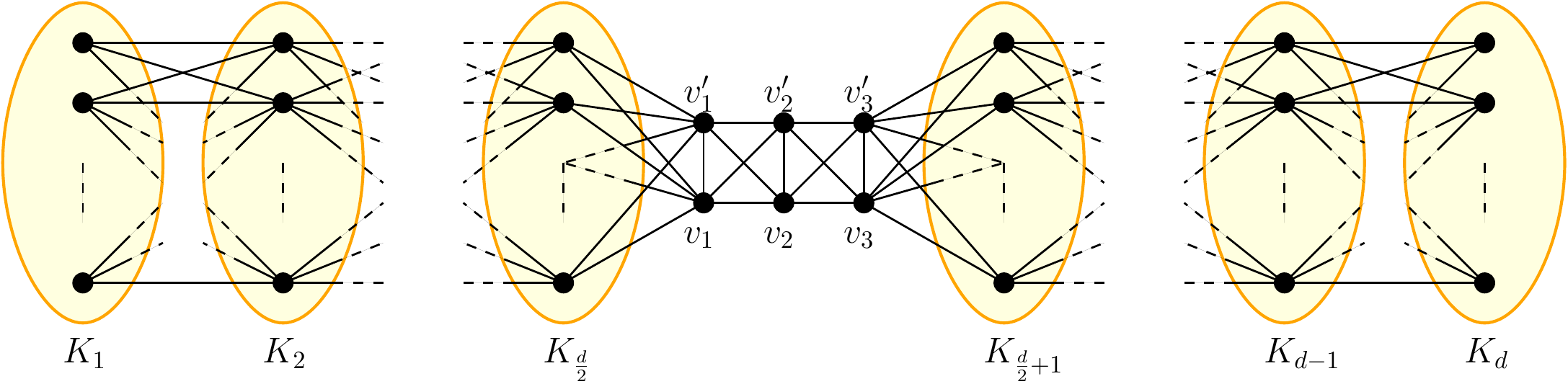}
		\caption{The figure shows clique network for a path consisting of $d$ cliques $K_1,\dots,K_d$ highlighted in yellow with 6 additional nodes in the middle. Note, that edges inside the cliques are not shown in this figure.}
		\label{fig:path_of_cliques}
	\end{figure}
	
	We observe that $G$ is stable against edge removal because of \Cref{clique_network}, since $\alpha>\frac{n}{6}-3\ge 1$ and $G$ being a clique network for the path. We now show that adding an edge is also not an improving move.
	
	We quickly see that, for a node $v$, adding an edge into the 2-neighborhood always shortens distances the least. We therefore only have to consider these edges. We observe that adding an edge between $v_1$ and $v_3$ (or because of symmetry, $v_1'$ or $v_3'$) decreases distances from $v_1$ to $v_3$ and all nodes in $K_{\frac{d}{2}+1},\dots,K_d$ and decreases distances from $v_3$ to $v_1$ and all nodes in $K_1,\dots,K_\frac{d}{2}$ by exactly 1. This means, we get
	\begin{align*}
		d_G(v_1,V)-d_{G+\{v_1,v_3\}}(v_1,V)&=1+\sum_{i=\frac{d}{2}+1}^d |V_{K_i}|=\left\lfloor\frac{n}{2}\right\rfloor-2\quad\text{and}\\
		d_G(v_3,V)-d_{G+\{v_1,v_3\}}(v_3,V)&=1+\sum_{i=1}^{\frac{d}{2}} |V_{K_i}|=\left\lceil\frac{n}{2}\right\rceil-2.
	\end{align*}
	
	Every other edge we could add decreases distances to all the cliques of one side of the path, resulting in larger distance decreases. This means that adding an edge is not an improving move for $\alpha\le\left\lfloor\frac{n}{2}\right\rfloor-2$. Therefore, $G$ is pairwise stable for the desired values of $\alpha$.
	
	We now evaluate the number of edges. We have $|E_{K_i}|\in\Theta(c^2)$. The number of edges between two neighboring cliques is also in $\Theta(c^2)$. This means that the total number of edges is $|E_G|\in\Theta(dc^2)$. We also see that the diameter of $G$ is $d$ and therefore $d_G(V,V)\in\mathcal{O}(dn^2)$. If we choose $d=2\left\lfloor\frac{\sqrt n}{2}\right\rfloor$, we have $d\in\Theta(\sqrt n)$ and $c\in \Theta(\sqrt n)$. Since $\alpha\in\Theta(n)$, we get for the Price of Anarchy
	\begin{equation*}
		\POA_n\ge\frac{\alpha(n-1)+\Theta(n^3)}{\alpha \Theta(dc^2)+\mathcal{O}(dn^2)}\in\Omega\left(\frac{n^3}{n^\frac{5}{2}+n^\frac{5}{2}}\right)=\Omega\left(\sqrt n\right).\qedhere
	\end{equation*}
\end{proof}
\begin{proof}[Proof of (\ref{thm:com:poa:largeAlpha})]
	This follows directly from the clique being socially optimal (see \Cref{thm:com:opt}) and the only pairwise stable network (see \Cref{thm:com:stable}).
\end{proof}
We have established that the Price of Anarchy is relatively high for $\alpha\le\frac{n}{2}$. It even meets the trivial upper bound of $\mathcal{O}(n)$ for a large range of $\alpha$. In contrast to the high $\POA$ values, we observe that the Price of Stability is independent of $\alpha$ and best possible.
\begin{restatable}{thm}{completePos}
	\label{complete:PoS}
	The Price of Stability is 1.
\end{restatable}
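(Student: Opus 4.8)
The plan is to prove that, for the complete host network $K_n$, some social optimum is itself pairwise stable. Since $\POS_n \ge 1$ always holds (the social optimum maximises social welfare, so it dominates every stable network in the quotient), exhibiting one optimal stable state immediately gives $\POS_n = 1$. By \Cref{thm:com:opt} the social optimum is the path $P_n$ when $\alpha < \frac{n}{3}$, the clique $K_n$ when $\alpha > \frac{n}{3}$, and either one when $\alpha = \frac{n}{3}$; hence it suffices to verify, in each of these regimes, that an optimal network lies in $\mathcal{S}(K_n)$.

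First I would handle $\alpha \le \frac{n}{3}$, where $P_n$ is a social optimum. If $\alpha \le 1$, the path is a tree and therefore pairwise stable by \Cref{thm:com:stable}(\ref{thm:com:stable:tree}). If $1 \le \alpha \le \frac{n}{3}$, I invoke \Cref{thm:com:stable}(\ref{thm:com:stable:path}), which gives stability of $P_n$ for all $\alpha \le \frac{n-1}{2}$; this covers the range because $\frac{n}{3} \le \frac{n-1}{2}$ for every $n \ge 3$. So an optimal stable network exists for all $\alpha \le \frac{n}{3}$.

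Next I would treat $\alpha \ge \frac{n}{3}$, where $K_n$ is a social optimum. Since $n \ge 3$ implies $\frac{n}{3} \ge 1$, we have $\alpha \ge 1$, and then $K_n$ is pairwise stable by \Cref{thm:com:stable}(\ref{thm:com:stable:clique}). The boundary value $\alpha = \frac{n}{3}$ is covered by either argument, since both the path and the clique are then optimal and stable. Combining the two cases, a social optimum is always pairwise stable, so $\POS_n = 1$; the trivial cases $n \le 2$ need no argument because $P_n = K_n$ there.

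No genuine obstacle arises: the proof is purely an assembly of the earlier structural results. The only thing to watch is that the two stability windows --- $\alpha \le \frac{n-1}{2}$ for the path and $\alpha \ge 1$ for the clique --- must jointly straddle the optimality threshold $\frac{n}{3}$, which they do comfortably for all $n \ge 3$. If one wanted a fully self-contained argument one could instead recompute the relevant edge-addition and edge-removal deltas directly on $P_n$ and $K_n$, but reusing \Cref{thm:com:stable} is cleaner.
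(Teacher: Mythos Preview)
Your proposal is correct and follows exactly the paper's approach: invoke \Cref{thm:com:opt} to identify the optimum (path for $\alpha\le\frac{n}{3}$, clique for $\alpha\ge\frac{n}{3}$) and \Cref{thm:com:stable} to certify its stability in each regime. The paper compresses this into a single sentence, but the substance is identical.
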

\begin{proof}
	This follows directly from the path being stable and socially optimal for $\alpha\le\frac{n}{3}$ and the clique being stable and socially optimal for $\alpha\ge\frac{n}{3}$ (see \Cref{thm:com:opt} and \Cref{thm:com:stable}).
\end{proof}

\noindent From an efficiency point-of-view, the huge gap between the PoA and the PoS suggests that having an outside force assigning an initial strategy to all players is beneficial. That way, stability and optimal social welfare can be guaranteed. Without such coordination, the players could end up in socially bad equilibria or in a cyclic sequence of improving moves.


\section{General Host Networks}
\label{sec:general}
We now analyze the SDNCG on arbitrary connected but not necessarily complete host networks. First, we analyze socially optimal networks and then we investigate the existence of pairwise stable networks. We prove our main result that establishes equilibrium existence on any connected host network for a wide parameter range of $\alpha$. Finally, we derive bounds on the Price of Anarchy and the Price of Stability.
Additionally, we show that computing the social optimum and the Maximum Routing-Cost Spanning Tree is NP-hard while computing a Swap-Maximal Routing-Cost Spanning Tree can be done in polynomial time.

\subsection{Stable and Optimal Networks}
\label{sec:general:stableOptimal}
While for the $K$-SDNCG, we only have two possible social optima (dependent on $\alpha$), this gets more complicated for general host networks. Of course, if they exist on general host networks, then the optima for the $K$-SDNCG are still the most efficient networks. Intuitively, if the host network does not contain a Hamilton path, then the social optimum should be a tree if $\alpha$ is small enough. Since all trees have the same number of edges, the social welfare of a tree is only influenced by the total distances. Remember that the spanning tree maximizing the total distances is by definition the Maximum Routing-Cost Spanning Tree (MRCST). We now show, that this intuition is indeed correct.
\begin{restatable}[Social Optimum]{thm}{generalOptimum}\label{thm:gen:opt}
	Let $H$ be a connected host network containing $n$ nodes.
	\begin{bracketenumerate}
		\item If $H$ contains a Hamilton path, then this path is the social optimum for $\alpha\le\frac{n}{3}$. The Hamilton path is the unique social optimum if $\alpha<\frac{n}{3}$.\label{thm:gen:opt:hamilton}
		\item For $\alpha\le1$, the MRCST of $H$ is socially optimal.\label{thm:gen:opt:maxrcst}
		\item For $\alpha>\frac{1}{24}(n-2)n(n+2)$, $H$ itself is the unique social optimum.\label{thm:gen:opt:large_alpha}
	\end{bracketenumerate}

\end{restatable}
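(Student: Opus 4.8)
The plan is to prove the three parts largely independently, each by comparing the social welfare $\SW(G) = 2\alpha|E_G| + d_G(V,V)$ of a candidate network against that of competitors obtained by adding or removing edges.

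\smallskip
\noindent\emph{Part (\ref{thm:gen:opt:hamilton}).} Suppose $H$ contains a Hamilton path $P_n$. Since $P_n$ is a spanning subnetwork of $H$, it is a valid state. Any state $G \le H$ with $|E_G| = n-1$ edges is a spanning tree, and by definition of the MRCST we have $d_G(V,V) \le d_{P_n}(V,V)$; in fact $P_n$ is \emph{the} unique spanning tree maximizing routing cost among all $n$-node trees (this is the classical fact that the path maximizes the Wiener index over trees), so $P_n$ strictly dominates every other spanning subnetwork with $n-1$ edges. For a state $G$ with $|E_G| = n-1+k$ edges, $k\ge 1$, I compare $\SW(G)$ to $\SW(P_n)$ directly: going from $P_n$ to $G$ we gain $2\alpha k$ in edge utility but, using the analysis already carried out in the proof of \Cref{thm:com:opt} (the PathClique argument and the bound $\SW(P_n) \ge \SW(K_n)$ for $\alpha \le \tfrac n3$, together with the fact that among all $n$-node, $m$-edge networks the PathClique maximizes routing cost), the loss in routing cost outweighs this gain whenever $\alpha < \tfrac n3$, and ties it exactly at $\alpha = \tfrac n3$ only when $G = K_n$ (so $P_n$ remains \emph{a} optimum but not the unique one). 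Concretely, $\SW(K_n) \le \SW(P_n)$ for $\alpha\le\tfrac n3$ bounds the densest competitor, and intermediate densities are handled by the monotonicity of the $m\mapsto$ (max routing cost) tradeoff established via Šoltés's characterization. This yields uniqueness for $\alpha < \tfrac n3$ and the clique/path tie at equality.

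\smallskip
\noindent\emph{Part (\ref{thm:gen:opt:maxrcst}).} For $\alpha \le 1$, I claim the optimum is a spanning tree. Given any state $G$ containing a cycle, removing one cycle edge keeps it connected, loses $2\alpha \le 2$ in edge utility, but strictly increases $d_G(V,V)$ by at least $2$ (each endpoint's distance to the other rises by at least $1$, contributing at least $2$ to the symmetric sum $d_G(V,V)$). Hence $\SW$ does not decrease, and iterating reaches a spanning tree with $\SW$ at least as large. Among spanning trees, all have the same edge utility $2\alpha(n-1)$, so maximizing $\SW$ is exactly maximizing routing cost, i.e., the MRCST is optimal. (Note this is an existence statement, not uniqueness, since ties between $\alpha=1$ edge-removals are possible.)

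\smallskip
\noindent\emph{Part (\ref{thm:gen:opt:large_alpha}).} For very large $\alpha$ I show $H$ itself (the densest possible state) is the unique optimum. If $G \lneq H$ is a proper spanning subnetwork, adding any missing edge $e \in E_H \setminus E_G$ gains $2\alpha$ in edge utility and decreases routing cost by at most the maximum possible total distance decrease from adding a single edge. I bound that decrease crudely: $d_G(V,V) \le n \cdot \binom{n}{2} \cdot 2$... more carefully, $d_G(V,V)$ is at most the routing cost of the path $P_n$, which is $\tfrac13(n-1)n(n+1)$, and the threshold in the statement, $N_3 = \tfrac{1}{24}(n-2)n(n+2)$, is exactly (half of) the maximal achievable routing-cost gap, so any edge addition that helps edge utility by $2\alpha > 2N_3$ strictly increases $\SW$. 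Thus no proper subnetwork can be optimal, and $H$ is the unique optimum. The main obstacle here is pinning down the sharp constant: I need the exact maximum over all connected $n$-node $G$ and all $e\in\binom V2\setminus E_G$ of $d_G(V,V)-d_{G+e}(V,V)$, and to verify it equals $N_3$; I expect this to follow by observing the worst case is adding the "most central" chord to $P_n$ and summing the resulting distance savings, matching the closed form $\tfrac{1}{24}(n-2)n(n+2)$.

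\smallskip
\noindent The hardest part overall is Part (\ref{thm:gen:opt:hamilton}): ruling out \emph{all} denser competitors (not just $K_n$) requires the monotone tradeoff between edge count and maximum attainable routing cost, which rests on Šoltés's classification of routing-cost-maximizers as PathCliqueCliques; everything else is a short exchange argument.
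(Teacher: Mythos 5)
Your plan follows the paper's proof for all three parts. For Part~(\ref{thm:gen:opt:maxrcst}) the exchange argument (remove a cycle edge: lose $2\alpha\le 2$ edge utility, gain at least $2$ in $d_G(V,V)$, then among spanning trees the MRCST maximizes routing cost) is exactly the paper's. For Part~(\ref{thm:gen:opt:large_alpha}) the paper also bounds the maximal distance decrease achievable by a single edge addition and shows the worst case is closing a Hamilton path into a cycle, so the approach is the same. For Part~(\ref{thm:gen:opt:hamilton}) the paper simply observes it follows directly from \Cref{thm:com:opt}: every state of $H$ is also a state of $K_n$, and $P_n$ is the (unique for $\alpha<\frac{n}{3}$) maximizer of $\SW$ over all states of $K_n$; since $P_n$ is a state of $H$, it maximizes over $H$'s states. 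Your multi-case route (trees via MRCST uniqueness, then denser competitors via the Šoltés classification and ``monotonicity of the $m\mapsto$ max routing cost tradeoff'') reaches the same place but is longer and leaves that monotonicity claim unspecified; the direct reduction is cleaner and is what the paper uses.

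Two imprecisions in your Part~(\ref{thm:gen:opt:large_alpha}) are worth fixing. First, the extremal edge is not the ``most central'' chord of $P_n$ but the edge joining its two \emph{endpoints}, turning $P_n$ into $C_n$; a near-middle chord shortens almost nothing. Second, your text alternates between calling $N_3$ the full gap and half the gap --- the routing-cost decrease from $P_n$ to $C_n$ is $2N_3$, and since adding an edge contributes $2\alpha$ to $\SW$, the threshold for $\alpha$ is half of that, namely $N_3$. Finally, when you work out the closed form, be careful about parity: the decrease is $\frac{1}{12}(n-1)n(n+1)$ when $n$ is odd and $\frac{1}{12}(n-2)n(n+2)$ when $n$ is even, and the odd-$n$ value is the \emph{larger} of the two (the paper has the inequality reversed in one line of its own proof), so a fully careful statement would take the odd-$n$ threshold $\frac{1}{24}(n-1)n(n+1)$ to cover both parities.
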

\begin{proof}[Proof of (\ref{thm:gen:opt:hamilton})]
	This follows directly from \Cref{thm:com:opt}.
\end{proof}
\begin{proof}[Proof of (\ref{thm:gen:opt:maxrcst})]
	Let $T_{opt}$ be the MRCST of $H$ and $G$ some state of $H$, that is, a spanning subnetwork of $H$. Furthermore, let $T$ be a spanning tree of $G$. Then $\SW(G)\le\SW(T)$, since we can construct $G$ by adding edges to $T$ and for every edge added, the social welfare goes up by $\alpha\le1$ and down by at least 1, because of distances decreasing. Since $T_{opt}$ maximizes the total distances, we have $\SW(T)\le\SW(T_{opt})$ and therefore $\SW(G)\le\SW(T_{opt})$.
\end{proof}
\begin{proof}[Proof of (\ref{thm:gen:opt:large_alpha})]
	We show that any edge added to any network shortens the total distances by at most $\frac{1}{24}(n-1)n(n+1)$. It is easy to see that adding an edge between the two endpoints of a Hamilton path maximizes the distance decrease. This means, if $\alpha$ is larger than that, it is always socially better to add more edges to the network, resulting in $H$ itself to be optimal.
	
	We already know the social welfare of a Hamilton path $P_n$ from \Cref{sec:complete}. After adding an edge between the two endpoints of $P_n$, we get a cycle $C_n$. In this cycle, we see that for each node there are exactly two nodes for every possible distance $1\le d<\frac{n}{2}$ and an additional node at distance $\frac{n}{2}$, if $n$ is even. This yields
	\begin{align*}
		\SW(P_n)&=2\alpha(n-1)+\frac{1}{3}(n-1)n(n+1),\\
		\SW(C_n)&=2\alpha n+n\sum_{i=1}^{\frac{n-1}{2}}2i=2\alpha n+\frac{1}{4}(n-1)n(n+1)&&\text{for }n\text{ odd},\\
		\SW(C_n)&=2\alpha n+n\left(\sum_{i=1}^{\frac{n}{2}-1}2i+\frac{n}{2}\right)=2\alpha n+\frac{1}{4}(n-2)n^2+\frac{n^2}{2}&&\text{for }n\text{ even}.
	\end{align*}
	
	We then have $\SW(C_n)>\SW(P_n)$ if and only if
	\begin{align*}
		\alpha&>\frac{1}{24}(n-1)n(n+1)&&\text{for }n\text{ odd,}\\*
		\alpha&>\frac{1}{24}(n-2)n(n+2)>\frac{1}{24}(n-1)n(n+1)&&\text{for }n\text{ even.}
	\end{align*}
	This shows the claim.
\end{proof}
\noindent Contrasting statement $(3)$ from \Cref{thm:gen:opt}, we observe that for $\alpha<\frac{1}{24}(n-2)n(n+2)$, the host network is not necessarily the social optimum. Consider the host network $H\coloneqq C_n$ for even $n$, i.e., an even cycle with $n$ nodes. In the proof of (\ref{thm:gen:opt:large_alpha}), we see that $\SW(P_n)>\SW(C_n)$, implying that $C_n$ cannot be the social optimum. In fact, in this example, $P_n$ is the optimum since there are only two possible states (up to isomorphism): $P_n$ and $C_n$ itself. This is in stark contrast to the $K$-SDNCG, where the host network is optimal for $\alpha\ge\frac{n}{3}$.

Since finding a MRCST is NP-hard \cite{CAMERINI83}, finding the social optimum for a given host network must also be NP-hard.
\begin{restatable}[Computational Hardness]{thm}{generalNPHard}
	\label{thm:gen:np-hard}
	Computing the social optimum for a connected host network $H$ is NP-hard.
\end{restatable}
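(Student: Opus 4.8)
The plan is a Turing reduction from the problem of computing a Maximum Routing-Cost Spanning Tree, which is NP-hard by Camerini et al.~\cite{CAMERINI83}; the one-line idea is essentially the sentence preceding the statement, and what remains is to make the bookkeeping precise. The crucial ingredient is \Cref{thm:gen:opt}~(\ref{thm:gen:opt:maxrcst}): for $\alpha\le 1$ the MRCST $T_{opt}$ of $H$ is a social optimum. First I would record the optimal \emph{value}: by the argument proving (\ref{thm:gen:opt:maxrcst}) we have $\SW(G)\le\SW(T_{opt})$ for every state $G$, and since every spanning tree has exactly $n-1$ edges, $\SW(\OPT_H)=\SW(T_{opt})=2\alpha(n-1)+R_H$, where $R_H$ denotes the maximum of $d_T(V,V)$ over all spanning trees $T$ of $H$.

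Given this, the reduction is immediate. On an instance $H$ of the MRCST problem (a connected graph on $n$ nodes), set $\alpha\coloneqq 1$ (or any fixed rational in $(0,1]$) and run a hypothetical polynomial-time algorithm for the social optimum on host network $H$. From the returned optimal welfare $\SW(\OPT_H)$ we recover $R_H=\SW(\OPT_H)-2\alpha(n-1)$, i.e., the maximum routing cost achievable by a spanning tree of $H$, which answers the optimization (and hence the decision) version of MRCST. If the algorithm instead returns an optimal \emph{network}, then choosing $\alpha<1$ forces this network to be a tree — any edge lying on a cycle could be deleted to gain $2(1-\alpha)>0$ in welfare while keeping the network connected — so it is itself an MRCST. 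Either way a polynomial-time algorithm for the social optimum yields one for the MRCST, establishing NP-hardness.

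The only subtlety worth a remark concerns the role of $\alpha$: since for large $\alpha$ the social optimum is simply $H$ by \Cref{thm:gen:opt}~(\ref{thm:gen:opt:large_alpha}), the statement is meaningful precisely when $\alpha$ is not fixed to such a value, and the reduction above works verbatim as long as the model (or the instance) admits some $\alpha\le 1$. Beyond this there is no real obstacle — all quantitative work is already contained in \Cref{thm:gen:opt}, and the hardness is inherited wholesale from the MRCST problem.
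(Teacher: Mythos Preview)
Your proposal is correct and follows essentially the same route as the paper: reduce from the NP-hard MRCST problem via \Cref{thm:gen:opt}~(\ref{thm:gen:opt:maxrcst}), observing that for small $\alpha$ the social optimum coincides with (or has the same value as) an MRCST. The paper's proof is a one-liner to this effect; your version simply makes the bookkeeping explicit by separating the value and network cases.
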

\begin{proof}
	This follows directly from the MRCST being hard to compute \cite{CAMERINI83} and the unique social optimum for $\alpha<1$ (see \Cref{thm:gen:opt}).
\end{proof}

\noindent Next, we discuss stable networks. In contrast to the $K$-SDCNG, it is not obvious that pairwise stable networks are guaranteed to exist for any connected host network. However, we can directly transfer the result that spanning trees are stable for small~$\alpha$. For large $\alpha$, similar to the clique being the unique stable network for $\alpha>\frac{n}{2}$ for complete host networks, as shown in \Cref{thm:com:stable}, we show that the whole host network is pairwise stable. However, in contrast to the $K$-SDNCG, this is true only for much larger values of $\alpha$.
\begin{restatable}[Stable Networks]{thm}{generalStable}\label{thm:gen:stable}
	Let $H$ be a connected host network containing $n$ nodes.
	\begin{bracketenumerate}
		\item For $\alpha\le 1$, every spanning tree of $H$ is pairwise stable. For $\alpha<1$, spanning trees are the only pairwise stable networks.\label{thm:gen:stable:tree}
		\item For $\alpha>\frac{1}{4}(n-1)^2$, $H$ is the only pairwise stable network.\label{thm:gen:stable:large_alpha}
	\end{bracketenumerate}

\end{restatable}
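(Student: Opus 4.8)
For part~(\ref{thm:gen:stable:tree}), the plan is to reuse verbatim the argument of \Cref{thm:com:stable}(\ref{thm:com:stable:tree}). A spanning tree $T$ of $H$ admits no edge removal (it would disconnect the network), so only edge additions are possible; adding any edge $e \in E_H \setminus E_T$ shortens the distance utility of both endpoints by at least $1$, which is not an improving move when $\alpha \le 1$. Conversely, for $\alpha < 1$, if a stable state $G \le H$ contains a cycle, removing any cycle edge keeps $G$ connected and strictly increases the distance utility of both endpoints by at least $1 > \alpha$, so $G$ is not stable; hence every stable state is a spanning tree. Nothing here depends on $H$ being complete, so the transfer is immediate.

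For part~(\ref{thm:gen:stable:large_alpha}), the first step is to show $H$ itself is stable: in $H$ no edge can be added, and removing any edge $e = \{x,y\}$ changes only the distance between $x$ and $y$ (every other shortest path is unaffected since $H$ is the full host network), increasing it by at least $1$; as $\alpha > 0$ this is never improving, so $H$ is stable regardless of the size of $\alpha$. The substance is the uniqueness claim: I would take an arbitrary stable state $G \le H$ with $G \ne H$, fix a missing edge $e = \{x,y\} \in E_H \setminus E_G$, and argue that for $\alpha$ large, both $x$ and $y$ strictly benefit from adding $e$, contradicting stability. The key quantity is the distance decrease $d_G(x,V) - d_{G+e}(x,V)$ (and symmetrically for $y$): I must show this is at least $\alpha$, i.e. more than $\frac{1}{4}(n-1)^2$. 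The natural route is to show that \emph{any} edge addition to \emph{any} $n$-node connected network decreases some endpoint's distance utility by at most $\frac14(n-1)^2$ is the \emph{wrong} direction — instead I need a \emph{lower} bound on the decrease when the edge is genuinely missing. Here is the idea: since $e \notin E_G$ we have $d_G(x,y) \ge 2$, so adding $e$ strictly shortens $d_G(x,y)$; more usefully, for the endpoint $x$, adding $e$ gives $d_{G+e}(x,z) \le d_G(y,z) + 1$ for every $z$, so the savings at $x$ is at least $\sum_z \bigl(d_G(x,z) - d_G(y,z) - 1\bigr)^+$. I would combine this with a counting/averaging argument over the vertex set: by summing the per-vertex gains and using that the "gain from $x$'s side" plus the "gain from $y$'s side" together capture roughly the surplus of how unbalanced the distance profiles of $x$ and $y$ are, one of the two sides must save at least half of a total that is $\Theta(n^2)$ in the worst case; balancing the worst case over all host networks yields the threshold $\frac14(n-1)^2$, with the extremal configuration being (a subdivision-type network where) $e$ connects the two endpoints of a long path so that the path/cycle computation from \Cref{thm:gen:opt}(\ref{thm:gen:opt:large_alpha}) governs the bound.

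The main obstacle is making the worst-case bound on the per-agent distance decrease tight enough to land exactly at $\frac14(n-1)^2$ rather than at the larger cubic-in-$n$ quantity $N_3 = \frac{1}{24}(n-2)n(n+2)$ that governs \emph{social} optimality in \Cref{thm:gen:opt}. The point is that stability only requires \emph{one} of the two endpoints to fail to benefit, so we get to take a minimum over the two sides, and moreover the relevant decrease for a single agent $x$ is only over the nodes that actually get closer to $x$ — a "half of the path" effect — which is what shaves the cubic bound down to a quadratic one. Concretely I expect the extremal analysis to reduce to: among all ways the state $G$ could look locally around a missing edge $\{x,y\}$, the configuration minimizing $\min\{\text{savings at }x,\text{ savings at }y\}$ is an (almost) balanced split, giving a product-of-halves term maximized at $(n-1)^2/4$. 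Pinning this down — verifying the min-over-endpoints, the restriction to closer nodes, and that no host network can force both endpoints to have small savings simultaneously beyond this bound — is the crux; the rest is the routine path/cycle arithmetic already carried out in the proof of \Cref{thm:gen:opt}(\ref{thm:gen:opt:large_alpha}).
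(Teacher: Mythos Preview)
Your treatment of part~(\ref{thm:gen:stable:tree}) is fine and matches the paper exactly.

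Part~(\ref{thm:gen:stable:large_alpha}), however, contains two genuine errors.

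First, your argument that $H$ is stable ``regardless of the size of $\alpha$'' is false. You claim that removing an edge $\{x,y\}$ from $H$ ``changes only the distance between $x$ and $y$ (every other shortest path is unaffected since $H$ is the full host network)''. This holds only when $H$ is complete; for general $H$ it is simply wrong. Take $H=C_n$: removing any edge turns the cycle into a path and changes almost every pairwise distance. Indeed the paper notes just after this theorem that $H=C_n$ with odd $n$ and $\alpha<\tfrac14(n-1)^2$ is \emph{not} pairwise stable, directly contradicting your claim.

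Second, and more seriously, you have the direction of the key inequality reversed. You want to show that both $x$ and $y$ strictly benefit from adding the missing edge $e$. The utility change for $x$ is $\alpha - \bigl(d_G(x,V)-d_{G+e}(x,V)\bigr)$, so $x$ benefits if and only if the distance decrease is \emph{less} than $\alpha$. Hence what is needed is an \emph{upper} bound on the distance decrease, not the lower bound you repeatedly say you are after. Your dismissal of ``the natural route'' is dismissing exactly the correct route, and the averaging/min-over-endpoints machinery you sketch is aimed at the wrong target.

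The paper's proof is accordingly much shorter than your plan. It simply observes that, over \emph{all} connected $n$-vertex networks and \emph{all} edge additions, the distance decrease at an endpoint is maximized when $G$ is a path and $e$ joins its two ends; a direct computation gives this maximum as $\tfrac14(n-1)^2$ (for $n$ odd; slightly less for $n$ even). Since $\alpha$ exceeds this, every possible edge addition strictly benefits both endpoints, so any $G\neq H$ is unstable. The same bound, read in reverse, shows that removing any edge from any state increases distances by at most $\tfrac14(n-1)^2<\alpha$, so edge removal is never improving; this is what actually proves $H$ is stable, and it genuinely uses $\alpha>\tfrac14(n-1)^2$.
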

\begin{proof}[Proof of (\ref{thm:gen:stable:tree})]
	The proof is exactly the same as for (\ref{thm:com:stable:tree}) of \Cref{thm:com:stable}.
\end{proof}
\begin{proof}[Proof of (\ref{thm:gen:stable:large_alpha})]
	Consider a network $G$ on a host network $H=(V,E_H)$. The largest distance decrease a node $v\in V$ can suffer when forming an edge $e\in E_H$ is when $G$ is a path and $v$ one of its endpoints connecting to the other endpoint $x\in V$. This move decreases the distances by
	\begin{align*}
		\Delta d\coloneqq {}&d_{G}(v,V)-d_{G+e}(v,V)=d_G(x,V)-d_{G+e}(x,V)\\*
		={}&\sum_{i=1}^{\frac{n-1}{2}}(n-2i)=\frac{1}{4}(n-1)^2&&\text{for } n \text{ odd,}\\*
		\Delta d={}&\sum_{i=1}^{\frac{n}{2}-1}(n-2i)=\frac{1}{4}(n-2)n<\frac{1}{4}(n-1)^2&&\text{for }n\text{ even.}
	\end{align*}
	
	Thus, since $\alpha>\frac{1}{4}(n-1)^2$, forming edges is always beneficial for the incident nodes. Similarly, edge removal always decreases the utility of the incident nodes. Therefore, the host network $H$ is the only pairwise stable network.
\end{proof}

\noindent Contrasting statement (2) of \Cref{thm:gen:stable}, using $H\coloneqq C_n$ for odd $n$ and $\alpha<\frac{1}{4}(n-1)^2$ shows that the host network is not necessarily pairwise stable. 
This example also shows that the optimum is not necessarily stable: For $\alpha\ge\frac{1}{4}(n-1)^2$ and $H\coloneqq C_n$ as the host network, $C_n$ is the only pairwise stable network but it is not the optimum for $\alpha<\frac{1}{24}(n-2)n(n+2)$. This is another significant difference to the $K$-SDNCG.

Now that we characterized stable networks for extreme $\alpha$-values, the question remains whether stable states also exist for in-between values. For the $K$-SDNCG, the path is stable up to $\alpha<\frac{n-1}{2}$. This is, of course, still true for non-complete host networks if they contain a Hamilton path. Since a Hamilton path (if it exists) is the MRCST, it is natural to suspect that the MRCST properties at least partially ensure stability for some $\alpha \geq 1$. However, even if true, the MRCST is still NP-hard to compute. Hence, in quest of an efficiently computable stable network, we introduce a less strict variant of MRCSTs which is only locally optimal: Swap-Maximal Routing-Cost Spanning Trees. Remember, a SMRCST is a spanning tree whose summed distances cannot be increased by removing one edge and adding another edge.

As our main result, we now show that SMRCSTs (and therefore MRCSTs, too) are indeed stable beyond $\alpha\le 1$. Note, that for the inverse model of the NCG on an arbitrary host network~\cite{demaine2009}, so far no equilibrium existence statement is known.
\begin{restatable}{thm}{hugeProof}\label{general:stable:huge_proof}
	Let $H$ be a connected host network containing $n$ nodes. Then for $\alpha\le \frac{n}{3}$, any Swap-Maximal Routing-Cost Spanning Tree is pairwise stable.
\end{restatable}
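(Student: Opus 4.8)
The plan is to check the two pairwise–stability conditions in turn. Stability against edge removal is immediate: $T$ is a tree, so every edge is a bridge and deleting it would disconnect the network, which is never an available move. For edge addition fix a non‑tree edge $e=\{u,v\}\in E_H\setminus E_T$; adding it raises the edge utility of each endpoint by $\alpha$ and lowers its distance utility by $\Delta_u(e):=d_T(u,V)-d_{T+e}(u,V)$, respectively $\Delta_v(e)$. Thus $e$ is not an improving move iff $\max\{\Delta_u(e),\Delta_v(e)\}\ge\alpha$, and since $\alpha\le\frac n3$ it suffices to prove the $\alpha$‑free bound $\max\{\Delta_u(e),\Delta_v(e)\}\ge\frac n3$ for every such $e$. (The threshold is sharp: for the ``three‑star path'' $T$ — a path $u$–$w$–$v$ with $\frac n3-1$ extra leaves attached at each of $u,w,v$ — one checks that $T$ is swap‑maximal and $\Delta_u(e)=\Delta_v(e)=\frac n3$ for $e=\{u,v\}$, so $T$ would be unstable for any $\alpha>\frac n3$.)

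Next I would set up the usual decomposition. Let $u=w_0,w_1,\dots,w_k=v$ be the $u$–$v$ path $P$ in $T$; deleting the edges of $P$ from $T$ splits it into subtrees $T_0,\dots,T_k$ with $w_j\in V(T_j)$, write $n_j:=|V(T_j)|$ (so $\sum_j n_j=n$) and $a_i:=n_0+\dots+n_{i-1}$ for the prefix sums. Since adding $e$ — and likewise any swap performed on $P$ — never changes distances inside the $T_j$'s, the distance gain of a vertex depends only on which $T_j$ it lies in, and a vertex of $T_j$ gains exactly $\delta_j:=\sum_{\ell}n_\ell\max\{0,\,2|\ell-j|-k-1\}$; in particular $\Delta_u(e)=\delta_0$ and $\Delta_v(e)=\delta_k$. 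A convenient reduction: $j\mapsto\delta_j$ is a convex function on $\{0,\dots,k\}$ (it is a nonnegative combination of the convex maps $j\mapsto\max\{0,2|\ell-j|-k-1\}$) and it vanishes at $j=\lfloor k/2\rfloor$; hence $\max_j\delta_j=\max\{\delta_0,\delta_k\}$, and if $\max\{\Delta_u(e),\Delta_v(e)\}<\frac n3$ then in fact $\delta_j<\frac n3$ for every $j$.

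It remains to derive a contradiction from ``$\delta_j<\frac n3$ for all $j$'' using swap‑maximality. For each $i\in\{1,\dots,k\}$ the graph $T+e-f_i$ (where $f_i=\{w_{i-1},w_i\}$) is again a spanning tree of $H$, so $\mathrm{RC}(T+e-f_i)\le\mathrm{RC}(T)$. Using the edge‑contribution form $\mathrm{RC}(T)=2\sum_{g\in E_T}|A_g||B_g|$ (removing $g$ splits $V$ into $A_g,B_g$) — only the edges of the cycle $P\cup\{e\}$ change their contribution — one gets the closed form
\[
\tfrac12\bigl(\mathrm{RC}(T+e-f_i)-\mathrm{RC}(T)\bigr)=(n-a_i)\sum_{j<i}(a_i-2a_j)\;+\;a_i\sum_{j>i}(2a_j-a_i-n),
\]
so swap‑maximality is precisely the family of inequalities stating this is $\le 0$ for every $i$ (equivalently: among all ways of cutting the cycle $P\cup\{e\}$ at a single edge, cutting at $e$ maximizes the total weighted distance $\sum_{p\ne q}|a_q-a_p|\,(n-|a_q-a_p|)$). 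I would then express the central inequalities, $i\approx k/2$, in terms of $\Delta_u(e),\Delta_v(e),a_i$ and combine them with the assumption $\delta_j<\frac n3$ to force one of the central swaps to strictly increase $\mathrm{RC}$ — the desired contradiction — with the constant $\frac n3$ emerging as the optimum of the resulting estimate.

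The hard part is this last step. A single central swap only gives a bound essentially of the form $(k-1)\,a_i(n-a_i)<n^2$, which is decisive when the path $P$ is long but useless when $P$ is short. For short $P$ one must additionally exploit that $\delta_0,\delta_k<\frac n3$ already forces the subtrees near the two ends to be small (for instance $n_k(k-1)\le\delta_0<\frac n3$, and similarly $n_0,n_1,n_{k-1},\dots$ are small), and then invoke swap‑maximality once more to rule out the remaining mass accumulating in the middle of $P$ — a ``dumbbell'', which one checks is never swap‑maximal, since the swap at the central edge strictly increases $\mathrm{RC}$. I therefore expect the write‑up to split into a ``long path'' and a ``short path'' case (plus a minor split on the parity of $k$), with the short‑path case carrying the real difficulty.
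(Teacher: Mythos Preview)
Your setup is correct and matches the paper's: the tree has no removable edges, and for an addition $e=\{u,v\}$ you must show $\max\{\Delta_u(e),\Delta_v(e)\}\ge\frac{n}{3}$. The cycle/subtree decomposition, the formula $\delta_j=\sum_\ell n_\ell\max\{0,2|\ell-j|-k-1\}$, and the identification of swap-maximality with the family of inequalities ``cutting at $e$ maximizes routing cost among all cuts of the cycle'' are all right. Your convexity remark (so $\max_j\delta_j=\max\{\delta_0,\delta_k\}$) is a pleasant extra the paper does not use, and your three-star example correctly witnesses sharpness of $n/3$.

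The gap is that you have not actually done the step you yourself call the hard part. The paper's proof shows that this step is where essentially all the work lies, and it does \emph{not} proceed via your proposed long-path/short-path dichotomy. Instead, after the base cases $d=3,4$, it relaxes the swap constraints to only the two central ones (your ``$i\approx k/2$'') and then performs a sequence of explicit mass redistributions $x\to x^{(1)}\to x^{(2)}\to x^{(3)}$ (one more step for even $d$), each of which keeps $\Delta d$ unchanged while preserving the two central constraints. The final distribution has only two free parameters (mass at the two ends, mass at the center, all other $x_i=1$), and the problem reduces to a single explicit quadratic in one variable whose feasible region is shown to force $\Delta d\ge\frac{n}{3}$. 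This is several pages of careful calculation, with separate treatments for $d$ odd and $d$ even.

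Your outline gestures at the right ingredients (central swaps plus the smallness of $n_0,n_k$ forced by $\delta_0,\delta_k<\frac{n}{3}$), but the ``dumbbell is never swap-maximal'' heuristic is far from a proof that the \emph{sharp} constant $\frac{n}{3}$ emerges; the paper's relaxation machinery is precisely what pins down the constant. As it stands your proposal is a correct framework with the decisive estimate missing, and your suggested route to it is different from, and less concrete than, the paper's.
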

\begin{proof}
	Let $G$ with $V_G=V_H$ and $E_G\subseteq E_H$ be a SMRCST. Since $G$ is a tree, we only have to consider edge additions. We show that adding any edge decreases the summed distances for at least one of the edges endpoints by at least $\frac{n}{3}$. This is sufficient to show the claim.
	
	Let $e_1\in E_H\setminus E_G$ be an edge not part of the SMRCST. Adding $e_1$ would form a cycle of length $d\in\N$ consisting of nodes $v_1,\dots, v_d\in V$ with $v_1$ and $v_d$ being the nodes incident to~$e_1$.
	Let $E_C$ be the set of all edges on this cycle. Removing all edges in $E_C$ from $G$ would create $d$ trees rooted in $v_1,\dots,v_d$ respectively. Let furthermore $x_1,\dots,x_d$ be the number of nodes in each of the $d$ trees. See \Cref{app:fig:maxrcst} for an illustration.
	\begin{figure}
		\centering
		\includegraphics[width=0.37\textwidth]{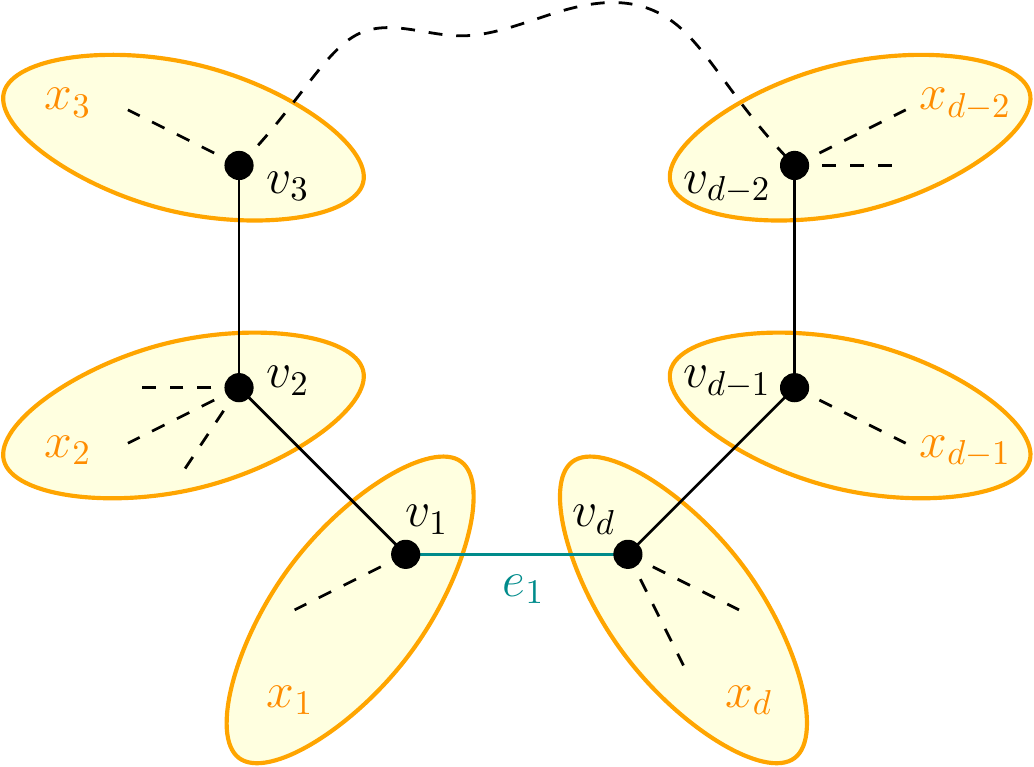}
		\includegraphics[width=0.27\textwidth]{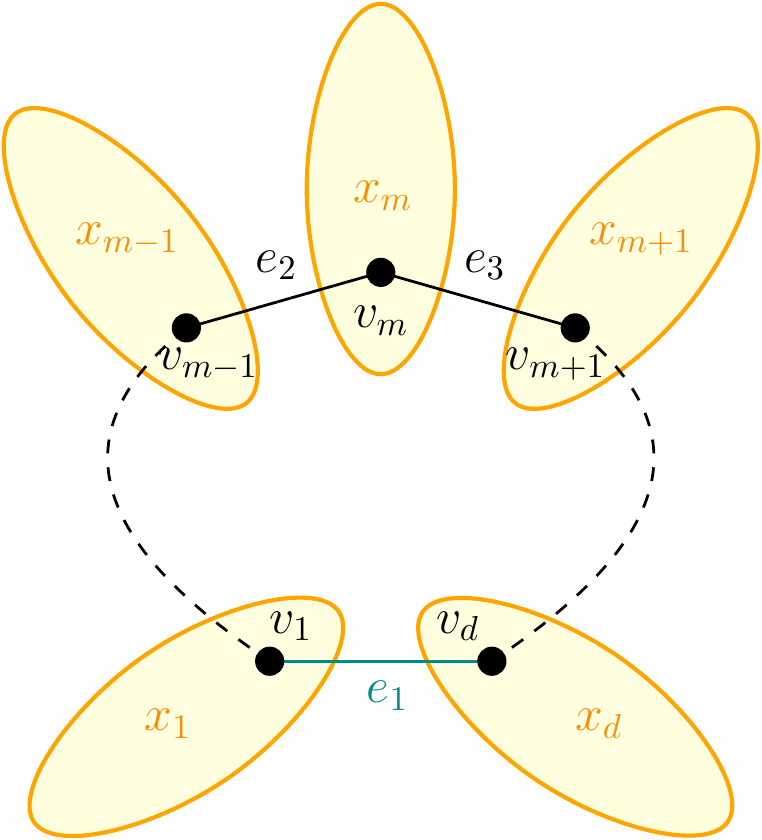}
		\includegraphics[width=0.34\textwidth]{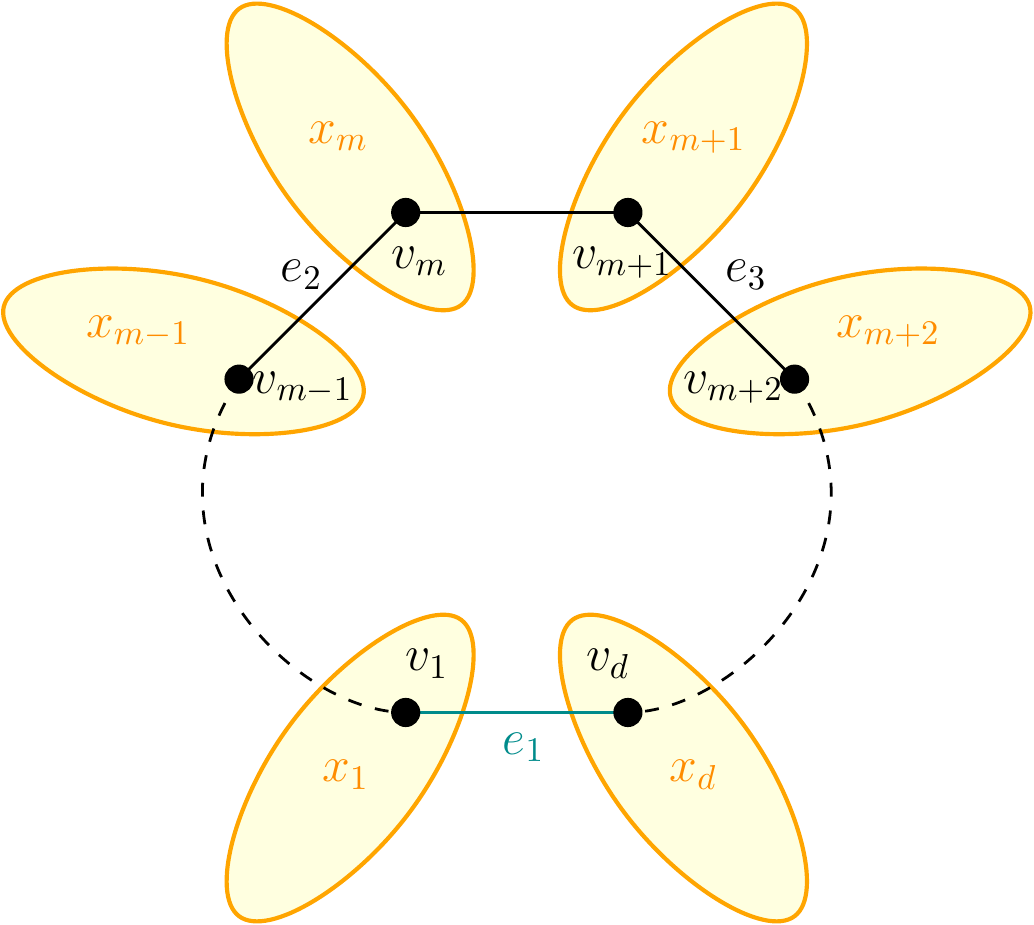}
		\caption{This figure shows the cycle formed by adding $e_1$ to the SMRCST. The cycle is of length $d$ and contains the nodes $v_1,\dots,v_d$. Every other node is contained in one of the subtrees rooted in one of the nodes on the cycle. These subtrees are represented in yellow. The number of nodes contained in the subtree rooted in $v_i$ is $x_i$. Middle and right: the cycle for $d$ being odd or even, respectively, and the two special edges $e_2$ and $e_3$.}
		\label{app:fig:maxrcst}
	\end{figure}
	
	Since $G$ is a tree, there is exactly one path between each pair of nodes (which is also the shortest). For each edge $e\in E_G$, we define $d_G(e)$ as the number of paths between two nodes in $G$ which include $e$. We then can express the total distances as
	\begin{align*}
		d_G(V,V)=2\sum_{e\in E_G}d_G(e).
	\end{align*}
	Note, that each path between two nodes contributes twice to the total distances (one for each node), which leads to the factor of 2.
	
	Let $x\coloneqq (x_1,\dots,x_d)$.
	We now define for each edge $e\in E_C$ on the cycle
	\begin{align*}
		c_e(x)\coloneqq \sum_{e'\in E_C}d_{G+e_1-e}(e')=\sum_{i=1}^{d-1}\sum_{j=i+1}^d x_i x_j d_{G+e_1-e}(v_i,v_j).
	\end{align*}
	This is the contribution of all the edges on the cycle to the total distances if we add $e_1$ to it and instead remove $e$ from it.
	Note that $c_{e_1}$ is the value for the original network since $G+e_1-e_1=G$.
	We see that $c_e$ does not depent on the structure of the subtrees rooted in the $v_i$ but only on the number of nodes in each subtree.
	Since the number of paths going over an edge that is not on the cycle does not change when we add $e_1$ and remove $e$, we have
	\begin{align*}
		&d_G(V,V)-d_{G+e-e_1}(V,V)\\
		&=2\sum_{e'\in E}(d_G(e')-d_{G+e-e_1}(e'))\\
		&=2\sum_{e'\in E_C}(d_G(e')-d_{G+e-e_1}(e'))+2\sum_{e'\in E_G\setminus E_C}(d_G(e')-d_{G+e-e_1}(e'))\\
		&=2c_{e_1}(x)-2c_e(x).
	\end{align*}
	
	We know that $G$ is a SMRCST of $H$. This means that $d_G(V,V)\ge d_{G'}(V,V)$ for any other spanning tree $G'$ which can be obtained from $G$ by a swap of one edge. We therefore also have
	\begin{align}
		\label{app:general:stable:constr}
		\forall e\in E_C\colon c_{e_1}(x)\ge c_e(x).
	\end{align}
	
	Now, we use the previous observations to formulate and solve a minimization problem which yields the desired bound. We start with some definitions.
	
	We call $x=(x_1,\dots,x_d)\in \N^d$ with $x_i\ge 1$ and $\sum_{i=1}^d x_i=n$ a \emph{node distribution}. For each edge $e\in E_C$, we call $c_e(x)$ (defined above) the \emph{cost} of $e$. And lastly, we define the \emph{distance decrease} $\Delta d$ as
	\begin{align}
		\Delta d(x)\coloneqq \max \left\{\sum_{i=1}^{\left\lfloor\frac{d-1}{2}\right\rfloor}(d-2i)x_i,\sum_{i=1}^{\left\lfloor\frac{d-1}{2}\right\rfloor}(d-2i)x_{d-i+1}\right\}.
	\end{align}
	
	The goal then is: Find a node distribution $x$ that fulfills \ref{app:general:stable:constr} and minimizes $\Delta d(x)$. Observe that this indeed yields a lower bound for the distance decrease when adding $e$ to $G$. If we show that this is at least $\frac{n}{3}$, we proved the statement.
	
	Let $x=(x_1,\dots,x_d)\in\N^d$ be a node distribution minimizing $\Delta d(x)$. We first show the claim for $d=3$ and $d=4$.
	
	For $d=3$, we have $x_2\le x_1$ and $x_2\le x_3$ from \ref{app:general:stable:constr} and $\Delta d(x)=\max\{x_1,x_3\}$. Since $x_1+x_2+x_3=n$, this yields $\Delta d(x)\ge \frac{n}{3}$.
	
	For $d=4$, we have $c_{e_1}(x)-c_{\{x_1,x_2\}}(x)=2x_1x_4-2x_1x_2$, and therefore $c_{e_1}\ge c_{\{x_1,x_2\}}$ if and only if $x_4\ge x_2$. Similarly, we get $x_1\ge x_3$ from $\{x_3,x_4\}$. Together with $x_1+x_2+x_3+x_4=n$, we see that $\max\{x_1,x_4\}\ge\frac{n}{4}$. We conclude that $\Delta d(x)=2\max\{x_1,x_4\}\ge\frac{n}{2}>\frac{n}{3}$.
	
	For $d>4$, we make a case distinction between $d$ being odd and $d$ being even and simplify the problem by doing several relaxation steps. For $d\le 4$, it is easy to show that $\Delta d(x)\ge\frac{n}{3}$. For further steps, we allow $x\in\R_{\ge 1}^d$. Note, that this only allows for smaller minima and therefore still yields a lower bound for the original problem.
	
	The high level idea of the following steps is that we can redistribute weights of the node distribution $x$ without changing $\Delta(x)$ or violating \ref{app:general:stable:constr} and thereby reducing the number of variables contained in $x$ by setting most $x_i$ to 1. We now make a case distinction.
	
	\proofsubparagraph{Case $d$ odd:}
	Let $m=\frac{d+1}{2}$ and $e_2\coloneqq \{v_{m-1},v_m\}$ and $e_3\coloneqq \{v_m,v_{m+1}\}$. Thus, $v_m$ is the node equidistant from $v_1$ and $v_d$ in $C$ and $e_2$ and $e_3$ are the edges on $C$ incident to $v_m$. (see \Cref{app:fig:maxrcst}~(middle)) We will only consider the two constraints
	\begin{align}
		\label{app:general:stable:constr_odd}
		c_{e_1}(x)\ge c_{e_2}(x)\text{ ~~~~~and~~~~~ } c_{e_1}(x)\ge c_{e_3}(x),
	\end{align}
	where
	\begin{align*}
		c_{e_1}(x)&=\sum_{i=1}^{d-1}\sum_{j=i+1}^d (j-i) x_i x_j,\\
		c_{e_2}(x)&=\sum_{i=1}^{m-2}\sum_{j=i+1}^{m-1}(j-i)x_i x_j + \sum_{i=m}^{d-1}\sum_{j=i+1}^d (j-i)x_i x_j + \sum_{i=1}^{m-1}\sum_{j=m}^d (i+d-j)x_i x_j,\\
		c_{e_3}(x)&=\sum_{i=1}^{m-1}\sum_{j=i+1}^{m}(j-i)x_i x_j + \sum_{i=m+1}^{d-1}\sum_{j=i+1}^d (j-i)x_i x_j + \sum_{i=1}^{m}\sum_{j=m+1}^d (i+d-j)x_i x_j.
	\end{align*}
	This still yields a lower bound for the original problem since the constraints from \ref{app:general:stable:constr_odd} are a subset of the constraints from \ref{app:general:stable:constr}.
	
	We observe that the claim is trivially true for $\frac{n}{3}\le\frac{d+3}{2}$ since $\Delta d(x)\ge\frac{d+3}{2}$ (tight for $n=5$, $d=5$, $x=(1,1,1,1,1)$). This means, we can assume
	\begin{equation}
		\label{app:general:odd:eq:n_bigger_3m}
		n>\frac{3(d+3)}{2}, \text{ ~~~~~or rather~~~~~ } n>3(m+1).
	\end{equation}
	
	In the following, we perform a series of relaxations. For that, we define the node distributions $\x1,\x2,\x3$ such that for all $1\le i\le d$
	\begin{align*}
		\x1_i &\coloneqq \begin{cases}
			x_1 + \sum_{p=2}^{m-2}\frac{m-1-p}{m-2}(x_p-1)&\text{if }i=1,\\
			x_{m-1}+\sum_{p=2}^{m-2}\frac{p-1}{m-2}(x_p-1)&\text{if }i={m-1},\\
			x_m&\text{if }i=m,\\
			x_{m+1}+\sum_{p=m+2}^{d-1}\frac{d-p}{m-2}(x_p-1)&\text{if }i=m+1,\\
			x_d + \sum_{p=m+2}^{d-1}\frac{p-m-1}{m-2}(x_p-1)&\text{if }i=d,\\
			1&\text{else,}
		\end{cases}\\
		&\qquad\text{(Without loss of generality, we assume } \x1_1\ge \x1_d\\
		&\qquad\text{and } (2m-3)\x1_1+\x1_{m-1}=(2m-3)\x1_d+\x1_{m+1}.)\\
		\x2_i &\coloneqq\begin{cases}
			\x1_1&\text{if }i=1\text{ or } i=d,\\
			\x1_{m-1}&\text{if }i=m-1\text{ or } i=m+1,\\
			\x1_m+(2m-4)\left(\x1_1-\x1_d\right)&\text{if }i=m,\\
			1&\text{else,}
		\end{cases}\\
		\x3_i &\coloneqq\begin{cases}
			\x2_{m}+2\frac{2m-4}{2m-3}\left(\x2_{m-1}-1\right)&\text{if } i=m,\\
			\x2_1+\frac{1}{2m-3}\left(\x2_{m-1}-1\right)&\text{if } i=1 \text{ or } i=d,\\
			1&\text{else}.
		\end{cases}.
	\end{align*}
	
	It is easy to see that $\Delta d(x)=\Delta d\left(\x1\right)=\Delta d\left(\x2\right)=\Delta d\left(\x3\right)$. We show that $c_{e_1}\left(\x{i}\right)\ge c_{e_2}\left(\x{i}\right)$ and $c_{e_1}\left(\x{i}\right)\ge c_{e_3}\left(\x{i}\right)$ for $1\le i\le 3$. This means, $\x3$ is also a solution of the minimization problem.
	
	First, we show this for $\x1$. Let $1<p<m-1$. Consider $x^*$, a modification of $x$ where the weight of $x_p$ is distributed among $x_1$ and $x_{m-1}$ as follows:
	\begin{align*}
		\forall 1\le i\le d\colon x^*_i = \begin{cases}
			x_1 + \frac{m-1-p}{m-2}(x_p-1)&\text{if }i=1,\\
			1&\text{if }i=p,\\
			x_{m-1}+\frac{p-1}{m-2}(x_p-1)&\text{if }i={m-1},\\
			x_i&\text{else.}
		\end{cases}
	\end{align*}
	To show that $x^*$ also fulfills (\ref{app:general:stable:constr_odd}), we show that $c_{e_2}(x)-c_{e_1}(x)=c_{e_2}(x^*)-c_{e_1}(x^*)$ and $c_{e_3}(x)-c_{e_1}(x)=c_{e_3}(x^*)-c_{e_1}(x^*)$. We have
	\begin{align*}
		c_{e_2}(x)-c_{e_1}(x)=\sum_{i=1}^{m-1}\sum_{j=m}^d (i+d-j-(j-i))x_i x_j=\sum_{i=1}^{m-1}\sum_{j=m}^d (2i+d-2j)x_i x_j
	\end{align*}
	and
	\begin{align*}
		&c_{e_2}(x^*)-c_{e_1}(x^*)\\
		={}&\sum_{i=1}^{m-1}\sum_{j=m}^d (2i+d-2j)x^*_i x^*_j\\
		={}&\sum_{i=1}^{m-1}\sum_{j=m}^d (2i+d-2j)x_i x_j+\frac{m-1-p}{m-2}(x_p-1)\sum_{j=m}^d(2+d-2j)x_j\\
		&-(x_p-1)\sum_{j=m}^d(2p+d-2j)x_j+\frac{p-1}{m-2}(x_p-1)\sum_{j=m}^d(2(m-1)+d-2j)x_j\\
		=&\sum_{i=1}^{m-1}\sum_{j=m}^d (2i+d-2j)x_i x_j+\frac{m-1-p}{m-2}(x_p-1)(d-m+1)2\\
		&-(x_p-1)(d-m+1)2p+\frac{p-1}{m-2}(x_p-1)(d-m+1)2(m-1)\\
		={}&\sum_{i=1}^{m-1}\sum_{j=m}^d (2i+d-2j)x_i x_j\\
		&+(x_p-1)(d-m+1)(-2p+\frac{2}{m-2}((m-1-p) + (p-1)(m-1)))\\
		={}&\sum_{i=1}^{m-1}\sum_{j=m}^d (2i+d-2j)x_i x_j+(x_p-1)(d-m+1)\left(-2p+\frac{2}{m-2}(m-2)p\right)\\
		={}&c_{e_2}(x)-c_{e_1}(x).
	\end{align*}
	The calculations for $c_{e_3}(x)-c_{e_1}(x)=c_{e_3}(x^*)-c_{e_1}(x^*)$ are exactly the same with the only difference being the sum indices (first sum goes to $m$ and second sum starts at $m+1$).
	
	Because of symmetry, for $m+1<p<d$, we can similarly distribute weights from $x_p$ to $x_{m+1}$ and $x_d$. Using this for all $1<p<m-1$ and $m+1<p<d$ iteratively, we get exactly $\x1$, which therefore still fulfills (\ref{app:general:stable:constr_odd}).
	
	Next, we show that $\x2$ fulfills (\ref{app:general:stable:constr_odd}), too. We have
	\begin{align*}
		c_{e_1}(\x1)={}&(m-2)\x1_1\x1_{m-1}+(m-1)\x1_1\x1_m+m\x1_1\x1_{m+1}+(2m-2)\x1_1\x1_d\\*&
		+\x1_{m-1}\x1_m+2\x1_{m-1}\x1_{m+1}+m\x1_{m-1}\x1_d+\x1_m\x1_{m+1}+(m-1)\x1_m\x1_d\\*
		&+(m-2)\x1_{m+1}\x1_d+\x1_1(m-3)(2m-2)+\x1_d(m-3)(2m-2)\\*
		&+\x1_{m-1}(m-3)m+\x1_{m+1}(m-3)m+\x1_m(m-3)m\\*
		&+\frac{1}{3}(m-4)(m-3)(m-2)+(m-3)(m-3)m,\\
		c_{e_2}(\x1)={}&(m-2)\x1_1\x1_{m-1}+m\x1_1\x1_m+(m-1)\x1_1\x1_{m+1}+\x1_1\x1_d\\*
		&+(2m-2)\x1_{m-1}\x1_m+(2m-3)\x1_{m-1}\x1_{m+1}+(m-1)\x1_{m-1}\x1_d+\x1_m\x1_{m+1}\\*
		&+(m-1)\x1_m\x1_d+(m-2)\x1_{m+1}\x1_d\\*
		&+\x1_1(m-3)(m-1)+\x1_d(m-3)(m-1)+\x1_{m-1}(m-3)(2m-3)\\*
		&+\x1_{m+1}(m-3)(2m-3)+\x1_m(m-3)(2m-1)\\*
		&+\frac{1}{3}(m-4)(m-3)(m-2)+(m-3)(m-3)(m-1)\text{ and}\\
		c_{e_3}(\x1)={}&(m-2)\x1_1\x1_{m-1}+(m-1)\x1_1\x1_m+(m-1)\x1_1\x1_{m+1}+\x1_1\x1_d\\*
		&+\x1_{m-1}\x1_m+(2m-3)\x1_{m-1}\x1_{m+1}+(m-1)\x1_{m-1}\x1_d+(2m-2)\x1_m\x1_{m+1}\\*
		&+m\x1_m\x1_d+(m-2)\x1_{m+1}\x1_d\\*
		&+\x1_1(m-3)(m-1)+\x1_d(m-3)(m-1)+\x1_{m-1}(m-3)(2m-3)\\*
		&+\x1_{m+1}(m-3)(2m-3)+\x1_m(m-3)(2m-1)\\*
		&+\frac{1}{3}(m-4)(m-3)(m-2)+(m-3)(m-3)(m-1).
	\end{align*}
	We can further assume that $(2m-3)\x1_1+\x1_{m-1}=(2m-3)\x1_d+\x1_{m+1}$. This is because if (without loss of generality) $(2m-3)\x1_1+\x1_{m-1}<(2m-3)\x1_d+\x1_{m+1}$ holds, we can move value from $\x1_{m-1}$ to $\x1_1$ without changing $\Delta d(\x1)$. This increases $c_{e_1}(\x1)$ more than $c_{e_2}(\x1)$ and $c_{e_3}(\x1)$.
	
	Let (without loss of generality) $\x1_1\ge \x1_d$ and $y\coloneqq \x1_1-\x1_d$. Then we have $\x1_d=\x1_1-y$ and $\x1_{m+1}=\x1_{m-1}+(2m-3)y$. We see, that
	\begin{align*}
		\forall 1\le i\le d\colon \x2_i=\begin{cases}
			\x1_{m}+(2m-4)y&\text{if } i=m,\\
			\x1_{m-1}&\text{if } i=m-1\text{ or }i=m+1,\\
			\x1_1&\text{if } i=1\text{ or } i=d,\\
			1&\text{else}.
		\end{cases}
	\end{align*}
	We show $c_{e_1}(\x2)\ge c_{e_3}(\x2)$, by showing that $c_{e_1}(\x1)-c_{e_3}(\x1)-c_{e_1}(\x2)+c_{e_3}(\x2)>0$.
	
	We have
	\begin{align*}
		\Delta \x1\coloneqq {}& c_{e_1}(\x1)-c_{e_3}(\x1)\\
		={}&\x1_1\x1_{m+1}+(2m-3)\x1_1\x1_d-(2m-5)\x1_{m-1}\x1_{m+1}+\x1_{m-1}\x1_d\\*
		&-(2m-3)\x1_m\x1_{m+1}-\x1_m\x1_d+\x1_1(m-3)(m-1)+\x1_d(m-3)(m-1)\\*
		&-\x1_{m-1}(m-3)(m-3)-\x1_{m+1}(m-3)(m-3)-\x1_m(m-3)(m-1)\\*
		&+(m-3)(m-3)\\
		={}&\x1_1\x1_{m-1}+\x1_1(2m-3)y+(2m-3)\x1_1\x1_1-(2m-3)\x1_1y\\*
		&-(2m-5)\x1_{m-1}\x1_{m-1}-(2m-5)(2m-3)\x1_{m-1}y+\x1_{m-1}\x1_1-\x1_{m-1}y\\*
		&-(2m-3)\x1_m\x1_{m-1}-(2m-3)\x1_m(2m-3)y-\x1_m\x1_1+\x1_my\\*
		&+\x1_1(m-3)(m-1)+\x1_1(m-3)(m-1)-y(m-3)(m-1)\\*
		&-\x1_{m-1}(m-3)(m-3)-\x1_{m-1}(m-3)(m-3)\\*
		&-(2m-3)y(m-3)(m-3)-\x1_m(m-3)(m-1)+(m-3)(m-3)
		\shortintertext{and}
		\Delta \x2\coloneqq {}&c_{e_1}(\x2)-c_{e_3}(\x2)\\
		={}&\x1_1\x1_{m-1}+(2m-3)\x1_1\x1_1-(2m-5)\x1_{m-1}\x1_{m-1}+\x1_{m-1}\x1_1\\*
		&-(2m-3)\x1_m\x1_{m-1}-(2m-3)(2m-4)y\x1_{m-1}-\x1_m\x1_1-(2m-4)y\x1_1\\*
		&+\x1_1(m-3)(m-1)+\x1_1(m-3)(m-1)-\x1_{m-1}(m-3)(m-3)\\*
		&-\x1_{m-1}(m-3)(m-3)-\x1_m(m-3)(m-1)-(2m-4)y(m-3)(m-1)\\*
		&+(m-3)(m-3),
	\end{align*}
	and therefore
	\begin{align*}
		\Delta \x2-\Delta \x1={}&(2m-5)(2m-3)\x1_{m-1}y+\x1_{m-1}y+(2m-3)\x1_m(2m-3)y\\*
		&-(2m-3)(2m-4)y\x1_{m-1}-\x1_my-(2m-4)y\x1_1+y(m-3)(m-1)\\*
		&+(2m-3)y(m-3)(m-3)-(2m-4)y(m-3)(m-1)\\
		={}&\x1_{m-1}y((2m-5)(2m-3)+1-(2m-3)(2m-4))\\*
		&+\x1_my((2m-3)(2m-3)-1)-(2m-4)\x1_1y\\*
		&+y(m-3)(m-1+(2m-3)(m-3)-(2m-4)(m-1))\\
		={}&-2\x1_{m-1}y(m-2)+4\x1_my(m-2)(m-1)-2\x1_1y(m-2)\\*
		&-2y(m-3)(m-2)\\
		={}&2y(m-2)(-\x1_{m-1}+2\x1_m(m-1)-\x1_1-m+3)\\
		={}&2y(m-2)(\x1_m-\x1_1-\x1_{m-1}+2\x1_m(m-1.5)-m+3)\\
		\ge{}& 0.
	\end{align*}
	The last step follows from $\x1_m\geq \x1_1+\x1_{m-1}\ge 1$, $m\ge 3$ and $y\ge 0$.
	
	Since $\x1$ fulfills (\ref{app:general:stable:constr_odd}), we see that $c_{e_1}(\x2)\ge c_{e_3}(\x2)$ holds, too. With $\x2$ being symmetric, $c_{e_1}(\x2)\ge c_{e_2}(\x2)$ holds as well. This shows that $\x2$ also fulfills (\ref{app:general:stable:constr_odd}).
	
	Now, we show that $\x3$ fulfills (\ref{app:general:stable:constr_odd}), too. Let $y\coloneqq \frac{\x2_{m-1}-1}{2m-3}$. We see that
	\begin{align*}
		\forall 1\le i\le d\colon \x3_i=\begin{cases}
			\x2_{m}+2(2m-4)y&\text{if } i=m,\\
			\x2_1+y&\text{if } i=1 \text{ or } i=d,\\
			1&\text{else}.
		\end{cases}
	\end{align*}
	
	We see that
	\begin{align*}
		\Delta \x2={}&2\x2_1 + 2(2m-3)y\x2_1 + (2m-3)\x2_1\x2_1-(2m-5)(1+(2m-3)y)^2\\*
		&-(2m-3)\x2_m-(2m-3)\x2_m(2m-3)y-\x2_1\x2_m+2\x2_1(m-3)(m-1)\\*
		&-2(m-3)(m-3)-2(2m-3)y(m-3)(m-3)-\x2_m(m-3)(m-1)\\*
		&+(m-3)(m-3)\\
		={}&2\x2_1+(2m-3)\x2_1\x2_1-(2m-5)-(2m-3)\x2_m-\x2_1\x2_m\\*
		&+2\x2_1(m-3)(m-1)-2(m-3)(m-3)-\x2_m(m-3)(m-1)\\*
		&+(m-3)(m-3)\\*
		&+y(2m-3)(2\x2_1-(2m-5)2-(2m-3)\x2_m-2(m-3)(m-3))\\*
		&-y^2(2m-3)^2(2m-5),\\
		\Delta \x3\coloneqq {}&c_{e_1}(\x3)-c_{e_2}(\x3)\\
		={}&2\x2_1+2y+(2m-3)(\x2_1+y)^2-(2m-5)-(2m-3)(\x2_m+2(2m-4)y)\\*
		&-(\x2_1+y)(\x2_m+2(2m-4)y)+2(\x2_1+y)(m-3)(m-1)\\*
		&-2(m-3)(m-3)-(\x2_m+2(2m-4)y)(m-3)(m-1)\\*
		&+(m-3)(m-3)\\
		={}&2\x2_1+(2m-3)\x2_1\x2_1-(2m-5)-(2m-3)\x2_m-\x2_1\x2_m\\*
		&+2\x2_1(m-3)(m-1)-2(m-3)(m-3)-\x2_m(m-3)(m-1)\\*
		&+(m-3)(m-3)\\*
		&+y(2+(2m-3)2\x2_1-(2m-3)2(2m-4)-\x2_m-2(2m-4)\x2_1\\*
		&\qquad+2(m-3)(m-1)-2(2m-4)(m-3)(m-1))\\*
		&+y^2((2m-3)-2(2m-4))
	\end{align*}
	and obtain
	\begin{align*}
		\Delta \x3-\Delta \x2={}&y(2-2(2m-3)+(2m-3)^2\x2_m+2(2m-3)(m-3)(m-3)-\x2_m\\*
		&\quad-2(2m-4)\x2_1+2(m-3)(m-1)-2(2m-4)(m-3)(m-1))\\*
		&+y^2((2m-3)-2(2m-4)+(2m-3)^2(2m-5))\\
		={}&y(m-2)(-4\x2_1+4(m-1)\x2_m-4(m-2))\\*
		&+4y^2(2m-5)(m-1)(m-2)\\
		\ge{}&0.
	\end{align*}
	The last step follows from $\x2_m\geq \x2_1$, $m\ge 3$ and $y\ge 0$.
	
	Since $\x2$ fulfills (\ref{app:general:stable:constr_odd}), $\x3$ does, too. Because of $\x3$ being a node distribution with only two variables ($\x3_1=\x3_d$ and $\x3_m$), we can simplify $\Delta \x3$ and $\Delta d$ to
	\begin{align*}
		\Delta \x3 ={}& c_{e_1}(\x3)-c_{e_2}(\x3)\\
		={}&2\x3_1+(2m-3)\x3_1\x3_1-(2m-5)-(2m-3)\x3_m-\x3_1\x3_m\\*
		&+2\x3_1(m-3)(m-1)-2(m-3)(m-3)-\x3_m(m-3)(m-1)\\*
		&+(m-3)(m-3)\\
		={}&(2m-3)\x3_1\x3_1+2\x3_1(m-2)^2-\x3_1\x3_m-m(m-2)\x3_m-(m-2)^2
		\intertext{and}
		\Delta d(\x3)={}&(m-2)^2+(2m-3)\x3_1.
	\end{align*}
	
	We now prove the lower bound.
	
	We see that $\Delta d(\x3)$ is only dependent on $\x3_1$ which means we have to minimize $\x3_1$. Since $\x3$ sums to $n$, we have $n=2\x3_1+\x3_m+2m-4$, and therefore $\x3_m=n-2\x3_1-2m+4$.
	
	Substituting this into $\Delta \x3$ yields
	\begin{align*}
		\Delta \x3 ={}&(2m-3)\x3_1\x3_1+2\x3_1(m-2)^2-\x3_1(n-2\x3_1-2m+4)\\*
		&-m(m-2)(n-2\x3_1-2m+4)-(m-2)^2\\
		={}&\x3_1\x3_1(2m-1) + \x3_1(2(m-2)(2m-1)-n)+(2m-1)(m-2)^2\\*
		&-nm(m-2).
	\end{align*}
	
	Observe that $\x3$ fulfills (\ref{app:general:stable:constr_odd}) if and only if $\Delta \x3\ge 0$. Solving the quadratic equation, we see that this is only the case for
	\begin{align*}
		\x3_1\ge{}& -(m-2)+\frac{n}{2(2m-1)}\\*
		&+\sqrt{\left(\frac{n}{2(2m-1)}-(m-2)\right)^2+nm(m-2)-(2m-1)(m-2)^2}.
	\end{align*}
	With (\ref{app:general:odd:eq:n_bigger_3m}), we see that the term under the root is larger than
	\begin{align*}
		&0^2+3(m+1)m(m-2)-(2m-1)(m-2)^2\\
		\ge{}&(m+1)m(m-2)\\
		\ge{}&(m-2)^2.
	\end{align*}
	This yields
	\begin{align*}
		\x3_1\ge -(m-2)+\frac{n}{2(2m-1)}+\sqrt{(m-2)^2}=\frac{n}{2(2m-1)},
	\end{align*}
	and therefore
	\begin{align*}
		\Delta d(\x3)\ge(m-2)^2+\frac{2m-3}{2m-1}\cdot\frac{n}{2}\ge\frac{n}{3}.
	\end{align*}
	
	\proofsubparagraph{Case $d$ even:}
	Let $m=\frac{d}{2}$ and $e_2\coloneqq\{v_{m-1},v_m\}$ and $e_3\coloneqq\{v_{m+1},v_{m+2}\}$. (see \Cref{app:fig:maxrcst}~(right)) Again, we will only consider the two constraints
	\begin{align}
		\label{app:general:stable:constr_even}
		c_{e_1}(x)\ge c_{e_2}(x)\text{ ~~~~~and~~~~~ } c_{e_1}(x)\ge c_{e_3}(x),
	\end{align}
	where
	\begin{align*}
		c_{e_1}&=\sum_{i=1}^{d-1}\sum_{j=i+1}^d (j-i) x_i x_j,\\
		c_{e_2}&=\sum_{i=1}^{m-2}\sum_{j=i+1}^{m-1}(j-i)x_ix_j + \sum_{i=m}^{d-1}\sum_{j=i+1}^d (j-i)x_ix_j + \sum_{i=1}^{m-1}\sum_{j=m}^d (i+d-j)x_ix_j,\\
		c_{e_3}&=\sum_{i=1}^m\sum_{j=i+1}^{m+1} (j-i)x_ix_j + \sum_{i=m+2}^{d-1}\sum_{j=i+1}^d (j-i)x_ix_j + \sum_{i=1}^{m+1}\sum_{j=m+2}^d (i+d-j)x_ix_j,
	\end{align*}
	We observe that the claim is trivially true for $\frac{n}{3}\le \frac{d+6}{2}$, since $\Delta d(x)\ge\frac{d+6}{2}$ (tight for $n=6,d=6,x=(1,1,1,1,1,1)$). This means we can assume
	\begin{equation}\label{app:general:even:eq:n_bigger_3m}
		n>\frac{3(d+6)}{2}, \text{ ~~~~~or rather~~~~~ } n>3(m+3).
	\end{equation}
	
	Similar to the odd case, we perform a series of relaxations. For that, we define the node distributions $\x1,\x2,\x3,\x4$ such that for all $1\le i\le d$
	\begin{align*}
		\x1_i &\coloneqq \begin{cases}
			x_1 + \sum_{p=2}^{m-2}\frac{m-1-p}{m-2}(x_p-1)&\text{if }i=1,\\
			x_{m-1}+\sum_{p=2}^{m-2}\frac{p-1}{m-2}(x_p-1)&\text{if }i={m-1},\\
			x_m&\text{if }i=m,\\
			x_{m+1}&\text{if }i=m+1,\\
			x_{m+2}+\sum_{p=m+3}^{d-1}\frac{d-p}{m-2}(x_p-1)&\text{if }i=m+1,\\
			x_d + \sum_{p=m+3}^{d-1}\frac{p-m-2}{m-2}(x_p-1)&\text{if }i=d,\\
			1&\text{else,}
		\end{cases}\\
		&\qquad\text{(Without loss of generality, we assume } \x1_1\ge \x1_d\\
		&\qquad\text{and } (m-1)\x1_1+\x1_{m-1}=(m-1)\x1_d+\x1_{m+2}.)\\
		\x2_i &\coloneqq\begin{cases}
			\x1_1&\text{if }i=1\text{ or }i=d,\\
			\x1_{m-1}&\text{if }i=m-1\text{ or }i=m+2,\\
			\x1_m&\text{if }i=m\\
			\x1_{m+1}+(m-2)\left(\x1_1-\x1_d\right)&\text{if }i=m+1,\\
			1&\text{else},
		\end{cases}\\
		\x3_i &\coloneqq \begin{cases}
			\x2_1&\text{if }i=1\text{ or }i=d,\\
			\x2_{m-1}&\text{if }i=m-1\text{ or }i=m+2,\\
			\frac{\x2_m+\x2_{m+1}}{2}&\text{if }i=m\text{ or }i=m+1,\\
			1&\text{else},
		\end{cases}\\
		\x4_i &\coloneqq\begin{cases}
			\x3_1+\frac{1}{m-1}\left(\x3_{m-1}-1\right)&\text{if }i=1\text{ or }i=d,\\
			\x3_m+\frac{m-2}{m-1}\left(\x3_{m-1}-1\right)&\text{if }i=m\text{ or }i=m+1,\\
			1&\text{else}.
		\end{cases}
	\end{align*}
	Again, it is easy to see that $\Delta d(x)=\Delta d\left(\x1\right)=\dots=\Delta d\left(\x4\right)$. We show $c_{e_1}\left(\x{i}\right)\ge c_{e_2}\left(\x{i}\right)$ and $c_{e_1}\left(\x{i}\right)\ge c_{e_3}\left(\x{i}\right)$, for $1\le i\le 4$. Therefore, $\x4$ is a solution of the minimization problem.
	
	First, we observe that $c_{e_2}(x)-c_{e_1}(x)=c_{e_2}(\x1)-c_{e_1}(\x1)$ and $c_{e_3}(x)-c_{e_1}(x)=c_{e_3}(\x1)-c_{e_1}(\x1)$ follow the same way as in the odd case (with slight adjustments to the sum indices). This means that $\x1$ fulfills (\ref{app:general:stable:constr_even}).
	
	Next, we show that $\x2$ fulfills (\ref{app:general:stable:constr_even}), too. We have
	\begin{align*}
		c_{e_1}(\x1)={}&(m-2)\x1_1\x1_{m-1}+(m-1)\x1_1\x1_m+m\x1_1\x1_{m+1}+(m+1)\x1_1\x1_{m+2}\\*
		&+(2m-1)\x1_1\x1_d+\x1_{m-1}\x1_m+2\x1_{m-1}\x1_{m+1}+3\x1_{m-1}\x1_{m+2}\\*
		&+(m+1)\x1_{m-1}\x1_d+\x1_m\x1_{m+1}+2\x1_m\x1_{m+2}+m\x1_m\x1_d\\*
		&+\x1_{m+1}\x1_{m+2}+(m-1)\x1_{m+1}\x1_d+(m-2)\x1_{m+2}\x1_d\\*
		&+(m-3)(2m-1)\x1_1+(m-3)(m+1)\x1_{m-1}+(m-3)(m+1)\x1_m\\*
		&+(m-3)(m+1)\x1_{m+1}+(m-3)(m+1)\x1_{m+2}+(m-3)(2m-1)\x1_d\\*
		&+\frac{1}{3}(m-4)(m-3)(m-2)+(m-3)(m-3)(m+1),\\
		c_{e_2}(\x1)={}&(m-2)\x1_1\x1_{m-1}+(m+1)\x1_1\x1_m+m\x1_1\x1_{m+1}+(m-1)\x1_1\x1_{m+2}\\*
		&+\x1_1\x1_d+(2m-1)\x1_{m-1}\x1_m+(2m-2)\x1_{m-1}\x1_{m+1}+(2m-3)\x1_{m-1}\x1_{m+2}\\*
		&+(m-1)\x1_{m-1}\x1_d+\x1_m\x1_{m+1}+2\x1_m\x1_{m+2}+m\x1_m\x1_d\\*
		&+\x1_{m+1}\x1_{m+2}+(m-1)\x1_{m+1}\x1_d+(m-2)\x1_{m+2}\x1_d\\
		&+(m-3)(m-1)\x1_1+(m-3)(2m-3)\x1_{m-1}+(m-3)(2m+1)\x1_m\\
		&+(m-3)(2m-1)\x1_{m+1}+(m-3)(2m-3)\x1_{m+2}+(m-3)(m-1)\x1_d\\*
		&+\frac{1}{3}(m-4)(m-3)(m-2)+(m-3)(m-3)(m-1)\text{ and}\\
		c_{e_3}(\x1)={}&(m-2)\x1_1\x1_{m-1}+(m-1)\x1_1\x1_m+m\x1_1\x1_{m+1}+(m-1)\x1_1\x1_{m+2}\\*
		&+\x1_1\x1_d+\x1_{m-1}\x1_m+2\x1_{m-1}\x1_{m+1}+(2m-3)\x1_{m-1}\x1_{m+2}\\*
		&+(m-1)\x1_{m-1}\x1_d+\x1_m\x1_{m+1}+(2m-2)\x1_m\x1_{m+2}+m\x1_m\x1_d\\*
		&+(2m-1)\x1_{m+1}\x1_{m+2}+(m+1)\x1_{m+1}\x1_d+(m-2)\x1_{m+2}\x1_d\\*
		&+(m-3)(m-1)\x1_1+(m-3)(2m-3)\x1_{m-1}+(m-3)(2m-1)\x1_m\\*
		&+(m-3)(2m+1)\x1_{m+1}+(m-3)(2m-3)\x1_{m+2}+(m-3)(m-1)\x1_d\\*
		&+\frac{1}{3}(m-4)(m-3)(m-2)+(m-3)(m-3)(m-1).
	\end{align*}
	Similar to the odd case, we can further assume that $(m-1)\x1_1+\x1_{m-1}=(m-1)\x1_d+\x1_{m+2}$. If this is not the case and (without loss of generality) $(m-1)\x1_1+\x1_{m-1}>(m-1)\x1_d+\x1_{m+2}$, we could move weight from $\x1_{m+2}$ to $\x1_d$ without changing $\Delta d(\x1)$. This would increase $c_{e_1}(\x1)$ more than $c_{e_2}(\x1)$ and $c_{e_3}(\x1)$.
	
	Let (without loss of generality) $\x1_1\ge \x1_d$ and $y\coloneqq\x1_1-\x1_d$. Therefore, we have $\x1_d=\x1_1-y$ and $\x1_{m+2}=\x1_{m-1}+(m-1)y$. We see that
	\begin{align*}
		\forall 1\le i\le d\colon \x2_i=\begin{cases}
			\x1_m&\text{if } i=m,\\
			\x1_{m+1}+(m-2)y&\text{if } i=m+1,\\
			\x1_{m-1}&\text{if } i=m-1 \text{ or } i=m+2,\\
			\x1_1&\text{if } i=1 \text{ or } i=d,\\
			1&\text{else}.
		\end{cases}
	\end{align*}
	
	We now show that
	\begin{align*}
		\Delta_2 \x2 &\coloneqq  c_{e_1}(\x2)-c_{e_2}(\x2) \ge c_{e_1}(\x1)-c_{e_2}(\x1)=\Delta_2(\x1) \shortintertext{and}
		\Delta_3 \x2 &\coloneqq  c_{e_1}(\x2)-c_{e_3}(\x2) \ge c_{e_1}(\x1)-c_{e_3}(\x1)=\Delta_3(\x1).
	\end{align*} 
	
	We have
	\begin{align*}
		\Delta_2 \x1 ={}&c_{e_1}(\x1)-c_{e_2}(\x1)\\
		={}&-2\x1_1\x1_m+2\x1_1\x1_{m+2}+(2m-2)\x1_1\x1_d-(2m-2)\x1_{m-1}\x1_m\\*
		&-(2m-4)\x1_{m-1}\x1_{m+1}-(2m-6)\x1_{m-1}\x1_{m+2}+2\x1_{m-1}\x1_d\\*
		&+(m-3)m\x1_1-(m-3)(m-4)\x1_{m-1}-(m-3)m\x1_m\\*
		&-(m-3)(m-2)\x1_{m+1}-(m-3)(m-4)\x1_{m+2}+(m-3)m\x1_d\\*
		&+2(m-3)(m-3)\\
		={}&-2\x1_1\x1_m+2\x1_1\x1_{m-1}+2(m-1)\x1_1y+(2m-2)\x1_1\x1_1-(2m-2)\x1_1y\\*
		&-(2m-2)\x1_{m-1}\x1_m-(2m-4)\x1_{m-1}\x1_{m+1}-(2m-6)\x1_{m-1}\x1_{m-1}\\*
		&-(2m-6)\x1_{m-1}(m-1)y+2\x1_{m-1}\x1_1-2\x1_{m-1}y+(m-3)m\x1_1\\*
		&-(m-3)(m-4)\x1_{m-1}-(m-3)m\x1_m-(m-3)(m-2)\x1_{m+1}\\*
		&-(m-3)(m-4)\x1_{m-1}-(m-3)(m-4)(m-1)y+(m-3)m\x1_1\\*
		&-(m-3)my+2(m-3)(m-3)
		\shortintertext{and}
		\Delta_2 \x2 ={}&-2\x1_1\x1_m+2\x1_1\x1_{m-1}+(2m-2)\x1_1\x1_1-(2m-2)\x1_{m-1}\x1_m\\*
		&-(2m-4)\x1_{m-1}\x1_{m+1}-(2m-4)\x1_{m-1}(m-2)y-(2m-6)\x1_{m-1}\x1_{m-1}\\*
		&+2\x1_{m-1}\x1_1+(m-3)m\x1_1-(m-3)(m-4)\x1_{m-1}-(m-3)m\x1_m\\*
		&-(m-3)(m-2)\x1_{m+1}-(m-3)(m-2)(m-2)y\\*
		&-(m-3)(m-4)\x1_{m-1}+(m-3)m\x1_1+2(m-3)(m-3),
	\end{align*}
	and therefore
	\begin{align*}
		\Delta_2 \x2-\Delta_2 \x1={}&-(2m-4)\x1_{m-1}(m-2)y+(2m-6)\x1_{m-1}(m-1)y\\*
		&+2\x1_{m-1}y-(m-3)(m-2)(m-2)y\\*
		&+(m-3)(m-4)(m-1)y+(m-3)my\\
		={}&\x1_{m-1}y(-(2m-4)(m-2)+(2m-6)(m-1)+2)\\*
		&+(m-3)y(-(m-2)(m-2)+(m-4)(m-1)+m)\\
		={}&\x1_{m-1}y\cdot 0 + (m-3)y\cdot 0\\
		={}&0.
	\end{align*}
	
	Furthermore, we see that
	\begin{align*}
		\Delta_3 \x1 ={}&c_{e_1}(\x1)-c_{e_3}(\x1)\\
		={}&2\x1_1\x1_{m+2}+(2m-2)\x1_1\x1_d-(2m-6)\x1_{m-1}\x1_{m+2}+2\x1_{m-1}\x1_d\\*
		&-(2m-4)\x1_m\x1_{m+2}-(2m-2)\x1_{m+1}\x1_{m+2}-2\x1_{m+1}\x1_d\\*
		&+(m-3)m\x1_1-(m-3)(m-4)\x1_{m-1}-(m-3)(m-2)\x1_m\\*
		&-(m-3)m\x1_{m+1}-(m-3)(m-4)\x1_{m+2}+(m-3)m\x1_d\\*
		&+2(m-3)(m-3)\\
		={}&2\x1_1\x1_{m-1}+2\x1_1(m-1)y+(2m-2)\x1_1\x1_1-(2m-2)\x1_1y\\*
		&-(2m-6)\x1_{m-1}\x1_{m-1}-(2m-6)\x1_{m-1}(m-1)y+2\x1_{m-1}\x1_1-2\x1_{m-1}y\\*
		&-(2m-4)\x1_m\x1_{m-1}-(2m-4)\x1_m(m-1)y-(2m-2)\x1_{m+1}\x1_{m-1}\\*
		&-(2m-2)\x1_{m+1}(m-1)y-2\x1_{m+1}\x1_1+2\x1_{m+1}y\\*
		&+(m-3)m\x1_1-(m-3)(m-4)\x1_{m-1}-(m-3)(m-2)\x1_m\\*
		&-(m-3)m\x1_{m+1}-(m-3)(m-4)\x1_{m-1}-(m-3)(m-4)(m-1)y\\*
		&+(m-3)m\x1_1-(m-3)my+2(m-3)(m-3)
		\shortintertext{and}
		\Delta_3 \x2={}&2\x1_1\x1_{m-1}+(2m-2)\x1_1\x1_1-(2m-6)\x1_{m-1}\x1_{m-1}+2\x1_{m-1}\x1_1\\*
		&-(2m-4)\x1_m\x1_{m-1}-(2m-2)\x1_{m+1}\x1_{m-1}-(2m-2)(m-2)y\x1_{m-1}\\*
		&-2\x1_{m+1}\x1_1-2(m-2)y\x1_1+(m-3)m\x1_1-(m-3)(m-4)\x1_{m-1}\\*
		&-(m-3)(m-2)\x1_m-(m-3)m\x1_{m+1}-(m-3)m(m-2)y\\*
		&-(m-3)(m-4)\x1_{m-1}+(m-3)m\x1_1+2(m-3)(m-3),
	\end{align*}
	and therefore
	\begin{align*}
		\Delta_3 \x2-\Delta_3 \x1={}&(2m-6)\x1_{m-1}(m-1)y+2\x1_{m-1}y+(2m-4)\x1_m(m-1)y\\*
		&-(2m-2)(m-2)y\x1_{m-1}+(2m-2)\x1_{m+1}(m-1)y-2(m-2)y\x1_1\\*
		&-2\x1_{m+1}y-(m-3)m(m-2)y+(m-3)(m-4)(m-1)y\\*
		&+(m-3)my\\
		={}&-\x1_1y(2m-4)\\*
		&+\x1_{m-1}y((2m-6)(m-1)+2-(2m-2)(m-2))\\*
		&+\x1_my(2m-4)(m-1)\\*
		&+\x1_{m+1}y((2m-2)(m-1)-2)\\*
		&+(m-3)y(-(m-2)m+(m-4)(m-1)+m)\\
		={}&2(m-2)y(-\x1_1-\x1_{m-1}+(m-1)\x1_m+m\x1_{m+1}-(m-3))\\
		\ge{}&0.
	\end{align*}
	The inequality in the last step comes from $m\ge 3$, $y\ge 0$, $\x1_1+\x1_{m-1}\le\frac{n}{3}$ and $\x1_m+\x1_{m+1}\ge\frac{n}{3}$.
	
	Since $\x1$ fulfills (\ref{app:general:stable:constr_even}) and we have $\Delta_2 \x2 - \Delta_2 \x1=0$ and $\Delta_3 \x2 -\Delta_3 \x1\ge 0$, the node distribution $\x2$ fulfills (\ref{app:general:stable:constr_even}), too.
	
	Now, we show that $\x3$ fulfills (\ref{app:general:stable:constr_even}), as well. Without loss of generality, we can assume $\x2_{m+1}\ge \x2_m$.
	
	We have
	\begin{align*}
		\Delta_3 \x2={}&4\x2_1\x2_{m-1}+(2m-2)\x2_1\x2_1-(2m-6)\x2_{m-1}\x2_{m-1}-(2m-4)\x2_m\x2_{m-1}\\*
		&-(2m-2)\x2_{m+1}\x2_{m-1}-2\x2_{m+1}\x2_1+2(m-3)m\x2_1-2(m-3)(m-4)\x2_{m-1}\\*
		&-(m-3)(m-2)\x2_m-(m-3)m\x2_{m+1}+2(m-3)(m-3)
		\intertext{and}
		\Delta_3 \x3={}&4\x2_1\x2_{m-1}+(2m-2)\x2_1\x2_1-(2m-6)\x2_{m-1}\x2_{m-1}\\*
		&-(2m-4)\x2_{m-1}\frac{\x2_m+\x2_{m+1}}{2}-(2m-2)\x2_{m-1}\frac{\x2_m+\x2_{m+1}}{2}-2\x2_1\frac{\x2_m+\x2_{m+1}}{2}\\*
		&+2(m-3)m\x2_1-2(m-3)(m-4)\x2_{m-1}-(m-3)(m-2)\frac{\x2_m+\x2_{m+1}}{2}\\*
		&-(m-3)m\frac{\x2_m+\x2_{m+1}}{2}+2(m-3)(m-3)\\
		={}&4\x2_1\x2_{m-1}+(2m-2)\x2_1\x2_1-(2m-6)\x2_{m-1}\x2_{m-1}-(2m-3)\x2_{m-1}\x2_m\\*
		&-(2m-3)\x2_{m-1}\x2_{m+1}-\x2_1\x2_m-\x2_1\x2_{m+1}+2(m-3)m\x2_1\\*
		&-2(m-3)(m-4)\x2_{m-1}-(m-3)(m-1)\x2_m-(m-3)(m-1)\x2_{m+1}\\*
		&+2(m-3)(m-3),
	\end{align*}
	and therefore
	\begin{align*}
		\Delta_3 \x3-\Delta_3 \x2={}&-\x2_{m-1}\x2_m+\x2_{m-1}\x2_{m+1}-\x2_1\x2_m+\x2_1\x2_{m+1}\\*
		&-(m-3)\x2_m+(m-3)\x2_{m+1}\\
		={}&(\x2_{m+1}-\x2_m)(\x2_{m-1}+\x2_1+m-3)\\
		\ge{}&0.
	\end{align*}
	We also see that $\Delta_2 \x3-\Delta_2 \x3\ge 0$ because $\x3$ is symmetric. Since $\x2$ fulfills (\ref{app:general:stable:constr_even}), $\x3$ does, too. This also means that the two inequalities from (\ref{app:general:stable:constr_even}) are equivalent.
	
	Finally, we show that $\x4$ fulfills (\ref{app:general:stable:constr_even}). Let $y\coloneqq \frac{\x3_{m-1}-1}{m-1}$. We see that
	\begin{align*}
		\forall 1\le i\le d\colon \x4_i=\begin{cases}
			\x3_{m}+(m-2)y&\text{if } i=m\text{ or } i=m+1,\\
			\x3_1+y&\text{if } i=1 \text{ or } i=d,\\
			1&\text{else}.
		\end{cases}
	\end{align*}
	
	We have
	\begin{align*}
		\Delta_3 \x3 ={}&4\x3_1\x3_{m-1}+(2m-2)\x3_1\x3_1-(2m-6)\x3_{m-1}\x3_{m-1}-(4m-6)\x3_{m-1}\x3_m\\*
		&-2\x3_1\x3_m+2(m-3)m\x3_1-2(m-3)(m-4)\x3_{m-1}-(m-3)(2m-2)\x3_m\\*
		&+2(m-3)(m-3)\\
		={}&4\x3_1+4\x3_1(m-1)y+(2m-2)\x3_1\x3_1-(2m-6)(1+(m-1)y)^2\\*
		&-(4m-6)\x3_m-(4m-6)\x3_m(m-1)y-2\x3_1\x3_m+2(m-3)m\x3_1\\*
		&-2(m-3)(m-4)-2(m-3)(m-4)(m-1)y-(m-3)(2m-2)\x3_m\\*
		&+2(m-3)(m-3)\\
		={}&4\x3_1+(2m-2)\x3_1\x3_1-(2m-6)-(4m-6)\x3_m-2\x3_1\x3_m+2(m-3)m\x3_1\\*
		&-2(m-3)(m-4)-(m-3)(2m-2)\x3_m+2(m-3)(m-3)\\*
		&+y(m-1)(4\x3_1-2(2m-6)-(4m-6)\x3_m-2(m-3)(m-4)))\\*
		&-y^2(m-1)^2(2m-6)
		\shortintertext{and}
		\Delta_3 \x4={}&4\x3_1+4y+(2m-2)(\x3_1+y)^2-(2m-6)-(4m-6)\x3_m\\*
		&-(4m-6)(m-2)y-2(\x3_1+y)(\x3_m+(m-2)y)+2(m-3)m\x3_1\\*
		&+2(m-3)my-2(m-3)(m-4)-(m-3)(2m-2)\x3_m\\*
		&-(m-3)(2m-2)(m-2)y+2(m-3)(m-3)\\
		={}&4\x3_1+(2m-2)\x3_1\x3_1-(2m-6)-(4m-6)\x3_m-2\x3_1\x3_m+2(m-3)m\x3_1\\*
		&-2(m-3)(m-4)-(m-3)(2m-2)\x3_m+2(m-3)(m-3)\\*
		&+y(4+2(2m-2)\x3_1-(4m-6)(m-2)-2\x3_1(m-2)-2\x3_m\\*
		&\qquad+2(m-3)m-(m-3)(2m-2)(m-2))\\*
		&+y^2((2m-2)-2(m-2))\\
		={}&4\x3_1+(2m-2)\x3_1\x3_1-(2m-6)-(4m-6)\x3_m-2\x3_1\x3_m+2(m-3)m\x3_1\\*
		&-2(m-3)(m-4)-(m-3)(2m-2)\x3_m+2(m-3)(m-3)\\*
		&+y(2m\x3_1-2\x3_m+4-(4m-6)(m-2)+2(m-3)m\\*
		&\qquad-(m-3)(2m-2)(m-2))\\*
		&+2y^2,
	\end{align*}
	and therefore
	\begin{align*}
		\Delta_3 \x4-\Delta_3 \x3={}&y(-(2m-4)\x3_1+(2m-4)(2m-1)\x3_m+4-(4m-6)(m-2)\\*
		&\qquad+2(m-3)m-(m-3)(2m-2)(m-2)\\*
		&\qquad+2(2m-6)(m-1)+2(m-3)(m-4)(m-1))\\*
		&+y^2(2+(m-1)^2(2m-6))\\
		={}&y(2m-4)(-\x3_1+(2m-1)\x3_m-(m-2))\\*
		&+y^2(2+(m-1)^2(2m-6))\\
		\ge{}&0.
	\end{align*}
	The last step follows from $\x3_m\ge \x3_1$, $m\ge 3$ and $y\ge 0$.
	
	Since $\x3$ and $\x4$ are symmetric, we also have $\Delta_2(\x4)-\Delta_2(\x3)\ge 0$, and with $\x3$ fulfilling (\ref{app:general:stable:constr_even}), the node distribution $\x4$ fulfills (\ref{app:general:stable:constr_even}), too. Because of $\x4$ being a node distribution with only two variables ($\x4_1=\x4_d$ and $\x4_m=\x4_{m+1}$), we can simplify $\Delta \x4$ and $\Delta d(\x4)$ to
	\begin{align*}
		\Delta \x4 \coloneqq {}& \Delta_2 \x4 = \Delta_3 \x4\\
		={}&4\x4_1+(2m-2)\x4_1\x4_1-(2m-6)-(4m-6)\x4_m-2\x4_1\x4_m\\*
		&+2(m-3)m\x4_1-2(m-3)(m-4)-(m-3)(2m-2)\x4_m\\*
		&+2(m-3)(m-3)\\
		={}&2(m-1)\x4_1\x4_1+2(m-1)(m-2)\x4_1-2\x4_1\x4_m-2m(m-2)\x4_m
		\intertext{and}
		\Delta d(\x4)={}&2(m-1)\x4_1+(m-2)(m-1)+2.
	\end{align*}
	
	We now prove the lower bound.
	
	We see that $\Delta d$ is now only dependent on $\x4_1$ which means we have to minimize $\x4_1$. Since $\x4$ sums up to $n$, we have $n=2\x4_1+2\x4_m+2m-4$, and therefore $\x4_m=\frac{n}{2}-\x4_1-m+2$. Substituting this into $\Delta \x4$ yields
	\begin{align*}
		\Delta \x4={}&2(m-1)\x4_1\x4_1 + 2(m-1)(m-2)\x4_1 - 2\x4_1\left(\frac{n}{2}-\x4_1-m+2\right)\\*
		&-2m(m-2)\left(\frac{n}{2}-\x4_1-m+2\right)\\
		={}&2m\x4_1\x4_1+(4m(m-2)-n)\x4_1-m(m-2)(n-2m+4).
	\end{align*}
	We know that $\x4$ fulfills (\ref{app:general:stable:constr_even}) if and only if $\Delta \x4\ge 0$. Solving the quadratic inequality leaves us with
	\begin{align*}
		\x4_1\ge \frac{n}{4m}-(m-2)+\sqrt{\left(\frac{n}{4m}-(m-2)\right)^2+(m-2)\left(\frac{n}{2}-(m-2)\right)}.
	\end{align*}
	Because of (\ref{app:general:even:eq:n_bigger_3m}), we see that the term under the root is larger than
	\begin{align*}
		&0+(m-2)\left(\frac{3(m+3)}{2}-m+2\right)\\
		\ge{}&\frac{1}{2}(m-2)(m-2).
	\end{align*}
	We therefore obtain
	\begin{align*}
		\x4_1&\ge\frac{n}{4m}-(m-2)+\frac{\sqrt{2}}{2}(m-2)\\
		&\ge\frac{n}{4m}-\frac{m-2}{2},
	\end{align*}
	and finally
	\begin{align*}
		\Delta d(\x4)\ge 2(m-1)\left(\frac{n}{4m}-\frac{m-2}{2}\right)+(m-2)(m-1)+2\ge\frac{m-1}{m}\cdot\frac{n}{2}\ge\frac{n}{3},
	\end{align*}
	which concludes the proof.
\end{proof}
Next, we show that we can find a SMRCST in polynomial time via \Cref{algo} and even guarantee some bounds on the social welfare of the resulting network.
\begin{algorithm}
	\SetAlgoLined
	\SetKwData{T}{T}\SetKwData{H}{H}\SetKwData{P}{P}\SetKwFunction{greedy}{greedyLongPath}
	\KwIn{connected host network \H}
	\KwOut{SMRCST \T}
	\P$\leftarrow$ \greedy(\H)\;
	\T$\leftarrow$\emph{ extend \P to form a spanning tree of \H}\;\label{algo:extend}
	\While{$\exists e\in E_{\T},e'\in E_{\H}\setminus E_{\T}:d_{\T-e+e'}(V_{\H},V_{\H})>d_{\T}(V_{\H},V_{\H})$}{
		\T$\leftarrow\T-e+e'$
	}
	\caption{Computes a SMRCST for a given connected host network.}\label{algo}
\end{algorithm}
Our algorithm employs a greedy algorithm developed by Karger et al.~\cite{karger1997} which can find a path of length at least $\frac{|E_H|}{|V_H|}$ in $\mathcal{O}(|E_H|)$ as a subroutine for initialization. We call this subroutine \emph{greedyLongPath}. This will help us to derive bounds on the total distances of the computed SMRCST later. For extending the path to a spanning tree in line \ref{algo:extend} of \Cref{algo}, we can simply iterate over all edges and add them to the network if they do not close a cycle.
\begin{restatable}{thm}{algo}
	Let $H$ be a connected network containing $n$ nodes and $m$ edges. Then \Cref{algo} finds a Swap-Maximal Routing-Cost Spanning Tree of $H$ in runtime $\mathcal{O}(n^5m)$.
\end{restatable}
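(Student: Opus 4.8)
The plan is to establish three facts and combine them: (i) every network $T$ arising during the execution is a spanning tree of $H$; (ii) the \textbf{while}-loop terminates after $\mathcal{O}(n^3)$ iterations; and (iii) each iteration --- including the final evaluation of the loop condition that certifies that no improving swap exists --- runs in time $\mathcal{O}(n^2 m)$. From (ii) and (iii) the total running time is $\mathcal{O}(n^3)\cdot\mathcal{O}(n^2 m)=\mathcal{O}(n^5 m)$, since the initialization is dominated: \emph{greedyLongPath} runs in $\mathcal{O}(m)$ by Karger et al.~\cite{karger1997}, and completing $P$ to a spanning tree in line~\ref{algo:extend} --- scanning the edges of $H$ once and keeping an edge iff it does not close a cycle --- costs at most $\mathcal{O}(nm)$. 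Correctness then follows from (i) together with the exit condition: when the loop stops there is no pair $e\in E_T$, $e'\in E_H\setminus E_T$ with $d_{T-e+e'}(V_H,V_H)>d_T(V_H,V_H)$, and since every spanning tree of $H$ reachable from $T$ by a single edge swap is exactly of this form (a swap using $e'\in E_T$ either leaves $T$ unchanged or disconnects it), $T$ is a Swap-Maximal Routing-Cost Spanning Tree of $H$.

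For (i): after \emph{greedyLongPath} the graph $P$ is a path, hence acyclic, and line~\ref{algo:extend} adds edges of $H$ only while they do not close a cycle, so the result is a maximal forest of $H$ containing $P$, which is a spanning tree since $H$ is connected. In the loop we only consider a swap $T-e+e'$ when it is again a spanning tree, i.e.\ when $e$ lies on the unique cycle of $T+e'$ (for any other choice of $e$ the network $T-e+e'$ is disconnected, so its routing cost is not defined and it is not a candidate); every such swap preserves the invariant.

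For (ii): the routing cost $d_T(V_H,V_H)=\sum_{u,v\in V_H}d_T(u,v)$ is a nonnegative integer, and since every distance in a connected $n$-node network is at most $n-1$ we have $d_T(V_H,V_H)<n^3$. Each iteration performs a swap that strictly increases this quantity, so the loop runs fewer than $n^3$ times, which is $\mathcal{O}(n^3)$. For (iii): in one iteration we must find an improving swap or certify that none exists, which we do by enumerating all candidates. For each non-tree edge $e'\in E_H\setminus E_T$ (at most $m$ of them) we compute the unique cycle of $T+e'$, i.e.\ the path in $T$ between the endpoints of $e'$, in time $\mathcal{O}(n)$; this cycle has at most $n$ edges $e$, and for each we test whether $d_{T-e+e'}(V_H,V_H)>d_T(V_H,V_H)$. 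The decisive point is that the routing cost of a tree $T'$ can be computed in time $\mathcal{O}(n)$: rooting $T'$ arbitrarily and computing all subtree sizes $s(\cdot)$ by one depth-first search gives $d_{T'}(V_H,V_H)=2\sum_{v}s(v)\,(n-s(v))$, the sum ranging over the non-root vertices, because the edge joining $v$ to its parent separates $s(v)$ nodes from $n-s(v)$ nodes. Hence each of the at most $nm$ candidate swaps is evaluated in $\mathcal{O}(n)$ time, one iteration costs $\mathcal{O}(n^2 m)$, performing the chosen swap costs $\mathcal{O}(n)$, and multiplying by the bound from (ii) yields $\mathcal{O}(n^5 m)$ overall.

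I expect no serious obstacle, as the argument is essentially bookkeeping; the two points that warrant care are pinning down precisely which pairs $(e,e')$ constitute a legal swap --- so that the exit condition genuinely matches the definition of a Swap-Maximal Routing-Cost Spanning Tree and every routing cost that occurs is well defined --- and the linear-time evaluation of a tree's routing cost via subtree sizes, without which the crude estimate would only give $\mathcal{O}(n^6 m)$. The \emph{greedyLongPath} initialization is irrelevant both to correctness and to this runtime bound; it is included only for the social-welfare guarantees established afterwards.
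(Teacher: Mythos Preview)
Your proof is correct and follows essentially the same approach as the paper: bound the number of iterations by $\mathcal{O}(n^3)$ via the strictly increasing integer routing cost, and bound each iteration by $\mathcal{O}(n^2 m)$ via $\mathcal{O}(nm)$ candidate swaps each evaluated in $\mathcal{O}(n)$. The only differences are cosmetic --- you give the explicit subtree-size formula for the linear-time tree routing-cost computation where the paper just cites~\cite{mohar1988}, and you are more careful in restricting to swaps that keep $T$ connected, which the paper leaves implicit.
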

\begin{proof}
	It is easy to see that, by construction, $T$ is always a spanning tree of $H$. The condition in the while-loop ensures that all possible swaps are tried. This means that the while-loop ends if and only if $T$ is a SMRCST. Therefore if the while-loop stops, the result is correct.
	
	In every iteration in which the while-loop does not stop, the total distances of $T$ increase by at least 1. Since the tree maximizing the total distances is the path, we get its total distances $\frac{1}{3}(n-1)n(n+1)\in\mathcal{O}(n^3)$ as an upper bound for the number of iterations.
	
	The runtime of \Cref{algo} is clearly dominated by the while-loop. Since $T$ has $n-1$ edges which can be removed and $E_H\setminus E_T$ has $\mathcal{O}(m)$ possible edges to add, the number of possible swaps is in $\mathcal{O}(nm)$. For each swap, the total distances can be computed in $\mathcal{O}(n)$ \cite{mohar1988}. Therefore computing the condition can be done in $\mathcal{O}(n^2m)$. Altering the current solution in the body of the while-loop only takes $\mathcal{O}(n)$ when using adjacency lists. Since there are at most $\mathcal{O}(n^3)$ iterations of the while-loop, the overall runtime is in $\mathcal{O}(n^5m)$.
\end{proof}
\noindent For the $K$-SDNCG, the social optima were also stable. For general host networks, this is not necessarily the case. However, we can show that for $\alpha\le\frac{n}{3}$ there are stable states which approximate the social welfare better than with the trivial factor of $\mathcal{O}(n)$.
\begin{restatable}[OPT-Approximation via the MRCST]{thm}{generalOptApprox}\label{thm:gen:approx}
	Let $H$ be a connected host network containing $n$ nodes and $m$ edges and $T$ be the MRCST of $H$.
	\begin{bracketenumerate}
		\item We have $\frac{\SW(\OPT_H)}{\SW(T)}\in\mathcal{O}\left(\frac{m}{n}\right)$.\label{thm:gen:approx:sparse}
		\item For $\alpha\in\mathcal{O}\big(\frac{n^2}{m}\big)$, we have $\frac{\SW(\OPT_H)}{\SW(T)}\in\mathcal{O}(1)$.\label{thm:gen:approx:constant}
		\item For $\alpha\in\omega\big(\frac{n^2}{m}\big)$, we have $\frac{\SW(\OPT_H)}{\SW(T)}\in\mathcal{O}\big(\min\big\{\frac{m}{n},\alpha\frac{n}{m}\big\}+1\big)$.\label{thm:gen:approx:dense}
	\end{bracketenumerate}

\end{restatable}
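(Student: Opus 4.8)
The plan is to bound $\SW(\OPT_H)$ from above and $\SW(T)$ from below, both expressed through the single quantity $d_T(V,V)$, and then feed in two different lower bounds on $d_T(V,V)$ for the different regimes. The structural fact that makes everything work is $d_{\OPT_H}(V,V)\le d_T(V,V)$: since $\OPT_H$ is connected it has a spanning tree $T_{\OPT}$, which is also a spanning tree of $H$; deleting edges never decreases distances, so $d_{T_{\OPT}}(V,V)\ge d_{\OPT_H}(V,V)$, and by maximality of the MRCST, $d_T(V,V)\ge d_{T_{\OPT}}(V,V)$. Combined with $|E_{\OPT_H}|\le m$ and $|E_T|=n-1$ this yields $\SW(\OPT_H)\le 2\alpha m+d_T(V,V)$ and $\SW(T)=2\alpha(n-1)+d_T(V,V)\ge d_T(V,V)$.

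For part~(1) I would argue purely arithmetically: since $H$ is connected, $m\ge n-1$, so $2\alpha m=\frac{m}{n-1}\,2\alpha(n-1)$ and $d_T(V,V)\le\frac{m}{n-1}d_T(V,V)$, whence $\SW(\OPT_H)\le\frac{m}{n-1}\bigl(2\alpha(n-1)+d_T(V,V)\bigr)=\frac{m}{n-1}\SW(T)$, giving the ratio $\frac{m}{n-1}\in\mathcal O(m/n)$. For parts~(2) and~(3) I would use $\SW(\OPT_H)/\SW(T)\le 1+\frac{2\alpha m}{d_T(V,V)}$ and substitute lower bounds on $d_T(V,V)$. The trivial bound $d_T(V,V)\ge n(n-1)$ (all pairwise distances are at least $1$) already gives part~(2): for $\alpha\in\mathcal O(n^2/m)$ the term $\frac{2\alpha m}{n(n-1)}$ is $\mathcal O(1)$.

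For part~(3) the key is a stronger lower bound on $d_T(V,V)$ obtained from a long path. By Karger et al.~\cite{karger1997}, $H$ contains a path $P=(v_0,\dots,v_\ell)$ with $\ell\ge m/n$ edges; extend $P$ to a spanning tree $T'$ of $H$. For any vertex $w$, let $v_j$ be the vertex of $P$ closest to $w$ in $T'$. Because $T'$ is a tree containing $P$, the unique $w$–$v_i$ path passes through $v_j$ and then follows $P$, so $d_{T'}(w,v_i)\ge|j-i|$ for all $0\le i\le\ell$; hence $d_{T'}(w,V)\ge\sum_{i=0}^{\ell}|j-i|\ge\ell^2/4$. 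Summing over all $n$ vertices and using $d_T(V,V)\ge d_{T'}(V,V)$ by MRCST maximality gives $d_T(V,V)\ge n\ell^2/4\ge m^2/(4n)$. Therefore $\SW(\OPT_H)/\SW(T)\le 1+\frac{2\alpha m}{m^2/(4n)}=1+\mathcal O(\alpha n/m)$, and combining this with the $\mathcal O(m/n)$ bound from part~(1) yields $\mathcal O\bigl(\min\{m/n,\alpha n/m\}+1\bigr)$; the hypothesis $\alpha\in\omega(n^2/m)$ only serves to make part~(3) the informative statement, since for smaller $\alpha$ part~(2) is stronger.

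The only step needing an idea rather than bookkeeping is the lower bound $d_T(V,V)=\Omega(m^2/n)$, and its crux is the "projection onto the path" observation: because $T'$ is a tree containing $P$, \emph{every} vertex — not just the $\ell+1$ on $P$ — has total distance $\Omega(\ell^2)$ to the path, regardless of how the remaining $n-\ell-1$ vertices are attached. I expect the only minor care to concern corner cases where $\ell$ is a small constant (i.e.\ $m$ close to $n$), but there the trivial bound $d_T(V,V)\ge n(n-1)$ already dominates $m^2/n$, so nothing is lost.
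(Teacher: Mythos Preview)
Your proposal is correct and follows essentially the same approach as the paper: both use $d_{\OPT_H}(V,V)\le d_T(V,V)$ from MRCST maximality, the edge-count bounds $|E_{\OPT_H}|\le m$ and $|E_T|=n-1$, and for part~(3) the Karger--Motwani--Ramkumar long path to obtain $d_T(V,V)\in\Omega(m^2/n)$. Your ``projection onto the path'' argument (every vertex has total distance $\ge\ell^2/4$ to the path vertices) is in fact a slightly cleaner justification of the same $\Omega(n\ell^2)$ bound than the paper's terser ``distance at least $l/3$ to at least $l/3$ path nodes'' formulation, and your arithmetic in part~(1) gives $\frac{m}{n-1}$ directly rather than the paper's $\frac{m}{n-1}+1$; otherwise the proofs coincide.
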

\begin{proof}[Proof of (\ref{thm:gen:approx:sparse})]
	Let $V\coloneqq V_H$ and $O\coloneqq \OPT_H$. Since $T$ is a MRCST of $H$, we have $d_T(V,V)\ge d_O(V,V)$. This yields
	\begin{equation*}
		\frac{\SW(O)}{\SW(T)}=\frac{2\alpha|E_O|+d_O(V,V)}{2\alpha|E_T|+d_T(V,V)}\le\frac{2\alpha m+d_T(V,V)}{2\alpha(n-1)+d_T(V,V)}\le\frac{m}{n-1}+1\in\mathcal{O}\left(\frac{m}{n}\right).\qedhere
	\end{equation*}
\end{proof}
\begin{proof}[Proof of (\ref{thm:gen:approx:constant})]
	Let $V\coloneqq V_H$ and $O\coloneqq \OPT_H$. We have $2\alpha|E_T|\le 2\alpha|E_O|\le2\alpha m\in\mathcal{O}(n^2)$ and $d_O(V,V)\in\Omega(n^2)$. We also know that $d_O(V,V)\le d_T(V,V)$ because $T$ is a MRCST. This yields
	\begin{equation*}
		\frac{\SW(O)}{\SW(T)}=\frac{2\alpha|E_O|+d_O(V,V)}{2\alpha|E_T|+d_T(V,V)}\in\mathcal{O}\left(\frac{d_O(V,V)}{d_T(V,V)}\right)=\mathcal{O}(1).\qedhere
	\end{equation*}
\end{proof}
\begin{proof}[Proof of (\ref{thm:gen:approx:dense})]
	Let $V\coloneqq V_H$ and $O\coloneqq \OPT_H$.  \Cref{algo} uses the greedyLongPath algorithm to construct an initial tree that contains a path of at least length $l\ge\frac{m}{n}$ \cite{karger1997}. Every node in that tree has a distance of at least $\frac{l}{3}$ to at least $\frac{l}{3}$ of the nodes on the long path. This means that the total distances are at least $n\left(\frac{l}{3}\right)^2\in\Omega\left(\frac{m^2}{n}\right)$. Since \Cref{algo} only increases the total distances, this is a lower bound for every solution found by the algorithm, and therefore also for the MRCST. With this, we get
	\begin{align*}
		\frac{\SW(O)}{\SW(T)}&=\frac{2\alpha|E_O|+d_O(V,V)}{2\alpha|E_T|+d_T(V,V)}\le\frac{2\alpha m}{2\alpha(n-1)+d_T(V,V)}+\frac{d_T(V,V)}{2\alpha(n-1)+d_T(V,V)}\\
		&\in\mathcal{O}\left(\frac{\alpha m}{\alpha n+\frac{m^2}{n}}+1\right)=\mathcal{O}\left(\min\left\{\frac{m}{n},\alpha\frac{n}{m}\right\}+1\right).\qedhere
	\end{align*}
\end{proof}
\noindent Since finding the MRCST is NP-hard \cite{CAMERINI83}, these are only existence results. However, the next theorem yields a bound for dense networks and the computed SMRCST from \Cref{algo}.
\begin{restatable}{thm}{generalSmaxrcstApprox}\label{general:smaxrcst:approx}
	Let $H$ be a connected host network containing $n$ nodes and $m$ edges and $T$ be the SMRCST obtained by \Cref{algo}. Then for $\alpha\in\mathcal{O}(n)$, we have $\frac{\SW(\OPT_H)}{\SW(T)}\in\mathcal{O}\big(\frac{n^4}{m^2}\big)$.
\end{restatable}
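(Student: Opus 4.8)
The plan is to combine two elementary estimates: an upper bound on the numerator $\SW(\OPT_H)$ that exploits the hypothesis $\alpha\in\mathcal{O}(n)$, and a lower bound on the denominator $\SW(T)$ that reuses the long-path argument already appearing in the proof of \Cref{thm:gen:approx}~(\ref{thm:gen:approx:dense}). No new structural insight is needed; the statement is really a corollary of how \Cref{algo} is initialized.

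First I would bound the numerator. Since $\OPT_H$ is a spanning subnetwork of $H$, it has at most $\binom{n}{2}\in\mathcal{O}(n^2)$ edges, and every pairwise distance in it is at most $n-1$, so $d_{\OPT_H}(V_H,V_H)\le n^3$. Hence $\SW(\OPT_H)=2\alpha|E_{\OPT_H}|+d_{\OPT_H}(V_H,V_H)\in\mathcal{O}(\alpha n^2+n^3)$, and because $\alpha\in\mathcal{O}(n)$ this is $\mathcal{O}(n^3)$. This is the only place where the assumption $\alpha\in\mathcal{O}(n)$ enters: it guarantees that the edge-utility term of the optimum cannot dominate the (cubic) distance term.

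Next I would bound the denominator from below. \Cref{algo} initializes $T$ with the output of \emph{greedyLongPath}, which by the guarantee of Karger et al.~\cite{karger1997} contains a path $P$ on $\ell\ge m/n$ edges, and every iteration of the while-loop of \Cref{algo} only increases the routing cost. So it suffices to lower-bound the routing cost of an arbitrary spanning tree containing $P$; this is exactly the estimate from the proof of \Cref{thm:gen:approx}~(\ref{thm:gen:approx:dense}): every node of $T$ lies at distance at least $\ell/3$ from at least $\ell/3$ of the nodes of $P$, whence $d_T(V_H,V_H)\ge n(\ell/3)^2\in\Omega(m^2/n)$. Consequently $\SW(T)\ge d_T(V_H,V_H)\in\Omega(m^2/n)$.

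Dividing the two bounds yields $\frac{\SW(\OPT_H)}{\SW(T)}\in\mathcal{O}\!\big(n^3/(m^2/n)\big)=\mathcal{O}(n^4/m^2)$, as claimed. There is essentially no obstacle: the only point that deserves a sentence is that the $\Omega(m^2/n)$ lower bound on the routing cost survives the swap phase of \Cref{algo}, which holds because every accepted swap strictly increases $d_T(V_H,V_H)$ — a fact already used in establishing correctness and termination of \Cref{algo}.
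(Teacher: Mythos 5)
Your proof is correct and follows essentially the same approach as the paper: bound the optimum trivially (edge count and distances), lower-bound $d_T(V_H,V_H)\in\Omega(m^2/n)$ via the \emph{greedyLongPath} initialization and the fact that swaps only increase routing cost, and combine with $\alpha\in\mathcal{O}(n)$. The only cosmetic difference is that you collapse the numerator to $\mathcal{O}(n^3)$ up front and divide once, whereas the paper splits the ratio into an edge-utility fraction and a distance fraction and bounds each separately; both routes land on $\mathcal{O}(n^4/m^2)$ identically.
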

\begin{proof}
Let $V\coloneqq V_H$ and $O\coloneqq \OPT_H$. We can trivially upper bound the distances and number of edges with $d_O(V,V)\in\mathcal{O}(n^3)$ and $|E_O|\le m$. Since $T$ is a result of \Cref{algo}, we have $d_T(V,V)\in\Omega\left(\frac{m^2}{n}\right)$, as seen in the previous proof. This yields
\begin{align*}
	\frac{\SW(O)}{\SW(T)}&=\frac{2\alpha|E_O|+d_O(V,V)}{2\alpha|E_T|+d_T(V,V)}\le\frac{2\alpha m}{2\alpha(n-1)+\frac{m^2}{n}}+\frac{n^3}{2\alpha(n-1)+\frac{m^2}{n}}\\
	&\in\mathcal{O}\left(\frac{n^2}{m}+\frac{n^4}{m^2}\right)=\mathcal{O}\left(\frac{n^4}{m^2}\right).\qedhere
\end{align*}
\end{proof}
\noindent This means, for $\alpha\le\frac{n}{3}$ and a dense host network, we can compute a state which is pairwise stable and also has a favorable social welfare.
\subsection{Price of Anarchy and Price of Stability}
\label{sec:general:poaPos}
We derive several bounds on the PoA and the PoS for the SDNCG. For the $K$-SDNCG, the PoA is already quite high for small $\alpha$. The next Theorem shows that this gets even worse for general host networks since the PoA is linear up to $\alpha\le n$ and super-constant for $\alpha\in o(n^2)$.
\begin{restatable}[Price of Anarchy]{thm}{generalPoa}\label{thm:gen:poa}\,
	\begin{bracketenumerate}
		\item The Price of Anarchy is in $\mathcal{O}(n)$.\label{thm:gen:poa:upper_bound}
		\item For $\alpha<1$, the Price of Anarchy is in $\Theta(n)$.\label{thm:gen:poa:alpha_smaller_one}
		\item For $1\le\alpha\le n$, the Price of Anarchy is in $\Theta(n)$.\label{thm:gen:poa:alpha_smaller_n}
		\item For $n<\alpha\le n^2$, the Price of Anarchy is in $\Omega\left(\frac{n^2}{\alpha}\right)$.\label{thm:gen:poa:alpha_smaller_n2}
		\item For $\frac{1}{4}(n-1)^2<\alpha\le \frac{1}{24}(n-2)n(n+2)$, the Price of Anarchy is in $\Theta(1)$.\label{thm:gen:poa:alpha_smaller_n3}
		\item For $\alpha>\frac{1}{24}(n-2)n(n+2)$, the Price of Anarchy is 1.\label{thm:gen:poa:large_alpha}
	\end{bracketenumerate}

\end{restatable}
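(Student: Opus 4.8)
The six claims split by the size of $\alpha$. For (\ref{thm:gen:poa:upper_bound}) the plan is the trivial counting bound, exactly as in \Cref{thm:com:poa}(\ref{thm:com:poa:upper_bound}): a connected spanning subnetwork has between $n-1$ and $\binom{n}{2}$ edges and total distances between $n(n-1)$ and $\frac13(n-1)n(n+1)$, so $\frac{\SW(\OPT_H)}{\SW(G)}\le\frac{\alpha n^2+O(n^3)}{\alpha n+n^2}=O(n)$. For (\ref{thm:gen:poa:large_alpha}), since $\alpha>N_3>N_2$ (the inequality $N_3>N_2$ holds for all large enough $n$), \Cref{thm:gen:opt}(\ref{thm:gen:opt:large_alpha}) makes $H$ the unique optimum and \Cref{thm:gen:stable}(\ref{thm:gen:stable:large_alpha}) makes $H$ the only stable network, hence $\POA_n=1$. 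For (\ref{thm:gen:poa:alpha_smaller_n3}), $\alpha>N_2$ again forces $\mathcal{S}(H)=\{H\}$ by \Cref{thm:gen:stable}(\ref{thm:gen:stable:large_alpha}), so $\POA_n=\max_{H}\frac{\SW(\OPT_H)}{\SW(H)}\ge 1$; for the matching constant upper bound note $\SW(\OPT_H)\le 2\alpha|E_H|+\frac13(n-1)n(n+1)$ while $\SW(H)\ge 2\alpha|E_H|\ge 2\alpha(n-1)=\Omega(n^3)$ since $\alpha>N_2=\Omega(n^2)$, so $\POA_n\le 1+O(n^3)/\Omega(n^3)=O(1)$.

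The substance is the lower bounds (\ref{thm:gen:poa:alpha_smaller_one})--(\ref{thm:gen:poa:alpha_smaller_n2}), and I would obtain all three from a single host. Let $G_0$ be the graph on roughly $n/2$ vertices consisting of a cycle plus one extra vertex joined to all cycle vertices; $G_0$ is Hamiltonian, has $\Theta(n)$ edges, and has diameter $2$. Let $G$ be the clique network of $G_0$ that blows up every vertex to a $2$-clique (replace one $2$-clique by a $3$-clique if $n$ is odd), and set $H\coloneqq G$. Then $G=H$ contains a Hamilton path, since the clique network of a graph with a Hamilton path again has one; moreover $|E_G|\in\Theta(n)$ and, as $G$ has diameter $2$, $d_G(V,V)\in\Theta(n^2)$, so $\SW(G)\in\Theta(\alpha n+n^2)$. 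For $\alpha\ge 1$ the state $G=H$ is pairwise stable: no edge can be added because $H=G$, and no edge removal is improving by \Cref{clique_network}. For $\alpha<1$ the extra vertex of $G_0$ becomes a clique containing a vertex that is universal in $G$, so $H$ has a spanning star, which is a spanning tree and hence pairwise stable by \Cref{thm:gen:stable}(\ref{thm:gen:stable:tree}), again with social welfare $\Theta(\alpha n+n^2)$. Because $H$ has a Hamilton path, $\OPT_H\ge\SW(P_n)=2\alpha(n-1)+\frac13(n-1)n(n+1)\in\Theta(\alpha n+n^3)$. Therefore $\POA_n\ge\Omega\!\bigl(\tfrac{\alpha n+n^3}{\alpha n+n^2}\bigr)$, which equals $\Theta(n)$ for $\alpha\le n$ (this gives (\ref{thm:gen:poa:alpha_smaller_one}) and (\ref{thm:gen:poa:alpha_smaller_n}) together with the $O(n)$ upper bound) and $\Theta(1+n^2/\alpha)=\Theta(n^2/\alpha)$ for $n<\alpha\le n^2$ (this gives (\ref{thm:gen:poa:alpha_smaller_n2})). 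Item (\ref{thm:gen:poa:alpha_smaller_one}) can alternatively be read off from \Cref{thm:com:poa}(\ref{thm:com:poa:alpha_smaller_one}) by specializing to $H=K_n$.

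The crux is the choice of $G_0$. To push the lower bound all the way to $\alpha=n$ the bad stable state must have social welfare $\Theta(n^2)$, which forces it to be simultaneously sparse ($\Theta(n)$ edges, so that $2\alpha|E_G|=O(n^2)$ for $\alpha\le n$) and of constant diameter (so that $d_G(V,V)=\Theta(n^2)$, rather than the $\Theta(n^2\log n)$ one would get from a bounded-degree expander), while the host must still be Hamiltonian so that the optimum stays $\Theta(n^3)$. Sparse constant-diameter graphs normally rely on a high-degree hub, which is precisely what destroys Hamiltonicity; the cycle-plus-one-universal-vertex graph is the minimal way to meet all three requirements at once, and taking its clique network is exactly what upgrades ``stable as a tree for $\alpha\le 1$'' to ``stable for every $\alpha\ge 1$''. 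The only remaining items to check are the parity bookkeeping for odd $n$ and the hypothesis of \Cref{clique_network} that every clique has size at least $2$ (true by construction), both routine.
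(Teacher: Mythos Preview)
Your proof is correct and essentially identical to the paper's. Your graph $G_0$ (a cycle plus a universal vertex) is precisely the wheel network the paper uses, and your clique-network blow-up with $2$-cliques (one $3$-clique for odd $n$) is exactly their construction for parts (\ref{thm:gen:poa:alpha_smaller_n}) and (\ref{thm:gen:poa:alpha_smaller_n2}); the arguments for (\ref{thm:gen:poa:upper_bound}), (\ref{thm:gen:poa:alpha_smaller_n3}), and (\ref{thm:gen:poa:large_alpha}) also match. The one cosmetic difference is that for (\ref{thm:gen:poa:alpha_smaller_one}) the paper simply invokes the $K$-SDNCG bound with $H=K_n$ and the spanning star, whereas you reuse the wheel host and its spanning star---both work, and you already note the $K_n$ alternative yourself.
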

\begin{proof}[Proof of (\ref{thm:gen:poa:upper_bound}) and (\ref{thm:gen:poa:alpha_smaller_one})]
	This follows in the same way as (\ref{thm:com:poa:upper_bound}) and (\ref{thm:com:poa:alpha_smaller_one}) of \Cref{thm:com:poa}.
\end{proof}
\begin{proof}[Proof of (\ref{thm:gen:poa:alpha_smaller_n}) and (\ref{thm:gen:poa:alpha_smaller_n2})]
	\begin{figure}
		\centering
		\includegraphics[width=0.3\textwidth]{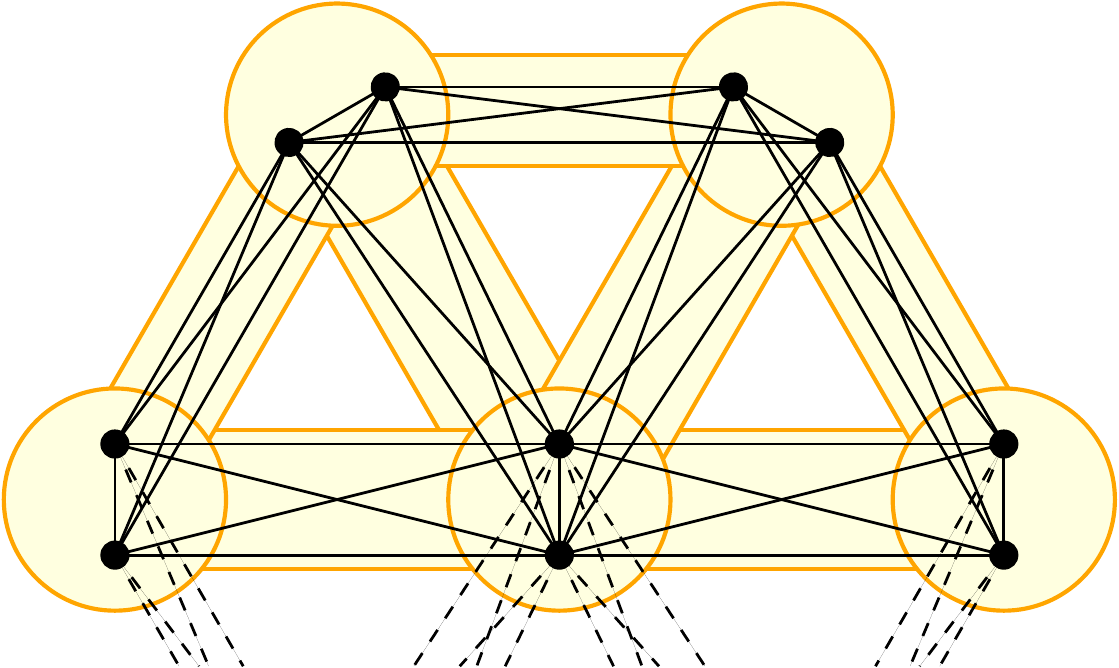}
		\caption{This figure shows a clique network (black) for a wheel network (yellow).}
		\label{fig:wheel}
	\end{figure}
	Let $W=(V_W,E_W)$ be a wheel network on $n'\coloneqq \left\lfloor\frac{n}{2}\right\rfloor$ nodes, i.e.,
	\begin{align*}
		V_W&\coloneqq \{v_1,\dots,v_{n'}\}\quad\text{and}\\
		E_W&\coloneqq \{\{v_1,v_i\}\mid 2\le i\le n'\}\cup\{\{v_i,v_{i+1}\}\mid 2\le i\le n'\}\cup\{\{v_2,v_{n'}\}\}.
	\end{align*}
	We then define the host network $H$ as the clique network obtained by replacing every node of $W$ by a clique of size 2. (See \Cref{fig:wheel} for an illustration.) For odd $n$, we instead replace the central node by a clique of size 3. We see that $H$ contains $n$ nodes, $\Theta(n)$ edges, and most importantly a Hamilton path. We also know that $H$ is stable because of \Cref{clique_network} and since no edge can be added. This yields the following lower bound for the Price of Anarchy
	\begin{align*}
		\POA\ge\frac{SW(P)}{SW(H)}=\frac{\alpha(n-1)+\Theta(n^3)}{\alpha\Theta(n)+\Theta(n^2)}\in\Omega\left(\min\left\{n,\frac{n^2}{\alpha}\right\}\right),
	\end{align*}
	which proves the claim.
\end{proof}
\begin{proof}[Proof of (\ref{thm:gen:poa:alpha_smaller_n3})]
	Let $H$ be a (connected) host network and $O\coloneqq \OPT_H$ and $V\coloneqq V_H$. From \Cref{thm:gen:stable:large_alpha} we know that $H$ itself is the only pairwise stable network. Since we can bound $|E_O|\le |E_H|$ and $d_O(V,V)\le n^3$, we obtain
	\begin{align*}
		\POA_n\le\frac{\SW(O)}{\SW(H)}=\frac{2\alpha|E_O|+d_O(V,V)}{2\alpha|E_H|+d_H(V,V)}\le\frac{2\alpha|E_H|+n^3}{2\alpha|E_H|}\in\Theta\left(1+\frac{n^3}{\alpha n}\right)=\Theta(1).
	\end{align*}
\end{proof}
\begin{proof}[Proof of (\ref{thm:gen:poa:large_alpha})]
	This follows directly from the host network being socially optimal and the only stable network (see \Cref{thm:gen:opt} and \Cref{thm:gen:stable}).
\end{proof}

\begin{restatable}[Price of Stability]{thm}{generalPos}\label{thm:gen:pos}\,
	\begin{bracketenumerate}
		\item The Price of Stability is in $\mathcal{O}(n)$.\label{thm:gen:pos:upper_bound}
		\item For $\alpha\le 1$, the Price of Stability is 1.\label{thm:gen:pos:alpha_smaller_one}
		\item For $1<\alpha\le\frac{n}{3}$, the Price of Stability is in $\mathcal{O}(\sqrt{n})$.\label{thm:gen:pos:alpha_smaller_n_3}
		\item For $\frac{1}{4}(n-1)^2<\alpha\le \frac{1}{24}(n-2)n(n+2)$, the Price of Stability is in $\Theta(1)$.\label{thm:gen:pos:alpha_smaller_n3}
		\item For $\alpha>\frac{1}{24}(n-2)n(n+2)$, the Price of Stability is 1.\label{thm:gen:pos:large_alpha}
	\end{bracketenumerate}

\end{restatable}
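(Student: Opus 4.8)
The plan is to obtain all five bounds as consequences of earlier results; no new network construction is needed. For \ref{thm:gen:pos:upper_bound} I would simply observe that $\POS_n\le\POA_n$, because for every host network $H$ the minimum of $\SW(\OPT_H)/\SW(G)$ over $G\in\mathcal{S}(H)$ is at most the maximum over the same set, so the bound $\POA_n\in\mathcal{O}(n)$ from \Cref{thm:gen:poa} transfers verbatim. For \ref{thm:gen:pos:alpha_smaller_one}, \Cref{thm:gen:opt} says the MRCST of $H$ is a social optimum for $\alpha\le 1$, and \Cref{thm:gen:stable} says every spanning tree of $H$ — in particular the MRCST — is pairwise stable for $\alpha\le 1$; hence a stable social optimum exists and $\POS_n=1$, which also matches the trivial lower bound $\POS_n\ge 1$.

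For \ref{thm:gen:pos:large_alpha} and \ref{thm:gen:pos:alpha_smaller_n3} I would use that in these $\alpha$-ranges the host network is essentially forced. When $\alpha>\frac{1}{24}(n-2)n(n+2)$ one checks $\frac{1}{24}(n-2)n(n+2)>\frac14(n-1)^2$ for $n\ge 5$ (the finitely many smaller cases are immediate), so $\alpha>\frac14(n-1)^2$ as well; then \Cref{thm:gen:opt} makes $H$ the unique optimum and \Cref{thm:gen:stable} makes $H$ the only pairwise stable network, so optimum and equilibrium coincide and $\POS_n=1$. In the intermediate range $\frac14(n-1)^2<\alpha\le\frac1{24}(n-2)n(n+2)$, \Cref{thm:gen:stable} again gives that $H$ is the only stable network, so $\POS_n=\max_H\SW(\OPT_H)/\SW(H)$; I would bound this exactly as in the proof of \Cref{thm:gen:poa}\ref{thm:gen:poa:alpha_smaller_n3}, using $\SW(\OPT_H)\le 2\alpha|E_H|+n^3$, $\SW(H)\ge 2\alpha|E_H|\ge 2\alpha(n-1)$, and $\alpha>\frac14(n-1)^2\in\Omega(n^2)$, giving $\SW(\OPT_H)/\SW(H)\le 1+\frac{n^3}{2\alpha(n-1)}\in\mathcal{O}(1)$, hence $\POS_n\in\Theta(1)$ together with $\POS_n\ge 1$.

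The only part with real content is \ref{thm:gen:pos:alpha_smaller_n_3}, where the idea is to exhibit the MRCST as a stable state that approximates the optimum well. First note that a routing-cost-maximizing spanning tree is in particular swap-maximal, so by our main theorem \Cref{general:stable:huge_proof} the MRCST of any $H$ is pairwise stable whenever $\alpha\le\frac n3$; therefore $\POS_n\le\max_H \SW(\OPT_H)/\SW(T)$ for $T$ the MRCST of $H$. It then remains to show this ratio is $\mathcal{O}(\sqrt n)$ for $1<\alpha\le\frac n3$, and here I would invoke \Cref{thm:gen:approx}: combining its bound $\mathcal{O}(m/n)$ (part \ref{thm:gen:approx:sparse}, valid for all parameters) with parts \ref{thm:gen:approx:constant}/\ref{thm:gen:approx:dense} gives $\SW(\OPT_H)/\SW(T)\in\mathcal{O}(\min\{m/n,\,\alpha n/m\}+1)$, and since the minimum of two nonnegative numbers is at most their geometric mean, $\min\{m/n,\alpha n/m\}\le\sqrt{\alpha}\le\sqrt{n/3}$, yielding $\mathcal{O}(\sqrt n)$. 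Equivalently one can re-derive it directly from $\SW(\OPT_H)/\SW(T)\le\frac{2\alpha m}{2\alpha(n-1)+d_T(V_H,V_H)}+1$, bounding the first summand once by $m/(n-1)$ and once by $\mathcal{O}(\alpha n/m)$ via the estimate $d_T(V_H,V_H)\in\Omega(m^2/n)$. The main thing to get right is precisely this last estimate: the $\Omega(m^2/n)$ routing-cost lower bound coming from the greedy long path (used in the proof of \Cref{thm:gen:approx}) applies to the MRCST as well, since the MRCST has routing cost at least that of any spanning tree; once that is in place the rest is bookkeeping.
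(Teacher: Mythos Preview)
Your proposal is correct and follows essentially the same approach as the paper. The only cosmetic differences are that for parts \ref{thm:gen:pos:alpha_smaller_n3} and \ref{thm:gen:pos:large_alpha} the paper simply cites $\POS_n\le\POA_n$ together with the corresponding items of \Cref{thm:gen:poa} rather than re-deriving the ratio, and in part \ref{thm:gen:pos:alpha_smaller_n_3} the paper bounds $\min\{m/n,\alpha n/m\}\le\min\{m/n,n^2/m\}\le\sqrt{n}$ (using $\alpha\le n/3$) instead of your geometric-mean step $\min\{m/n,\alpha n/m\}\le\sqrt{\alpha}$; both are equivalent one-line manipulations.
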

\begin{proof}[Proof of (\ref{thm:gen:pos:upper_bound}), (\ref{thm:gen:pos:alpha_smaller_n3}), and (\ref{thm:gen:pos:large_alpha})]
	This follows from \Cref{thm:gen:poa} and the Price of Anarchy being an upper bound for the Price of Stability.
\end{proof}
\begin{proof}[Proof of (\ref{thm:gen:pos:alpha_smaller_one})]
	This follows directly from the MRCST being socially optimal and stable (see \Cref{thm:gen:opt} and \Cref{thm:gen:stable}).
\end{proof}
\begin{proof}[Proof of (\ref{thm:gen:pos:alpha_smaller_n_3})]
	Let $H$ be a (connected) host network and $T$ be the MRCST of $H$. Let furthermore $O\coloneqq \OPT_H$ and $V\coloneqq V_H$. From \Cref{general:stable:huge_proof} we know that $T$ is stable. From \Cref{thm:gen:opt} and \Cref{thm:gen:approx}, we know that
	\begin{equation*}
		\frac{\SW(O)}{\SW(T)}\in\mathcal{O}\left(\min\left\{\frac{m}{n},\alpha\frac{n}{m}\right\}\right)\le\mathcal{O}\left(\min\left\{\frac{m}{n},\frac{n^2}{m}\right\}\right)\le\mathcal{O}\left(\sqrt{n}\right).\qedhere
	\end{equation*}
\end{proof}

\section{Conclusion}
\label{sec:conclusion}
We introduced and analyzed a natural game-theoretic model for network formation governed by social distancing. Besides modeling this timely issue, our model resembles the inverse compared to the well-known (bilateral) Network Creation Game~\cite{fabrikant2003,CP05}. Thus, via our analysis we could explore the impact of inverting the utility function in a non-trivial strategic game. We find that this inverts some of the properties, like the rough structure of optimum states, while it also yields non-obvious insights. First of all, for the variant with non-complete host networks we could show a strong equilibrium existence result, whereas no such result is known for the inverse model. Moreover, we established that the PoA is significantly higher in the ($K$-)SDCNG compared to the (bilateral) NCG. This demonstrates that the impact of the agents' selfishness is higher under social distancing, which calls for external coordination.    

The most obvious open question for future work is to settle the equilibrium existence. Do pairwise stable states exist for all connected host networks $H$ and $\alpha$? Another research direction would be to consider the unilateral variant of the SDNCG. While this no longer realistically models the formation of social networks, it might still yield interesting insights and it allows for studying stronger solution concepts like the Nash equilibrium or strong Nash equilibria, similar to~\cite{janus2017,andelman2009}. Also, altering the utility function, e.g., to using the maximum distance instead of the summed distances, or the probability of infection, similar to \cite{blume2011}, seems promising. Finally, also considering weighted host networks, as in~\cite{bilo2019}, where the edge weight models the benefit of the social interaction, would be an interesting generalization.



\bibliography{social-distancing-network-creation}

\begin{thebibliography}{10}

\bibitem{acemoglu2015}
Daron Acemoglu, Asuman Ozdaglar, and Alireza Tahbaz-Salehi.
\newblock Systemic risk and stability in financial networks.
\newblock {\em American Economic Review}, 105(2):564--608, 2015.
\newblock \href {https://doi.org/10.1257/aer.20130456}
  {\path{doi:10.1257/aer.20130456}}.

\bibitem{Al14}
Susanne Albers, Stefan Eilts, Eyal Even-Dar, Yishay Mansour, and Liam Roditty.
\newblock On nash equilibria for a network creation game.
\newblock {\em ACM TEAC}, 2(1):1--27, 2014.
\newblock \href {https://doi.org/10.1145/2560767} {\path{doi:10.1145/2560767}}.

\bibitem{allen2000}
Franklin Allen and Douglas Gale.
\newblock Financial contagion.
\newblock {\em Journal of Political Economy}, 108(1):1--33, 2000.
\newblock \href {https://doi.org/10.1086/262109} {\path{doi:10.1086/262109}}.

\bibitem{AM17}
Carme {\`{A}}lvarez and Arnau Messegu{\'{e}}.
\newblock Network creation games: Structure vs anarchy.
\newblock {\em CoRR}, abs/1706.09132, 2017.
\newblock \href {https://doi.org/10.48550/arXiv.1706.09132}
  {\path{doi:10.48550/arXiv.1706.09132}}.

\bibitem{AM19}
Carme {\`{A}}lvarez and Arnau Messegu{\'{e}}.
\newblock On the price of anarchy for high-price links.
\newblock In {\em {WINE} 2019}, pages 316--329, 2019.
\newblock \href {https://doi.org/10.1007/978-3-030-35389-6\_23}
  {\path{doi:10.1007/978-3-030-35389-6\_23}}.

\bibitem{andelman2009}
Nir Andelman, Michal Feldman, and Yishay Mansour.
\newblock Strong price of anarchy.
\newblock {\em Games and Economic Behavior}, 65(2):289--317, 2009.
\newblock \href {https://doi.org/10.1016/j.geb.2008.03.005}
  {\path{doi:10.1016/j.geb.2008.03.005}}.

\bibitem{anshelevich2008}
Elliot Anshelevich, Anirban Dasgupta, Jon~M. Kleinberg, {\'{E}}va Tardos, Tom
  Wexler, and Tim Roughgarden.
\newblock The price of stability for network design with fair cost allocation.
\newblock {\em {SIAM} Journal on Computing}, 38(4):1602--1623, 2008.
\newblock \href {https://doi.org/10.1137/070680096}
  {\path{doi:10.1137/070680096}}.

\bibitem{anshelevich2008_2}
Elliot Anshelevich, Anirban Dasgupta, {\'{E}}va Tardos, and Tom Wexler.
\newblock Near-optimal network design with selfish agents.
\newblock {\em Theory of Computing}, 4(1):77--109, 2008.
\newblock \href {https://doi.org/10.4086/toc.2008.v004a004}
  {\path{doi:10.4086/toc.2008.v004a004}}.

\bibitem{BG00}
Venkatesh Bala and Sanjeev Goyal.
\newblock A noncooperative model of network formation.
\newblock {\em Econometrica}, 68(5):1181--1229, 2000.
\newblock \href {https://doi.org/10.1111/1468-0262.00155}
  {\path{doi:10.1111/1468-0262.00155}}.

\bibitem{BFLLM21}
Davide Bil{\`{o}}, Tobias Friedrich, Pascal Lenzner, Stefanie Lowski, and Anna
  Melnichenko.
\newblock Selfish creation of social networks.
\newblock In {\em {AAAI} 2021}, pages 5185--5193, 2021.
\newblock \href {https://doi.org/10.1145/3323165.3323199}
  {\path{doi:10.1145/3323165.3323199}}.

\bibitem{bilo2019}
Davide Bil{\`{o}}, Tobias Friedrich, Pascal Lenzner, and Anna Melnichenko.
\newblock Geometric network creation games.
\newblock In {\em {SPAA} 2019}, pages 323--332, 2019.
\newblock \href {https://doi.org/10.1145/3323165.3323199}
  {\path{doi:10.1145/3323165.3323199}}.

\bibitem{BiloL20}
Davide Bil{\`{o}} and Pascal Lenzner.
\newblock On the tree conjecture for the network creation game.
\newblock {\em Theory of Computing Systems}, 64(3):422--443, 2020.
\newblock \href {https://doi.org/10.1007/s00224-019-09945-9}
  {\path{doi:10.1007/s00224-019-09945-9}}.

\bibitem{blume2011}
Larry Blume, David~A. Easley, Jon~M. Kleinberg, Robert~D. Kleinberg, and
  {\'{E}}va Tardos.
\newblock Network formation in the presence of contagious risk.
\newblock In {\em {EC} 2011}, pages 1--10, 2011.
\newblock \href {https://doi.org/10.1145/1993574.1993576}
  {\path{doi:10.1145/1993574.1993576}}.

\bibitem{bose2013}
Prosenjit Bose and Michiel Smid.
\newblock On plane geometric spanners: A survey and open problems.
\newblock {\em Computational Geometry}, 46(7):818--830, 2013.
\newblock \href {https://doi.org/10.1016/j.comgeo.2013.04.002}
  {\path{doi:10.1016/j.comgeo.2013.04.002}}.

\bibitem{caballero2009}
Ricardo Caballero and Alp Simsek.
\newblock Fire sales in a model of complexity.
\newblock {\em The Journal of Finance}, 68, 2009.
\newblock \href {https://doi.org/10.2139/ssrn.1496592}
  {\path{doi:10.2139/ssrn.1496592}}.

\bibitem{CAMERINI83}
P.M. Camerini, G.~Galbiati, and F.~Maffioli.
\newblock On the complexity of finding multi-constrained spanning trees.
\newblock {\em Discrete Applied Mathematics}, 5(1):39--50, 1983.
\newblock \href {https://doi.org/10.1016/0166-218X(83)90014-8}
  {\path{doi:10.1016/0166-218X(83)90014-8}}.

\bibitem{Chen19}
Yu~Chen, Shahin Jabbari, Michael~J. Kearns, Sanjeev Khanna, and Jamie
  Morgenstern.
\newblock Network formation under random attack and probabilistic spread.
\newblock In {\em {IJCAI 2019}}, pages 180--186, 2019.
\newblock \href {https://doi.org/10.24963/ijcai.2019/26}
  {\path{doi:10.24963/ijcai.2019/26}}.

\bibitem{CP05}
Jacomo Corbo and David~C. Parkes.
\newblock The price of selfish behavior in bilateral network formation.
\newblock In {\em {PODC} 2005}, pages 99--107, 2005.
\newblock \href {https://doi.org/10.1145/1073814.1073833}
  {\path{doi:10.1145/1073814.1073833}}.

\bibitem{demaine2009}
Erik~D. Demaine, Mohammad~Taghi Hajiaghayi, Hamid Mahini, and Morteza
  Zadimoghaddam.
\newblock The price of anarchy in cooperative network creation games.
\newblock {\em SIGecom Exchanges}, 8(2):2:1--2:20, 2009.
\newblock \href {https://doi.org/10.1145/1980522.1980524}
  {\path{doi:10.1145/1980522.1980524}}.

\bibitem{De07}
Erik~D. Demaine, Mohammad~Taghi Hajiaghayi, Hamid Mahini, and Morteza
  Zadimoghaddam.
\newblock The price of anarchy in network creation games.
\newblock {\em {ACM} Transactions on Algorithms}, 8(2):13, 2012.
\newblock \href {https://doi.org/10.1145/2151171.2151176}
  {\path{doi:10.1145/2151171.2151176}}.

\bibitem{Dippel21}
Jack Dippel and Adrian Vetta.
\newblock An improved bound for the tree conjecture in network creation games.
\newblock {\em CoRR}, abs/2106.05175, 2021.
\newblock \href {https://doi.org/10.48550/arXiv.2106.05175}
  {\path{doi:10.48550/arXiv.2106.05175}}.

\bibitem{dobrynin2001}
Andrey~A. Dobrynin, Roger~C. Entringer, and Ivan Gutman.
\newblock Wiener index of trees: Theory and applications.
\newblock {\em Acta Applicandae Mathematica}, 66:211--249, 2001.
\newblock \href {https://doi.org/10.1023/A:1010767517079}
  {\path{doi:10.1023/A:1010767517079}}.

\bibitem{fabrikant2003}
Alex Fabrikant, Ankur Luthra, Elitza~N. Maneva, Christos~H. Papadimitriou, and
  Scott Shenker.
\newblock On a network creation game.
\newblock In {\em {PODC} 2003}, pages 347--351, 2003.
\newblock \href {https://doi.org/10.1145/872035.872088}
  {\path{doi:10.1145/872035.872088}}.

\bibitem{friedemann2021}
Wilhelm Friedemann, Tobias Friedrich, Hans Gawendowicz, Pascal Lenzner, Anna
  Melnichenko, Jannik Peters, Daniel Stephan, and Michael Vaichenker.
\newblock Efficiency and stability in euclidean network design.
\newblock In {\em {SPAA} 2021}, pages 232--242, 2021.
\newblock \href {https://doi.org/10.1145/3409964.3461807}
  {\path{doi:10.1145/3409964.3461807}}.

\bibitem{Friedrich17}
Tobias Friedrich, Sven Ihde, Christoph Ke{\ss}ler, Pascal Lenzner, Stefan
  Neubert, and David Schumann.
\newblock Efficient best response computation for strategic network formation
  under attack.
\newblock In {\em {SAGT 2017}}, pages 199--211, 2017.
\newblock \href {https://doi.org/10.1007/978-3-319-66700-3\_16}
  {\path{doi:10.1007/978-3-319-66700-3\_16}}.

\bibitem{GALBIATI97}
Giulia Galbiati, Angelo Morzenti, and Francesco Maffioli.
\newblock On the approximability of some maximum spanning tree problems.
\newblock {\em Theoretical Computer Science}, 181(1):107--118, 1997.

\bibitem{Goyal16}
Sanjeev Goyal, Shahin Jabbari, Michael~J. Kearns, Sanjeev Khanna, and Jamie
  Morgenstern.
\newblock Strategic network formation with attack and immunization.
\newblock In {\em {WINE 2016}}, pages 429--443, 2016.
\newblock \href {https://doi.org/10.1007/978-3-662-54110-4\_30}
  {\path{doi:10.1007/978-3-662-54110-4\_30}}.

\bibitem{graham1985}
Ronald~L. Graham and Pavol Hell.
\newblock On the history of the minimum spanning tree problem.
\newblock {\em {IEEE} Annals of the History of Computing}, 7(1):43--57, 1985.
\newblock \href {https://doi.org/10.1109/MAHC.1985.10011}
  {\path{doi:10.1109/MAHC.1985.10011}}.

\bibitem{haldane2011}
Andrew~G Haldane and Robert~M May.
\newblock Systemic risk in banking ecosystems.
\newblock {\em Nature}, 469(7330):351--355, 2011.
\newblock \href {https://doi.org/10.1038/nature09659}
  {\path{doi:10.1038/nature09659}}.

\bibitem{hu1974optimum}
T.~C. Hu.
\newblock Optimum communication spanning trees.
\newblock {\em SIAM Journal on Computing}, 3(3):188--195, 1974.
\newblock \href {https://doi.org/10.1137/0203015} {\path{doi:10.1137/0203015}}.

\bibitem{jackson2010social}
Matthew~O Jackson.
\newblock {\em Social and economic networks}.
\newblock Princeton university Press, 2008.
\newblock \href {https://doi.org/10.2307/j.ctvcm4gh1}
  {\path{doi:10.2307/j.ctvcm4gh1}}.

\bibitem{jackson1996}
Matthew~O. Jackson and Asher Wolinsky.
\newblock A strategic model of social and economic networks.
\newblock {\em Journal of Economic Theory}, 71(1):44--74, 1996.
\newblock \href {https://doi.org/10.1006/jeth.1996.0108}
  {\path{doi:10.1006/jeth.1996.0108}}.

\bibitem{janus2017}
Tomasz Janus and Bart de~Keijzer.
\newblock On strong equilibria and improvement dynamics in network creation
  games.
\newblock In {\em {WINE} 2017}, pages 161--176, 2017.
\newblock \href {https://doi.org/10.1007/978-3-319-71924-5\_12}
  {\path{doi:10.1007/978-3-319-71924-5\_12}}.

\bibitem{johnson1978}
David~S. Johnson, Jan~Karel Lenstra, and A.~H. G.~Rinnooy Kan.
\newblock The complexity of the network design problem.
\newblock {\em Networks}, 8(4):279--285, 1978.
\newblock \href {https://doi.org/10.1002/net.3230080402}
  {\path{doi:10.1002/net.3230080402}}.

\bibitem{karger1997}
David~R. Karger, Rajeev Motwani, and G.~D.~S. Ramkumar.
\newblock On approximating the longest path in a graph.
\newblock {\em Algorithmica}, 18(1):82--98, 1997.
\newblock \href {https://doi.org/10.1007/BF02523689}
  {\path{doi:10.1007/BF02523689}}.

\bibitem{KL13}
Bernd Kawald and Pascal Lenzner.
\newblock On dynamics in selfish network creation.
\newblock In {\em {SPAA 2013}}, pages 83--92, 2013.
\newblock \href {https://doi.org/10.1145/2486159.2486185}
  {\path{doi:10.1145/2486159.2486185}}.

\bibitem{KP99}
Elias Koutsoupias and Christos~H. Papadimitriou.
\newblock Worst-case equilibria.
\newblock In {\em {STACS} 1999}, pages 404--413, 1999.
\newblock \href {https://doi.org/10.1007/3-540-49116-3\_38}
  {\path{doi:10.1007/3-540-49116-3\_38}}.

\bibitem{Lenzner11}
Pascal Lenzner.
\newblock On dynamics in basic network creation games.
\newblock In {\em {SAGT 2011}}, pages 254--265, 2011.
\newblock \href {https://doi.org/10.1007/978-3-642-24829-0\_23}
  {\path{doi:10.1007/978-3-642-24829-0\_23}}.

\bibitem{Lenzner12}
Pascal Lenzner.
\newblock Greedy selfish network creation.
\newblock In {\em {WINE 2012}}, pages 142--155, 2012.
\newblock \href {https://doi.org/10.1007/978-3-642-35311-6\_11}
  {\path{doi:10.1007/978-3-642-35311-6\_11}}.

\bibitem{MW84}
Thomas~L Magnanti and Richard~T Wong.
\newblock Network design and transportation planning: Models and algorithms.
\newblock {\em Transportation science}, 18(1):1--55, 1984.
\newblock \href {https://doi.org/10.1287/trsc.18.1.1}
  {\path{doi:10.1287/trsc.18.1.1}}.

\bibitem{MMM13}
Akaki Mamageishvili, Mat{\'{u}}s Mihal{\'{a}}k, and Dominik M{\"{u}}ller.
\newblock Tree nash equilibria in the network creation game.
\newblock {\em Internet Mathematics}, 11(4-5):472--486, 2015.
\newblock \href {https://doi.org/10.1080/15427951.2015.1016248}
  {\path{doi:10.1080/15427951.2015.1016248}}.

\bibitem{MS10}
Mat\'{u}\v{s} Mihal{\'{a}}k and Jan~Christoph Schlegel.
\newblock The price of anarchy in network creation games is (mostly) constant.
\newblock {\em TCS}, 53(1):53--72, 2013.
\newblock \href {https://doi.org/10.1007/s00224-013-9459-y}
  {\path{doi:10.1007/s00224-013-9459-y}}.

\bibitem{mohar1988}
Bojan Mohar and Toma{\v{z}} Pisanski.
\newblock How to compute the wiener index of a graph.
\newblock {\em Journal of Mathematical Chemistry}, 2(3):267--277, 1988.
\newblock \href {https://doi.org/10.1007/BF01167206}
  {\path{doi:10.1007/BF01167206}}.

\bibitem{monderer1996}
Dov Monderer and Lloyd~S. Shapley.
\newblock Potential games.
\newblock {\em Games and Economic Behavior}, 14(1):124--143, 1996.
\newblock \href {https://doi.org/10.1006/game.1996.0044}
  {\path{doi:10.1006/game.1996.0044}}.

\bibitem{narasimhan2007geometric}
Giri Narasimhan and Michiel Smid.
\newblock {\em Geometric spanner networks}.
\newblock Cambridge University Press, 2007.
\newblock \href {https://doi.org/10.1017/CBO9780511546884}
  {\path{doi:10.1017/CBO9780511546884}}.

\bibitem{roughgarden2002bad}
Tim Roughgarden and {\'E}va Tardos.
\newblock How bad is selfish routing?
\newblock {\em Journal of the ACM (JACM)}, 49(2):236--259, 2002.
\newblock \href {https://doi.org/10.1145/506147.506153}
  {\path{doi:10.1145/506147.506153}}.

\bibitem{wiener1947}
Hans Wiener.
\newblock Structural determination of paraffin boiling points.
\newblock {\em Journal of the American Chemical Society}, 69 1:17--20, 1947.
\newblock \href {https://doi.org/10.1021/ja01193a005}
  {\path{doi:10.1021/ja01193a005}}.

\bibitem{xu2014}
Kexiang Xu, Muhuo Liu, Kinkar Das, Ivan Gutman, and Boris Furtula.
\newblock A survey on graphs extremal with respect to distance–based
  topological indices.
\newblock {\em Match (Mulheim an der Ruhr, Germany)}, 71:461--508, 2014.

\bibitem{soltes1991}
Ľubomír Šoltés.
\newblock Transmission in graphs: A bound and vertex removing.
\newblock {\em Mathematica Slovaca}, 41(1):11--16, 1991.
\newblock URL: \url{http://hdl.handle.net/10338.dmlcz/132387}.

\end{thebibliography}


\end{document}